\def\bR{\mathbb{R}}
\def\bT{\mathbb{T}}
\def\bN{\mathbb{N}}
\def\bZ{\mathbb{Z}}
\def\cN{\mathcal{N}}
\def\cQ{\mathcal{Q}}
\def\cF{\mathcal{F}}
\def\cB{\mathcal{B}}
\def\cE{\mathcal{E}}
\def\cH{\mathcal{H}}
\def\eps{\varepsilon}
\def\ph{\varphi}
\def\cL{\mathcal{L}}
\def\cS{\mathcal{S}}
\def\cD{\mathcal{D}}
\def\cG{\mathcal{G}}
\def\tbf{\textbf }
\def\wh{\widehat}
\def\be{\begin{equation}}
\def\ee{\end{equation}}
\def\wh{\widehat}
\def \pn { \varphi_\text{GP} }
\def\eps{\varepsilon}
\def\ph{\varphi}
\def\wh{\widehat}
\def\wch{\check}
\newtheorem{theorem}{Theorem}[section]
\newtheorem{prop}[theorem]{Proposition}
\newtheorem{lemma}[theorem]{Lemma}
\newtheorem{proposition}[theorem]{Proposition}
\newtheorem{definition}[theorem]{Definition}
\newtheorem{corollary}[theorem]{Corollary}
\newtheorem{assumption}[theorem]{Assumption}
\theoremstyle{definition} 						%%% no emph style
\newtheorem{remark}[theorem]{Remark}			%%% in remarks
\numberwithin{equation}{section}
\begin{document}
\title{Exponential Control of Excitations for Trapped BEC in the Gross-Pitaevskii Regime}
\author{Nils Behrmann\thanks{Institute for Applied Mathematics, University of Bonn, Endenicher Allee 60, 53115 Bonn, Germany}  \and  Christian Brennecke\footnotemark[1] \and Simone Rademacher\thanks{Department of Mathematics, LMU Munich, Theresienstr. 39, 80333 Munich,
Germany}}

\maketitle

\begin{abstract}
We consider trapped Bose gases in three dimensions in the Gross-Pitaevskii regime whose low energy states are well known to exhibit Bose-Einstein condensation. That is, the majority of the particles occupies the same condensate state. We prove exponential control of the number of particles orthogonal to the condensate state, generalizing recent results from \cite{NR} for translation invariant systems. 
\end{abstract}

%%%%%%%%%%%%%%%%%%%%%%%%%%%%%%%%%%%%
%%%%%%%%%%%%%%%%%%%%%%%%%%%%%%%%%%%%
%%%%%%%%%%%%%%%%%%%%%%%%%%%%%%%%%%%%
\section{Introduction and Main Result} 
\label{sec:intro}

\subsection{Introduction}

Bose gases stand out for a special property: at low temperature, they perform a phase transition in which the majority of the Bose gas' particles condense into the same quantum state, called Bose-Einstein condensate. Since its theoretical prediction by Bose and Einstein \cite{Bose,Einstein} and its first experimental observations \cite{WC,K}, it is of great interest to understand this phenomenon mathematically. The goal of the present paper is to contribute to this understanding and our main result is to prove exponential control of the number of particles that are not in the Bose-Einstein condensate, the so called orthogonal excitations, in the Gross-Pitaevskii limit.  Our result generalizes earlier results from \cite{NR} for translation invariant systems on the unit torus to physically more realistic systems of trapped Bose gases in $\mathbb{R}^3$, modelling experimental setups as in \cite{WC,K}.  
 \\
 
The state of $N$ bosons in $\bR^3$ is described by a normalized vector in $L_s^2( \mathbb{R}^{3N})$, the symmetric wave functions in $L^2( \mathbb{R}^{3N})$. We describe the energy by the Hamiltonian 
		\begin{align}
		\label{def:ham}	H_N = \sum\limits_{j=1}^N \big(-\Delta_{x_j} +V_{\text{ext}}(x_j) \big)+\sum\limits_{1\leq i < j\leq N} N^2V(N(x_i-x_j))  . 
		\end{align}
Its characteristic is that the two-body interaction potential $N^2V(N \cdot)$ scales with the total number of particles $N$ of the system and the scaling is chosen such that its scattering length $\mathfrak{a}$ is of order $N^{-1}$: The potential $V$'s scattering length $\mathfrak{a}$ is defined by 
\begin{align}
\label{def:a0}
8 \pi \mathfrak{a} = \int_{\mathbb{R}^3} dx\,V(x) f(x),
\end{align} 
where $f$ denotes the solution to the zero-energy scattering equation 
\begin{align}
\label{eq:scattering}
\big(- 2 \Delta +  V  \big) f  =0  
\end{align}
with boundary condition $f(x) \rightarrow 1$ as $\vert x \vert \rightarrow \infty$. Rescaling \eqref{def:a0},  eq. \eqref{eq:scattering} implies that the scattering length of the scaled potential $N^2 V(N \cdot)$ is $\mathfrak{a}N^{-1}$. 

As shown in \cite{LSY, LS1}, systems as in \eqref{def:ham} admit an effective description through the Gross-Pitaevskii theory: in the limit $N\to\infty$, the leading order contribution to the ground state energy $E_N$ of $H_N$ is described by the mimimum 
		\[  \lim_{N\to\infty} N^{-1}E_N = \inf_{\|\ph\|=1} \mathcal{E}_{\text{GP}} ( \ph ) \]
of the Gross-Pitaevskii energy functional $ \mathcal{E}_{\text{GP}}$, defined by
		\begin{align} \label{def:EGP}
		\mathcal{E}_{\text{GP}} ( \ph ) =   \int_{\mathbb{R}^3} dx \, \big( \vert \nabla \ph  \vert^2 + V_{\text{ext}} \vert \ph  \vert^2 + 4 \pi \mathfrak{a} \vert \ph  \vert^4 \big). 
		\end{align}
Moreover, the normalized ground state $\psi_N$ of $H_N$ exhibits complete BEC into the unique normalized, strictly positive minimizer $\varphi_{\text{GP}}$ of $\mathcal{E}_{\text{GP}} $. Mathematically, this means that the number of particles orthogonal to the condensate, counted by the operator
	\begin{align}
	\label{def:N}
	\cN_{\bot \ph_\text{GP} } = \sum_{i=1}^N Q_i, \quad \text{where} \quad Q = 1- \vert \varphi_{\text{GP}} \rangle \langle \varphi_{\text{GP}} \vert,
	\end{align}
is negligible with regards to the total number $N$ of particles, in the limit $N\to\infty$. Here,  $Q_i = {\bf 1} \cdots  \otimes {\bf 1} \otimes Q \otimes  {\bf 1} \cdots  {\bf 1}$ is the operator that acts on the $i$-th particle as the operator $Q$ and as identity on the remaining particles. The main result of \cite{LS1} is that
		\be \label{eq:BEC}  \lim_{N\to\infty} N^{-1} \langle \psi_N, \cN_{\bot \ph_\text{GP} } \psi_N\rangle = 0.  \ee
This was generalized in \cite{LS2} and later, with different tools, in \cite{NRS} to approximate ground states. More recent work \cite{NNRT,BSS1} (see also \cite[Appendix D]{BK}) improved \eqref{eq:BEC} quantitatively by showing that $ \langle \psi_N, \cN_{\bot \ph_\text{GP} } \psi_N\rangle = O(1)$, which is optimal. In fact, it turns out that 
		\be \label{eq:BECpoly}  \langle \psi_N, \cN_{\bot \ph_\text{GP} }^k \psi_N\rangle = O(1)\ee 
for every fixed $k\in\bN$. This was proved recently in \cite{BSS2} in the context of the derivation of Bogoliubov's effective theory, generalizing previous work for translation invariant systems \cite{BBCS1, BBCS2, BBCS3, BBCS4, ABS}; see also \cite{HST, NT,Br}. 

Viewing $\cN_{\bot \ph_\text{GP} }$ as a random variable with regards to the spectral measure induced by $\psi_N$, the moment bound \eqref{eq:BECpoly} implies that the probability of finding more than $n>0$ excited particles in the ground state decays at least with a rate $1/n^k$, as $n\to\infty$ (see also \tbf{Remarks 1.5} \& \tbf{1.6} below). Our main result improves this and shows that this probability decays exponentially in $n$, based on an exponential moment bound on $\cN_{\bot \ph_\text{GP} }$. Such bounds are relevant for the analysis of the large deviations of $\cN_{\bot \ph_\text{GP} }$ (viewed as the sum of the $N$ identically distributed, weakly dependent random variables $Q_i$) recently studied in other settings in \cite{ NR, R-ldp, R}. 

\subsection{Main Result}
Our main result is proved within the framework of \cite{BSS1, BSS2}. In particular, we make use of the main result of \cite{BSS1} and this requires us to set up the following assumptions: 
		\begin{assumption}  \label{ass:V}
		We assume that $V, V_{\emph{ext}}: \mathbb{R}^3 \rightarrow \mathbb{R}$ satisfy:
		\begin{enumerate}
		\item[(i)] $V \in L^3( \mathbb{R}^3)$ is radial, compactly supported and $V(x) \geq 0$ for a.e. $x \in \mathbb{R}^3$,
		\item[(ii)] $V_{\emph{ext}} \in C^1( \mathbb{R}^3)$ is such that $V_{\emph{ext}} (x) \rightarrow \infty$ as $\vert x \vert \rightarrow \infty$, $\nabla V_{\emph{ext}}$ has at most exponential growth as $ |  x  | \rightarrow \infty$ and there exists a constant $ C>0$ such that $ V_{\emph{ext}} (x+y) \leq C \big( V_{\emph{ext}} (x) + C\big) \big( V_{\emph{ext}} (y) + C \big) $ for every $x,y\in \bR^3$. 
		\end{enumerate}
		\end{assumption} 
Based on Assumption \ref{ass:V}, it was proved in \cite[Theorem 1.1]{BSS1} that for some $C>0$ 
		\be \label{eq:coerc} H_N \geq N  \mathcal{E}_{\text{GP}}(\ph_\text{GP}) + C^{-1}\cN_{\bot \ph_\text{GP} } - C.  \ee
The bound \eqref{eq:coerc} is an essential ingredient in the proof of our following main result. 
		\begin{theorem} \label{thm:main}
		Let $V, V_{\emph{ext}}$ satisfy Assumption \ref{ass:V}, let $\zeta>0$ and denote by $E_N$ the ground state energy of $H_N$, defined in \eqref{def:ham}. Then, there exist a constant $C>0$ such that for all normalized vectors $\psi_N = {\bf 1}_{ (-\infty, E_N + \zeta] } (H_N) \psi_N $, $\| \psi_N \| =1$, in the spectral subspace of $H_N$ with energies below $E_N + \zeta$ and for all $\kappa>0$ sufficiently small, we have 
		\begin{align}\label{eq:exp}
		\langle \psi_N, \exp ( \kappa \cN_{\bot \ph_\emph{GP}}  ) \psi_N \rangle \leq C. \end{align}
		\end{theorem}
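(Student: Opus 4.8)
The plan is to upgrade the linear coercivity bound \eqref{eq:coerc} to control of exponential moments via a Neumann-series / commutator expansion argument, closely following the strategy of \cite{NR} but adapting it to the trapped setting. The starting observation is that \eqref{eq:coerc} gives, for $\psi_N$ in the low-energy spectral subspace, the a priori bound $\langle \psi_N, \cN_{\bot \ph_\text{GP}} \psi_N\rangle \leq C(\zeta+1) =: C_0$, and moreover, after localizing in the energy, the same argument applied to $H_N$ restricted to higher energy windows should yield that the tail of the spectral measure of $\cN_{\bot \ph_\text{GP}}$ is controlled. The key structural input I would exploit is the \emph{localization estimate}: if $f$ is a smooth, compactly supported cutoff and $f_M(\cN) = f(\cN/M)$ localizes on $\{\cN \leq 2M\}$, then for $\psi_N = {\bf 1}_{(-\infty, E_N+\zeta]}(H_N)\psi_N$ the localized vector $f_M(\cN)\psi_N / \|f_M(\cN)\psi_N\|$ still has energy below roughly $E_N + \zeta + CM^{-1}$, because the commutator $[H_N, f_M(\cN)]$ is small; this is where the detailed structure of the excitation Hamiltonian from \cite{BSS1, BSS2} enters, since one needs that the cubic and quartic off-diagonal terms that fail to commute with $\cN$ contribute only lower-order errors.

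First I would make precise the commutator bound: write $H_N$ in terms of the Bogoliubov-type excitation Hamiltonian acting on the Fock space of excitations over $\ph_\text{GP}$ and isolate the terms not commuting with $\cN_{\bot \ph_\text{GP}}$ (the number-changing cubic and quartic pieces). Using the IMS-type identity $f_M(\cN) H_N f_M(\cN) = \tfrac12 (f_M^2 H_N + H_N f_M^2) - \tfrac12 [f_M,[f_M,H_N]]$ and bounding the double commutator by $C M^{-2}(\cN+1)$ in form sense (which requires the cubic/quartic terms to be relatively form-bounded by $(\cN+1)$ times a controlled operator, a consequence of the estimates underlying \eqref{eq:coerc}), one gets that on the range of $f_M(\cN)$ one still has $H_N \geq N\mathcal E_\text{GP}(\ph_\text{GP}) + C^{-1}\cN_{\bot \ph_\text{GP}} - C - CM^{-2}\langle \cN\rangle$. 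The second step is the iteration: fix a dyadic scale, and show that $p_n := \langle \psi_N, {\bf 1}(\cN \in [2^{n}, 2^{n+1}))\psi_N\rangle$ satisfies a recursive bound $p_n \leq e^{-c 2^n} + (\text{error feeding from neighboring scales})$, by testing \eqref{eq:coerc} (in its localized, energy-shifted form) against the vector ${\bf 1}(\cN \geq 2^n)\psi_N$: since this vector has energy at most $E_N + \zeta + o(1)$ but the coercivity forces its $\cN$-expectation to be $\geq 2^n \|{\bf 1}(\cN\geq 2^n)\psi_N\|^2$, we get $2^n \|{\bf 1}(\cN \geq 2^n)\psi_N\|^2 \leq C(\zeta + 1)$, which already gives polynomial decay; to get exponential decay one bootstraps, using that the \emph{localized} energy estimate improves by a factor proportional to the number of excitations already present (the coefficient $C^{-1}$ in front of $\cN$ is uniform, so restricting to $\cN \geq m$ the same inequality reads $E_N + \zeta \geq N\mathcal E_\text{GP} + C^{-1} m$, forcing the whole subspace $\{\cN \geq C(\zeta+1)\}$ to have exponentially small weight once one feeds the localized a priori bound back in).

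Concretely, to extract the exponential bound I would run the following: let $g_n = \|{\bf 1}_{[n,\infty)}(\cN)\psi_N\|^2$. Applying the localized version of \eqref{eq:coerc} to $\phi_n := {\bf 1}_{[n,\infty)}(\cN)\psi_N/\|\cdot\|$ gives $\langle \phi_n, H_N \phi_n\rangle \geq N\mathcal E_\text{GP}(\ph_\text{GP}) + C^{-1} n - C$; on the other hand $\langle \phi_n, H_N \phi_n \rangle \leq E_N + \zeta + \epsilon_n$ where $\epsilon_n = C M^{-2}$-type error coming from the fact that ${\bf 1}_{[n,\infty)}$ is not smooth — one replaces the sharp indicator by a smoothed version over a window of width $\theta n$, producing an error $\lesssim (\theta n)^{-2}\langle \cN\rangle_{\phi_n} \lesssim (\theta^2 n)^{-1} \cdot$(bounded), plus the weight leaking into the transition window $[(1-\theta)n, n]$, which is $g_{(1-\theta)n} - g_n$. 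This yields a one-step contraction $g_n \leq \alpha\, g_{(1-\theta)n}$ with $\alpha < 1$ for $n$ beyond a threshold $n_0 = n_0(\zeta, C)$, and iterating across the geometrically-spaced scales $n, (1-\theta)^{-1}n, \dots$ down to $n_0$ yields $g_n \leq C e^{-\kappa n}$ for a suitable small $\kappa > 0$. Summation by parts, $\langle \psi_N, e^{\kappa' \cN}\psi_N\rangle = 1 + (e^{\kappa'}-1)\sum_{n\geq 1} e^{\kappa' (n-1)} g_n$, then gives \eqref{eq:exp} for $\kappa' < \kappa$.

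The main obstacle I expect is the commutator/localization step: proving that restricting to $\{\cN \leq 2M\}$ costs only a small energy error requires good control of the terms in $H_N$ (equivalently in the excitation Hamiltonian of \cite{BSS1, BSS2}) that change the number of excitations, and in the trapped case these estimates are more delicate than on the torus because of the inhomogeneous external potential $V_\text{ext}$ and the fact that $\ph_\text{GP}$ is not constant — one must check that the relevant cubic and quartic operators remain relatively bounded by $(\cN_{\bot \ph_\text{GP}}+1)$ uniformly, and that the kinetic and potential energy localize cleanly. A secondary technical point is ensuring all error terms are uniform over the whole spectral subspace ${\bf 1}_{(-\infty, E_N+\zeta]}(H_N)$ and not just for the ground state, which is why one works with the spectral projection throughout rather than with a single eigenvector; here the energy bound $\langle\psi_N, H_N\psi_N\rangle \le E_N + \zeta$ is the only property of $\psi_N$ that is used, and the argument is otherwise insensitive to which vector in the subspace one picks.
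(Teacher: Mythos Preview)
Your strategy has a genuine gap at the localization step. You claim that the IMS double commutator $[f_M,[f_M,H_N]]$ (equivalently, the number-non-conserving part of the excitation Hamiltonian $\cL_N$) is form-bounded by $CM^{-2}(\cN+1)$. In the Gross--Pitaevskii regime this is false. The pairing contribution
\[
\frac12\int dxdy\, N^3 V(N(x-y))\,\pn(x)\pn(y)\,\big(b_x^* b_y^* + b_x b_y\big)
\]
in $\cL_N^{(2)}$ changes $\cN$ by $\pm 2$ but is \emph{not} relatively bounded by $\cN+1$: a Cauchy--Schwarz estimate only gives $|\langle\xi,\text{pairing}\,\xi\rangle|\leq \delta\langle\xi,\cV_N\xi\rangle + C\delta^{-1}N$, so the localization error is of order $M^{-2}N$, which is useless unless $M\gg N^{1/2}$. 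The cubic term $\cL_N^{(3)}$ has an analogous $N^{1/2}$ obstruction. This is exactly the reason the paper cannot run the commutator argument directly on $H_N$ and instead passes to the \emph{renormalized} Hamiltonian $\cG_N=e^{-B(\eta)}U_N H_N U_N^* e^{B(\eta)}$: the Bogoliubov rotation $e^{B(\eta)}$ cancels the large pairing term against pieces of the kinetic and interaction energies (via the zero-energy scattering equation), and only then does one obtain the needed commutator bound, in the form $|\langle\xi,e^{\kappa\cN/2}[\cG_N,e^{\kappa\cN/2}]\xi\rangle|\leq C\kappa\langle\xi,(\cH_N+\cN)e^{\kappa\cN}\xi\rangle$. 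Your proposal never introduces this renormalization; invoking ``the estimates underlying \eqref{eq:coerc}'' does not suffice, since those estimates already rely on $e^{B(\eta)}$.

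Even granting the localization error, your iteration has a second difficulty: to bound $\langle\phi_n,H_N\phi_n\rangle$ from above you need something like $\langle f_n^2\psi_N,(H_N-E_N)\psi_N\rangle\leq \zeta\,g_n$, but Cauchy--Schwarz only gives $\zeta\,g_n^{1/2}$, introducing a factor $g_n^{-1/2}$ on the right that spoils the contraction. The paper sidesteps this by working with the operator $e^{\kappa\cN/2}$ directly and using that $\frac{(\cG_N-E_N)\xi}{\|(\cG_N-E_N)\xi\|}$ stays in the low-energy set $\cS_\zeta$, which lets one take a supremum over $\cS_\zeta$ on both sides and close a single inequality of the form
\[
\sup_{\xi\in\cS_\zeta}\langle\xi,(\cH_N+\cN)e^{\kappa\cN}\xi\rangle \;\leq\; \frac{C+C\zeta^{1/2}}{1-C\kappa}\,\sup_{\xi\in\cS_\zeta}\langle\xi,e^{\kappa\cN}\xi\rangle,
\]
after which integrating $\partial_\kappa e^{\kappa\cN}=\cN e^{\kappa\cN}$ finishes the proof for $\kappa$ small. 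No dyadic iteration or IMS decomposition is needed.
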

		
\begin{remark}[Moment and Exponential Moment Bounds] 
The proof of \eqref{eq:exp} is based on a refinement of a commutator argument from \cite{BBCS2, BBCS3, BSS2} that implies uniform moment control \eqref{eq:BECpoly} on low-energy spectral subspaces of $H_N$. One possibility to proceed here is to refine the moment bounds \eqref{eq:BECpoly} from \cite{BSS2} to bounds of the form
		 \[  \langle \psi_N, \cN_{\bot \ph_\text{GP} }^k \psi_N\rangle \leq C_\zeta^k k!\]
for some constant $C_\zeta$ that is independent of $N$ and $\psi_N$ (cf.\ also \cite[Remark 1.1]{NR}). This was recently implemented in detail in the master thesis \cite{Behr}. In this paper, we present a simplified and streamlined version of the method. Our proof combines the analysis of \cite{BSS1, BSS2} with a similar argument as in \cite{NR} which proves bounds like \eqref{eq:exp} for translation invariant systems in the Gross-Pitaevskii limit. Compared to \cite{NR}, whose bounds are proved for normalized eigenfunctions $\psi_N$ of $H_N$ with low energies $E_N + \zeta$ for some $\zeta >0$, our bounds are proved directly in the spectral subspace of $H_N$ with energies up to $E_N + \zeta$. To illustrate our main argument in a simple context, we briefly outline the proof of an analogue of Theorem \ref{thm:main} for translation invariant mean field systems in Section \ref{sec:outline} (for a direct comparison of our method with that of \cite{NR}, cf.\ \cite[Section 1.2]{NR}). This gives an alternative proof of a comparable result proved previously in \cite{M} (see also \cite{MP} for inhomogeneous mean field systems).

%For translation invariant systems in the Gross-Pitaevskii limit, bounds as in \eqref{eq:exp} were recently proved in \cite{NR}. The bounds in \cite{NR} are valid for normalized eigenfunctions of the Hamiltonian with low energies $E_N + \zeta$ for some $\zeta >0$. Theorem \ref{thm:main} generalizes this to the inhomogeneous setting and provides exponential moment control of $\cN_{\bot \ph_\text{GP} }$ that is uniform on the spectral subspace of $H_N$ with energies below $E_N + \zeta$. 
\end{remark}

\begin{remark}[Moment generating function] 
For the translation invariant setting on the unit torus, the exponential bounds in \cite{NR} of the form \eqref{eq:exp} were further improved: recently, \cite{R} established an explicit expression for the moment generating function of the number of excitations $\cN_{\bot \ph_\text{GP} }$ in the large particle limit. More precisely, \cite{R} shows that there exists $\lambda_0 > 0$ such that 
\begin{align}
\label{eq:gf1}
\lim_{N \rightarrow \infty} \langle \psi_N , \exp( \lambda \cN_{\bot \ph_\text{GP} }) \psi_N \rangle = e^{\Lambda ( \lambda)} 
\end{align}
for all $\vert \lambda \vert \leq \lambda_0$. The convex rate function  $\Lambda : (-\lambda_0, \lambda_0) \rightarrow \mathbb{R}$ is explicitly  given in terms of the scattering length $\mathfrak{a}$ and moreover, serves as a moment generating function, i.e. 
\begin{align}
\label{eq:gf2}
\lim_{N \rightarrow \infty} \langle \psi_N , \cN_{\bot \ph_\text{GP} }^k \psi_N \rangle  =  \frac{d^k}{d\lambda^k} e^{\Lambda ( \lambda )} \bigg\vert_{\lambda =0} \; . 
\end{align} 
Thus, asymptotically any moment of $\cN_{\bot \ph_\text{GP} }$ can be computed through taking derivatives of $\Lambda$, in particular recovering back earlier results on asymptotic formulas for the first \cite{BBCS3} and, respectively, second moment \cite{NR}.

The arguments in \cite{R} build on a rigorous version of Bogoliubov's theory \cite{BBCS3} that is technically easier to formulate and verify on the unit torus compared to the trapped setting considered in this paper. Based on \cite{NT,BSS2}, however, we expect similar results as \eqref{eq:gf1}, \eqref{eq:gf2} in the trapped setting, too.

\end{remark}

\begin{remark}[Probabilistic interpretation] From the first principles of quantum mechanics, a normalized wave function $\psi_N \in L_s^2( \mathbb{R}^{3N})$ defines a probability distribution: similarly as in \eqref{def:N}, we define for a self-adjoint one-particle operator $O$ on $L^2( \mathbb{R}^3)$ the $N$-particle operator $O_i$ as the operator acting as $O$ and the $i$-th particle and as identity on the remaining $N-1$ particles. Then by the spectral calculus, the operator $O_i$ gives rise to a random variable with law 
\[
\mathbb{P}_{\psi_N} \big[ O_i \in A \big] = \langle \psi_N , {\bf 1}_{A} ( O_i ) \psi_N \rangle \quad \text{for every} \quad A \in \cB(\mathbb{R}) \; . 
\]
Within the framework of this probabilistic approach, first studied in the dynamical setting in \cite{BKS}, central limit theorems for bosons in Gross-Pitaevskii regime were proven for the equilibrium \cite{RS,PRV} and the dynamics \cite{COS}.  

In the probabilistic interpretation, Theorem \ref{thm:main} implies an upper bound on the large deviations for the random variable $\mathcal{N}_{\bot\ph_\text{GP}}  =\sum_{i=1}^N Q_i $: Markov's inequality and Theorem \ref{thm:main} show that there exists $\kappa >0$ such that for everty $0 \leq n \leq N$ and $\psi_N = {\bf 1}_{(-\infty, E_N + \zeta]} \psi_N $
\begin{align}
\mathbb{P}_{\psi_N} \big[ \cN_{\bot \ph_\text{GP} } \geq n \big] \leq e^{- \kappa n} \; . \label{eq:ld}
\end{align}
In other words , Theorem \ref{thm:main} shows that the probability of finding $n$ particles outside the condensate decays exponentially in $n$. 
\end{remark}

\begin{remark}[Large deviations]
For the Bose gas on the unit torus, the decay rate of \eqref{eq:ld} can be further refined by using the asymptotic formula \eqref{eq:gf1} for the moment generating function. Based on \cite{R}, it follows that
\begin{align}
\liminf_{N \rightarrow \infty} \log \mathbb{P}_{\psi_N} \big[ \cN_{\bot \ph_\text{GP} }  \geq n \big] \leq \inf_{0 \leq \lambda \leq \lambda_0} \big[ - \lambda  n   + \Lambda ( \lambda) \big]  \;, \label{eq:ldp1}
\end{align}
and expanding the r.h.s.\ for small $\lambda >0$ shows that 
\begin{align}
\label{eq:ldp2}
\liminf_{N \rightarrow \infty} \log \mathbb{P}_{\psi_N} \big[ \cN_{\bot \ph_\text{GP} }  \geq n \big] \leq \inf_{0 \leq \lambda \leq \lambda_0} \big[ - \lambda (n  - \mu ) + \frac{\sigma^2\lambda^2}{2}  + O( \lambda^3) \big]  \; . 
\end{align}
where $\mu = \lim_{N \rightarrow \infty} \langle \psi_N, \cN_{\bot \ph_\text{GP} } \psi_N \rangle$ and $\sigma^2 = \lim_{N \rightarrow \infty} \langle \psi_N, \cN_{\bot \ph_\text{GP} }^2 \psi_N \rangle $ denote the asymptotic mean and, respectively, variance of $\cN_{\bot \ph_\text{GP} }$. In other words, the characterization \eqref{eq:gf1} and the convexity of the rate function $\Lambda$ imply a large deviation bound similar to Hoeffding's inequality from probability. Let us also remark that \eqref{eq:ldp2} recovers similar large deviation estimates derived earlier in \cite{NR}. We expect that bounds analogous to \eqref{eq:ldp1} hold true in the trapped setting as well, but we leave this open for possible future work. 
\end{remark}

%We remark that by definition \eqref{def:N} the quantum depletion is a sum $\mathcal{N}_+ = \sum_{i=1}^N Q_i$ of identically distributed random variables $Q_i$ (defined by \eqref{def:P}) built w.r.t to the one-particle operator $Q$ and whose large deviations are characterized by \eqref{eq:ldp1}. In this sense, our results embeds in a series of results on a probabilistic approach to Bose-Einstein condensates that has been introduced first in \cite{BKS} based on \eqref{def:P} for general one-particle operators $O$. 

%In fact, for Bose-Einstein condensates on the unit torus and one-particle operators $O \not= QOQ$, i.e. operators that are not orthogonal to the condensate, the associated correlated random variables and empirical measures were proven to satisfy law of large numbers and central limit theorems \cite{PRV,RS}. However, quadratic lower and upper bounds for the regime of large deviation are available for weak mean-field interactions only \cite{R-ldp}. The validity of large deviation estimates for one-particle operators $O \not= QOQ$ in the Gross-Pitaevskii regime, however, even on the unit torus, remains a challenging open problem. 

%%%%%%%%%%%%%%%%%%%%%%%%%%%%%%%%
%%%%%%%%%%%%%%%%%%%%%%%%%%%%%%%%
%%%%%%%%%%%%%%%%%%%%%%%%%%%%%%%%
\subsection{Structure of the paper}
The rest of this paper is structured as follows. In Section \ref{sec:outline} we illustrate the main strategy of our proof of Theorem \ref{thm:main} in the simplified mean-field scaling. In Section \ref{sec:exc} we explain how to extend these ideas from the mean-field regime to the more challenging Gross-Pitaevskii regime and we prove Theorem \ref{thm:main} based on our technical key result. The proof of this latter result is carried out in the last Section \ref{sec:GN}. At several steps, we make use of various properties of the Gross-Pitaevskii functional and its minimizer which we recall for completeness in Appendix \ref{app:GP}.

%%%%%%%%%%%%%%%%%%%%%%%%%%%%%%%%%%%%
%%%%%%%%%%%%%%%%%%%%%%%%%%%%%%%%%%%%
%%%%%%%%%%%%%%%%%%%%%%%%%%%%%%%%%%%%
\section{Exponential Moment Bound for Mean Field Bosons} \label{sec:outline}
In this section, we prove an analogue of Theorem \ref{thm:main} in the simplified setting of translation invariant mean field systems. In this setting, it is straightforward to obtain exponential control of the orthogonal excitations around the condensate based on a simple commutator argument. Although the proof of Theorem \ref{thm:main} is technically substantially more difficult than the proof of Theorem \ref{thm:mf} below, the overall strategy is conceptually similar. We therefore include the present section for illustrative purposes. 

Consider $N$ bosons moving in $\bT^3$ whose states are described by normalized wave functions in $L_s^2(\bT^{3N})$ and whose energies are described by the Hamilton operator
		\be \label{def:HNmf}  H_N^{\text {mf}} =  \sum_{i=1}^N -\Delta_{x_i} + \frac1N \sum_{1\leq i<j\leq N} v(x_i-x_j).  \ee
Following \cite{Sei}, we assume that $ v\in C(\bT^3)$ has a positive Fourier transform, that is $\widehat v(p) = \int_{\bT^3} dx\, e^{-ipx} v(x)\geq 0$ for every $p\in\Lambda^* = 2\pi\bZ^3$. In terms of the standard bosonic creation and annihilation operators $ a^*_p, a_p$, for $p \in\Lambda^*$, that create and, respectively, annihilate a plane wave $ x\mapsto \ph_p(x) = e^{ipx} \in L^2(\bT^3)$ of momentum $p\in\Lambda^*$ (see \eqref{eq:ccr} in Section \ref{sec:exc} for more details) , we have that
		\[ H_N^{\text {mf}} = \sum_{p\in\Lambda^*} |p|^2 a^*_p a_p + \frac1N \sum_{p,q,r\in\Lambda^*} \wh v(r) a^*_{p+r} a^*_{q-r} a_pa_q.\] 
As explained in detail in \cite{Sei}, $H_N^{\text {mf}}$ is bounded from below by 
		\be\label{eq:coercmf} H_N^{\text {mf}} \geq \frac{N}2 \widehat v(0) + \sum_{p\in\Lambda^*} |p|^2 a^*_p a_p - C \geq  \frac{N}2 \widehat v(0) + \cN_+ - C,  \ee
where we denote by $ \cN_+  = \sum_{p\in\Lambda^*_+}   a^*_p a_p$, $\Lambda^*_+= \Lambda^*\setminus\{0\} $, the number of orthogonal excitations around the constant condensate wave function $\ph_0 = 1_{|\bT^3} \in L^2(\bT^3)$. Testing $ H_N^{\text {mf}}$ with regards to $ \ph_0^{\otimes N}$, \eqref{eq:coercmf} implies that $ E_N^{\text{mf}} = \inf \sigma\big( H_N^{\text {mf}}\big)$ is equal to $ E_N^{\text{mf}} = \frac{N}2 \widehat v(0) + O(1)$ and that every normalized, approximate ground state $ \psi_N$ that satisfies 
		\[ \langle \psi_N,  H_N^{\text {mf}} \psi_N\rangle \leq E_N^{\text{mf}} + \zeta,\]
exhibits complete BEC into $\ph_0$ with bounded number of excitations $\cN_+$, that is
		\[    \langle \psi_N, \cN_+\psi_N\rangle \leq   \langle \psi_N, ( H_N^{\text {mf}}- E_N^{\text {mf}} ) \psi_N\rangle+C  \leq  C+ \zeta.    \]
The following theorem improves the last bound by showing that not only the first, but also an exponential moment of $\cN_+$ is of order one, uniformly on spectral subspaces of low energy. This is an analogue of Theorem \ref{thm:main} in the simplified mean field setting. 
\begin{theorem} \label{thm:mf}
Let $\zeta>0$ and denote by $E_N^{\emph{mf}}$ the ground state energy of $H_N^{\emph{mf}}$, defined in \eqref{def:HNmf}. Then, there exists a constant $C>0$ such that for all normalized vectors $\psi_N = {\bf 1}_{ (-\infty, E_N^{\emph{mf}} + \zeta] } (H_N^{\emph{mf}}) \psi_N $, $\| \psi_N \| =1$ and every $\kappa >0$ sufficiently small, we have that
		\[ \langle \psi_N, e^{\kappa \mathcal{N}_+} \psi_N \rangle \leq C .  \]
\end{theorem}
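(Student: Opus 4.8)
The plan is to prove the equivalent statement that
\[ G_N(\kappa):=\sup\big\{\langle\psi,e^{\kappa\cN_+}\psi\rangle:\ \|\psi\|=1,\ \psi={\bf 1}_{(-\infty,E_N^{\mathrm{mf}}+\zeta]}(H_N^{\mathrm{mf}})\psi\big\} \]
is bounded uniformly in $N$ for $\kappa$ small. For fixed $N$ one has $G_N(\lambda)\le e^{\lambda N}<\infty$, and $\lambda\mapsto G_N(\lambda)$ is non-decreasing and convex (a supremum of convex functions), hence continuous, with $G_N(0)=1$. Fix $\psi$ in the spectral subspace, put $F(\lambda)=\langle\psi,e^{\lambda\cN_+}\psi\rangle\ge 1$ and $\psi_\lambda=e^{\lambda\cN_+/2}\psi$, so that $\|\psi_\lambda\|^2=F(\lambda)$ and $F'(\lambda)=\langle\psi_\lambda,\cN_+\psi_\lambda\rangle$. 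By the coercivity bound \eqref{eq:coercmf} we have $\cN_+\le H_N^{\mathrm{mf}}-E_N^{\mathrm{mf}}+C_0$ for an $N$-independent $C_0$, hence $F'(\lambda)\le\langle\psi_\lambda,(H_N^{\mathrm{mf}}-E_N^{\mathrm{mf}})\psi_\lambda\rangle+C_0F(\lambda)$, and the whole proof reduces to controlling $\langle\psi_\lambda,(H_N^{\mathrm{mf}}-E_N^{\mathrm{mf}})\psi_\lambda\rangle$.

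Write $H_N^{\mathrm{mf}}=D+\sum_{j=1,2}(\cV_{+j}+\cV_{-j})$, where $D=\cK+\cV_0$ is the part commuting with $\cN_+$ and $\cV_{\pm j}$ are the pieces of the interaction that shift the number of excitations by $\pm j$ (the quadratic pairing terms for $j=2$ and the cubic terms for $j=1$). The exact relations $e^{\lambda\cN_+/2}\cV_{\pm j}=e^{\pm\lambda j/2}\cV_{\pm j}e^{\lambda\cN_+/2}$ give $e^{\lambda\cN_+/2}H_N^{\mathrm{mf}}e^{\lambda\cN_+/2}=(H_N^{\mathrm{mf}}+\mathcal E_\lambda)e^{\lambda\cN_+}$ with $\mathcal E_\lambda=\sum_j\big((e^{\lambda j/2}-1)\cV_{+j}+(e^{-\lambda j/2}-1)\cV_{-j}\big)$, and therefore
\[ \langle\psi_\lambda,H_N^{\mathrm{mf}}\psi_\lambda\rangle=\langle H_N^{\mathrm{mf}}\psi,e^{\lambda\cN_+}\psi\rangle+\langle\psi,\mathcal E_\lambda e^{\lambda\cN_+}\psi\rangle. \]
For the first term we use the spectral constraint: with $\eta:=(E_N^{\mathrm{mf}}+\zeta-H_N^{\mathrm{mf}})\psi$ one has $\eta={\bf 1}_{(-\infty,E_N^{\mathrm{mf}}+\zeta]}(H_N^{\mathrm{mf}})\eta$ and $\|\eta\|\le\zeta$, so $\langle H_N^{\mathrm{mf}}\psi,e^{\lambda\cN_+}\psi\rangle=(E_N^{\mathrm{mf}}+\zeta)F(\lambda)-\langle\eta,e^{\lambda\cN_+}\psi\rangle$, and by Cauchy--Schwarz and the definition of $G_N$,
\[ |\langle\eta,e^{\lambda\cN_+}\psi\rangle|=|\langle e^{\lambda\cN_+/2}\eta,\psi_\lambda\rangle|\le\langle\eta,e^{\lambda\cN_+}\eta\rangle^{1/2}F(\lambda)^{1/2}\le\zeta\,G_N(\lambda)^{1/2}F(\lambda)^{1/2}. \]

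For the error term the same conjugation relations give $\langle\psi,\cV_{\pm j}e^{\lambda\cN_+}\psi\rangle=e^{\mp\lambda j/2}\langle\psi_\lambda,\cV_{\pm j}\psi_\lambda\rangle$, and after taking real parts the two contributions with fixed $j$ combine — using that $\cV_{+j}+\cV_{-j}$ is self-adjoint while $\cV_{+j}-\cV_{-j}$ is skew-adjoint — into $\mathrm{Re}\,\langle\psi,\mathcal E_\lambda e^{\lambda\cN_+}\psi\rangle=\sum_j(1-\cosh(\lambda j/2))\langle\psi_\lambda,(\cV_{+j}+\cV_{-j})\psi_\lambda\rangle$. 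Since $\pm(\cV_{+j}+\cV_{-j})\le C(\cN_++1)$ (the standard mean-field estimate, with $C$ independent of $N$) and $|1-\cosh(\lambda j/2)|\le C\lambda^2$ for $j\le2$, this is bounded by $C\lambda^2\langle\psi_\lambda,(\cN_++1)\psi_\lambda\rangle=C\lambda^2(F(\lambda)+F'(\lambda))$. Collecting the three bounds and absorbing the $C\lambda^2F'$ term for $\lambda$ small yields $F'(\lambda)\le C_1F(\lambda)+C_1G_N(\lambda)^{1/2}F(\lambda)^{1/2}$ with $C_1$ independent of $N$ and $\psi$; using $F,G_N\ge1$ this gives $(\log F)'(\lambda)\le C_2G_N(\lambda)^{1/2}$, hence, integrating and taking the supremum over $\psi$, $G_N(\lambda)\le\exp\!\big(C_2\int_0^\lambda G_N(\mu)^{1/2}\,d\mu\big)$ for all small $\lambda$. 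A standard continuation argument applied to $J_N(\lambda)=\int_0^\lambda G_N(\mu)^{1/2}\,d\mu$ — which satisfies $J_N'\le e^{C_2J_N/2}$, $J_N(0)=0$ — then shows $G_N(\lambda)\le(1-\tfrac{C_2}{2}\lambda)^{-2}$ for $\lambda<2/C_2$, a bound independent of $N$. Taking $\kappa$ below this threshold (and below the one used in the absorption step) proves Theorem \ref{thm:mf} with an explicit constant.

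The crux — and the point where the argument genuinely goes beyond the eigenfunction case of \cite{NR} — is the appearance of $\langle\eta,e^{\lambda\cN_+}\eta\rangle$: the exponential moment we want to bound, evaluated again on a vector in the low-energy spectral subspace. For an eigenfunction $\eta$ is parallel to $\psi$ and this term is harmless, but for a general spectral-subspace vector it is genuinely self-referential, which is why one must run a Grönwall/continuation argument in the parameter $\lambda$, working throughout with the finite quantities $G_N(\lambda)$ and extracting $N$-uniformity only at the end. The remaining technical point, routine here but the main difficulty in the Gross--Pitaevskii setting of Theorem \ref{thm:main}, is the verification that the conjugation error $\langle\psi,\mathcal E_\lambda e^{\lambda\cN_+}\psi\rangle$ is simultaneously $O(\lambda^2)$ and only of first order in $\cN_+$; this uses the precise structure of the excitation-number-changing part of the interaction and a symmetric placement of the condensate modes when estimating $\cV_{\pm j}$.
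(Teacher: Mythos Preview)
Your proof is correct and follows essentially the same strategy as the paper: use coercivity to bound $\cN_+\le H_N^{\text{mf}}-E_N^{\text{mf}}+C$, exploit that $(H_N^{\text{mf}}-E_N^{\text{mf}})\psi$ (your $\eta$, up to a sign and a shift by $\zeta$) remains in the low-energy spectral subspace to make the estimate self-referential in $G_N$, control the conjugation/commutator error, and close by a Gr\"onwall-type argument in the parameter. The paper organizes the closure slightly differently---it proves directly that $\sup_\psi\langle\psi,\cN_+e^{\kappa\cN_+}\psi\rangle\le\frac{C+\zeta^{1/2}}{1-C\kappa}\sup_\psi\langle\psi,e^{\kappa\cN_+}\psi\rangle$ and then uses $e^{\kappa\cN_+}=1+\int_0^\kappa\cN_+e^{t\cN_+}\,dt$, and it bounds the commutator $e^{\kappa\cN_+/2}[H_N^{\text{mf}},e^{\kappa\cN_+/2}]$ only to order $O(\kappa)$ rather than extracting your sharper $O(\lambda^2)$ via the real-part/$(1-\cosh)$ symmetrization---but these are cosmetic rather than substantive differences.
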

\begin{proof}
Denote by 
		$$ \cQ_\zeta = {\bf 1}_{ (-\infty, E_N^{\text{mf}} + \zeta] } (H_N^{\text{mf}})\big( L^2_s(\bT^{3N}) \big) \cap \big\{\psi_N\in L^2_s(\bT^{3N}): \|\psi_N \|=1 \big\} $$ 
the low energy spectral subspace of $ H_N^{\text{mf}}$ up to energies $E_N^{\text{mf}} + \zeta $ intersected with the unit sphere in $L^2_s(\bT^{3N}) $. We first claim that for sufficiently small $\kappa>0$, we have that
		\be \label{eq:cntrct1} \sup_{ \psi_N \in \cQ_\zeta }  \langle \psi_N, \cN_+ e^{ \kappa\mathcal{N}_+}  \psi_N\rangle \leq \frac{C +\zeta^{\frac12}}{1-C\kappa}  \sup_{ \psi_N \in \cQ_\zeta }  \langle \psi_N, e^{\kappa \mathcal{N}_+}   \psi_N\rangle  \ee
for some constant $C>0$ that is independent of $N$ and $\kappa$. To prove \eqref{eq:cntrct1}, we bound
		\[\begin{split}
		 \sup_{ \psi_N \in \cQ_\zeta }  \langle \psi_N, \cN_+  e^{ \kappa \mathcal{N}_+} \psi_N\rangle & =  \sup_{ \psi_N \in \cQ_\zeta }  \langle \psi_N, e^{ \frac \kappa2 \mathcal{N}_+} \cN_+ e^{ \frac \kappa2 \mathcal{N}_+} \psi_N\rangle \\
		 &\leq  \sup_{ \psi_N \in \cQ_\zeta }  \langle \psi_N, e^{ \frac \kappa2 \mathcal{N}_+}(H_N^{\text{mf}}- E_N^{\text{mf}} + C   ) e^{ \frac \kappa2 \mathcal{N}_+} \psi_N\rangle  \\
		 &\leq \zeta^{\frac12} \sup_{ \psi_N \in \cQ_\zeta } \Big|\Big\langle e^{ \frac \kappa2 \mathcal{N}_+}   \psi_N,  e^{ \frac \kappa2 \mathcal{N}_+}   \frac{ (H_N^{\text{mf}}- E_N^{\text{mf}}   ) \psi_N}{\| (H_N^{\text{mf}}- E_N^{\text{mf}}   ) \psi_N\| } \Big\rangle \Big|  \\
		 &\hspace{0.35cm} + \sup_{ \psi_N \in \cQ_\zeta }  \langle \psi_N, e^{ \frac \kappa2 \mathcal{N}_+}  [H_N^{\text{mf}},  e^{ \frac \kappa2 \mathcal{N}_+}  ] \psi_N\rangle+C\!\!\sup_{ \psi_N \in \cQ_\zeta }  \langle \psi_N, e^{  \kappa \mathcal{N}_+}  \psi_N\rangle,\\
		 \end{split}\]
where we used \eqref{eq:coercmf}. Since $  \frac{ (H_N^{\text{mf}}- E_N^{\text{mf}}   ) \psi_N }{\| (H_N^{\text{mf}}- E_N^{\text{mf}}   ) \psi_N\| } \in \cQ_\zeta $ whenever $\psi_N\in \cQ_\zeta$, we get
		\[\begin{split}
		 &  \sup_{ \psi_N \in \cQ_\zeta }  \langle \psi_N, \cN_+ e^{ \kappa \mathcal{N}_+}  \psi_N\rangle \\
		 & \leq (C +\zeta^{\frac12})  \sup_{ \psi_N \in \cQ_\zeta }  \langle \psi_N, e^{  \kappa \mathcal{N}_+}  \psi_N\rangle+  \sup_{ \psi_N \in \cQ_\zeta }  \langle \psi_N, e^{ \frac \kappa2 \mathcal{N}_+} [H_N^{\text{mf}},  e^{ \frac \kappa2 \mathcal{N}_+}] \psi_N\rangle.
		 \end{split}\]		
The commutator on the right hand side is readily found to be 
		\[\begin{split}
		[H_N^{\text{mf}},  e^{ \frac \kappa2 \mathcal{N}_+}] & = \frac12 \sum_{r\in\Lambda_+^*} \wh v(r) a^*_ra^*_{-r} \frac{a_0}{\sqrt N }   \frac{a_0}{\sqrt N }  e^{ \frac \kappa2 \mathcal{N}_+}  ( 1- e^{ \kappa } )  \\
		&\hspace{0.4cm}+ \frac1{\sqrt N} \sum_{p, r \in\Lambda_+^*:p+r\neq 0} \wh v(r) a^*_{p+r}  a^*_{-r} a_{p}  \frac{a_0}{\sqrt N }\, e^{ \frac \kappa2 \mathcal{N}_+}  ( 1- e^{ \frac \kappa2 })  - \text{h.c.},
 		\end{split} \]
so that standard bounds on the creation and annihilation operators together with a first order Taylor expansion of the differences $e^{ \frac {j \kappa} 2  }-1 $, for $j\in \{1,2\}$, imply that 
		\[  |\langle \psi_N,  e^{ \frac \kappa2 \mathcal{N}_+}   [H_N^{\text{mf}},  e^{ \frac \kappa2 \mathcal{N}_+}  ]  \psi_N  \rangle   | \leq  C  \kappa  \langle \psi_N, \cN_+ e^{ \kappa \mathcal{N}_+} \psi_N\rangle    \]  	
for some $C>0$, independent of $N, \kappa$ and $ \psi_N \in L^2_s(\bT^{3N})$. Combined with the previously derived bounds we thus get for sufficiently small $\kappa>0$ that 
		\[   \sup_{ \psi_N \in \cQ_\zeta }  \langle \psi_N, \cN_+e^{  \kappa\mathcal{N}_+}  \psi_N\rangle \leq  \frac{C +\zeta^{\frac12}}{1-C\kappa}  \sup_{ \psi_N \in \cQ_\zeta }  \langle \psi_N, e^{ \kappa \mathcal{N}_+}  \psi_N\rangle,  \]
which proves the bound \eqref{eq:cntrct1}. 

To conclude the theorem, observe finally that \eqref{eq:cntrct1} implies
		\[\begin{split}
		  \sup_{ \psi_N \in \cQ_\zeta }  \langle \psi_N, e^{\kappa\mathcal{N}_+}   \psi_N\rangle &=1 +  \sup_{ \psi_N \in \cQ_\zeta } \int_0^\kappa dt\,   \langle \psi_N, \cN_+ e^{t \mathcal{N}_+}    \psi_N\rangle\\
		& \leq 1 +   \kappa \sup_{t \in [0, \kappa]} \sup_{ \psi_N \in \cQ_\zeta }   \langle \psi_N, \cN_+  e^{t \mathcal{N}_+}   \psi_N\rangle \\
		&\leq  1 +    \frac{ C\kappa +\zeta^{\frac12} \kappa}{1-C\kappa}   \sup_{ \psi_N \in \cQ_\zeta }   \langle \psi_N, e^{\kappa\mathcal{N}_+}   \psi_N\rangle
		\end{split}\]
so that for sufficiently small $\kappa>0$, we conclude
		\[  \sup_{ \psi_N \in \cQ_\zeta }  \langle \psi_N, e^{\kappa \mathcal{N}_+}   \psi_N\rangle \leq \frac{1}{1- C' } \hspace{0.5cm} \text{for} \hspace{0.5cm} C' = 1-  \frac{ C\kappa +\zeta^{\frac12}\kappa}{1-C\kappa} <1.  \]	
\end{proof}

%%%%%%%%%%%%%%%%%%%%%%%%%%%%%%%%%%%%
%%%%%%%%%%%%%%%%%%%%%%%%%%%%%%%%%%%%
%%%%%%%%%%%%%%%%%%%%%%%%%%%%%%%%%%%%
\section{Renormalized Hamiltonian and Proof of Theorem \ref{thm:main}} \label{sec:exc}

In this section, we prove Theorem \ref{thm:main}. As mentioned earlier, this is based on a similar argument as in the proof of Theorem \ref{thm:mf}. While an analogue of \eqref{eq:coercmf} is given by \eqref{eq:coerc}, previously derived in \cite{BSS1}, the key difficulty is that we can not apply the commutator strategy directly to the shifted GP Hamiltonian $H_N - N \cE_{\text{GP}}(\ph_{\text{GP}})$, for $H_N$ defined in \eqref{def:ham}. Instead, we first need to renormalize $H_N$ through a quadratic generalized Bogoliubov transformation that takes into account short-scale correlations among the particles. This regularizes certain energy contributions to $H_N$ which enables us to apply a commutator argument as in Section \ref{sec:outline} on the level of the renormalized Hamiltonian.  

In order to describe the key steps in detail, we first introduce some notation and we recall several results previously established in \cite{BSS1}. For our analysis, we find it convenient to switch to a Fock space setting. Recall that the bosonic Fock space is defined as
		\[ \cF  = \mathbb C \oplus  \bigoplus_{n \geq 1} L^2 (\bR^3)^{\otimes_{\text{sym}} n}. \]
The standard creation and annihilation operators on $\cF$ are defined by
		\[ \begin{split} 
		(a^* (f) \Psi)^{(n)} (x_1, \dots , x_n) &= \frac{1}{\sqrt{n}} \sum_{j=1}^n f (x_j) \Psi^{(n-1)} (x_1, \dots , x_{j-1}, x_{j+1} , \dots , x_n), \\
		(a (g) \Psi)^{(n)} (x_1, \dots , x_n) &= \sqrt{n+1} \int \bar{g} (x) \Psi^{(n+1)} (x,x_1, \dots , x_n) 
		\end{split} \]
and they satisfy, for every $f,g \in L^2 (\bR^3)$, the canonical commutation relations 
		\be \label{eq:ccr} [a (f), a^* (g) ] = \langle f,g \rangle , \quad [ a(f), a(g)] = [a^* (f), a^* (g) ] = 0. \ee
We also introduce operator-valued distributions $a_x, a_x^*$, for $x \in \bR^3$, through
		\[ a(f) = \int dx\, \overline{f} (x) \, a_x  , \quad a^* (g) = \int dx\, g(x) \, a_x^*  .  \]
To switch from $L^2_s(\bR^{3N}) $ to the Fock space, we follow \cite{LNSS} and use the decomposition  
		\[ \psi_N = \xi_0 \ph_\text{GP}^{\otimes N} + \xi_1 \otimes_s \ph_\text{GP}^{\otimes (N-1)} + \ldots + \xi_{N-1}\otimes_s \ph_\text{GP}+ \xi_N, \]
where $\xi_j \in L^2_{\perp \ph_\text{GP}} (\bR^3)^{\otimes_\text{sym} j}$, for every $\psi_N\in L^2_s(\bR^{3N})$. This defines a unitary map 
		$$L^2_s(\bR^{3N})\ni \psi_N \mapsto  U_N \psi_N = ( \xi_0, \xi_1, \dots , \xi_N ) \in \cF_{\bot\ph_\text{GP}}^{\leq N} =\bigoplus_{n = 0}^N L_{\perp \ph_\text{GP}}^2 (\bR^3)^{\otimes_s n} $$ 
that sends $\psi_N$ to the excitation vector $ ( \xi_0, \xi_1, \dots , \xi_N )$, which describes the fluctuations of $\psi_N$ around the pure condensate $\ph_{\text{GP}}^{\otimes N}$. Note that $  \cF_{\bot\ph_\text{GP}}^{\leq N}$ is a truncated Fock space built over $L^2_{\perp \ph_\text{GP}} (\bR^3)$, the orthogonal complement of the condensate state $\ph_\text{GP} \in L^2 (\bR^3)$, the unique normalized, positive minimizer of $\cE_\text{GP}$, defined in \eqref{def:EGP}. 

The action of $U_N$ on the creation and annihilation operators is explicit and given by
		\begin{equation}\label{2.1.UNconjugation}
 		\begin{split}
    		U_{N} a^* (\ph_\text{GP} ) a (\pn) U_{N}^*  & = N-\cN, \\
    		U_{N} a^*(f)a (\pn) U_{N}^* &= a^*(f) \sqrt{N-\cN}  , \\
    		U_{N} a^* (\pn) a(g) U_{N}^* &= \sqrt{N-\cN} a(g)  , \\
    		U_{N} a^* (f) a(g)U_{N}^* &= a^*(f)a(g)
    		\end{split}
    		\end{equation}
for every $ f,g\in L^2_{\perp \pn} (\bR^3)$. Here, $ \cN$ denotes the number of particles operator in the excitation Fock space $  \cF_{\bot\ph_\text{GP}}^{\leq N}$, which is the multiplication operator given by
		\[\cN ( \xi_j)_{j=0}^N = ( j \xi_j)_{j=0}^N\in \cF_{\bot\ph_\text{GP}}^{\leq N}.  \]
Notice in particular that the rules \eqref{2.1.UNconjugation} imply that $ U_N \cN_{\bot \ph_\text{GP} } U_N^* = \cN$. Moreover, \eqref{2.1.UNconjugation} implies that the transformed Hamiltonian $\cL_N = U_N H_N U_N^*$ can be decomposed into
		\[ \cL_N =  \cL^{(0)}_{N}+\cL^{(1)}_{N} + \cL^{(2)}_{N} + \cL^{(3)}_{N} + \cL^{(4)}_{N} , \]
where, in the sense of quadratic forms on $  \cF_{\bot\ph_\text{GP}}^{\leq N}\times \cF_{\bot\ph_\text{GP}}^{\leq N}$, we have that
	\begin{equation}\label{eq:cLNj}
	\begin{split}
	\mathcal{L}_N^{(0)} &= \big\langle \pn, \big[ -\Delta + V_\text{ext} + \frac{1}{2} \big(N^3 V(N\cdot) * \vert \pn \vert^2\big) \big] \pn \big\rangle (N-\cN) \\
	& \qquad -  \frac12\big\langle \pn,  \big(N^3 V(N\cdot) * \vert \pn \vert^2 \big)\pn \big\rangle (\cN+1)(1-\cN/N) ,  \\
	\mathcal{L}_N^{(1)} &=   \sqrt{N} b\left(  \left( N^3 V(N\cdot) * \vert \pn\vert^2   -8\pi \frak{a}_0|\pn|^2 \right)\pn\right)\\
	&\hspace{0.5cm} - \frac{\mathcal{N}+1}{\sqrt N}  b\left( \left( N^3 V(N\cdot) * \vert \pn\vert^2 \right) \pn \right) + \text{h.c.},\\
	\mathcal{L}_N^{(2)} &= \int dx\; \left( a_x^* (-\Delta_x ) a_x + V_\text{ext}(x)a_x^{*} a_x \right)
		\\
	&\hspace{0.5cm} + \int  dxdy\;  N^3 V(N(x-y)) \vert\pn(y) \vert^2 \Big(b_x^* b_x - \frac{1}{N} a_x^* a_x \Big)   \\
	&\hspace{0.5cm}+ \int   dxdy\; N^3 V(N(x-y)) \pn(x) \pn(y) \Big( b_x^* b_y - \frac{1}{N} a_x^* a_y \Big)  \\
	&\hspace{0.5cm}+ \frac{1}{2}  \int  dxdy\;  N^3 V(N(x-y)) \pn(y) \pn(x) \Big(b_x^* b_y^*  + \text{h.c.} \Big), \\
	\mathcal{L}_N^{(3)} &=  \int  dx dy\, N^{ 5/2} V(N(x-y)) \pn(y) \big( b_x^* a^*_y a_x + \text{h.c.} \big), \\
	\mathcal{L}_N^{(4)} &= \frac{1}{2} \int  dxdy\; N^2 V(N(x-y)) a_x^* a_y^* a_y a_x .
	\end{split}
	\end{equation}	
Here, we introduced the modified creation and annihilation operators
		\[ b^* (f) = \int dx\, f(x) b^*_x = a^* (f) \, \sqrt{\frac{N-\cN}{N}} , \quad \quad b (f) =  \int dx\, \overline f(x) b_x =  \sqrt{\frac{N- \cN}{N}} \, a (f) . \]
 
We renormalize $\cL_N$ through a generalized Bogoliubov transformation, which we implement as in \cite{BSS1}. To define this precisely, consider the Neumann ground state of
		\[ \Big( -\Delta + \frac{1}{2} V \Big) f_{\ell} = \lambda_{\ell} f_\ell \]
in $B_{N\ell}(0)\subset \bR^3$, for some $0 < \ell < 1$ which we assume to be sufficiently small (but fixed, independently of $N$). By radial symmetry of $V$, $f_\ell$ is radial and we normalize it such that $f_\ell (x) = 1$ if $|x| = N \ell$. The rescaled solution $f_\ell (N.)$ solves 
		\[ \Big( -\Delta + \frac{ N^2}{2} V (N.) \Big) f_\ell (N.) = N^2 \lambda_\ell f_\ell (N.) \]
in $B_{\ell}(0)$. In the following, we identify $f_\ell (N.)$ with its extension to $\bR^3$, defined by $f_{N,\ell} (x) = 1$ whenever $|x| > \ell$. %Thus, $ f_{\ell}(N.)$ solves the equation  
		%\begin{equation}\label{eq:scatlN}
		% \left( -\Delta + \frac{N^2}{2} V (N.) \right) f_{N,\ell} = N^2 \lambda_\ell f_{N,\ell} \chi_\ell, 
		%\end{equation}
%where we set 
		%\begin{equation} \label{eq:defchiell}\chi_\ell (x) = \begin{cases} 1 &:|x|\leq \ell,\\ 0 &: |x|>\ell. \end{cases} \end{equation}  
Moreover, we set $ w_\ell = 1-f_\ell $ so that $ w_\ell(N.)$ has compact support in $ B_\ell(0)$. We denote the Fourier transform of $ w_\ell $ by
		\[ \widehat w_\ell (p) = \int dx\; w_\ell(x)e^{-2\pi ipx}.  \]
From \cite[Lemma 3.1]{BSS1}, we recall that for $x,p\in\bR^3$, it holds true that
		\[ 
		w_\ell(x)\leq \frac{C}{|x|+1}, \;\; |\nabla w_\ell(x)|\leq \frac{C }{|x|^2+1}, \;\; |\widehat{w}_\ell (p)| \leq \frac{C}{|p|^2}. 
		\]
Finally, denoting by $\chi_H$ the characteristic function of $\{ p\in\bR^3: |p|\geq \ell^{-\alpha}\} \subset\bR^3$ and by $ \check{\chi}_H \in L^2 (\bR^3)$ its inverse Fourier transform, we define the correlation kernel
		\be \label{eq:defeta}  \eta =  Q\otimes Q  \big( -(Nw_\ell(N.)  \ast \wch \chi_H ) (x-y)  \pn\otimes \pn\big) \in L^2_{\bot \pn}(\bR^3) \otimes_\text{sym}L^2_{\bot \pn}(\bR^3)   \ee
We remark that the only difference of $\eta $ compared to $\eta_H$, defined in \cite[Eq. (3.11)]{BSS1}, consists of the prefactor $ Q\otimes Q$, for $Q = 1-|\pn\rangle\langle\pn|$. This difference is not substantial and proceeding as in \cite[Lemma 3.2]{BSS1}, we readily obtain the following Lemma. 
\begin{lemma}\label{lm:bndseta}
Let $V, V_\emph{ext}$ satisfy Assumption \ref{ass:V}, let $\ell\in (0,1)$ and let $\alpha>0$. Set $\eta_{x}(y)= \eta (x,y)$ for $x\in\bR^3$. Then, there exists a constant $C>0$, that is independent of $N$, $\ell\in (0,1)$ and $x\in\bR^3$, such that 
	\[
	\Vert \eta  \Vert \leq C \ell^{\frac\alpha 2}, \hspace{0.5cm}  \Vert \eta_{x} \Vert \leq C \ell^{\frac\alpha2} \vert \pn(x) \vert, \hspace{0.5cm} \Vert \nabla_1 \eta  \Vert, \,\Vert \nabla_2 \eta  \Vert  \leq C \sqrt{N}.
	\]
	Furthermore, identifying $\eta (x,y)$ as the kernel of a Hilbert-Schmidt operator and $\eta ^{(n)} (x,y)$ with the kernel of its $n$-th power, we have for $n\geq 2$ and $x,y\in\bR^3$ that 
	\[
	\vert \eta ^{(n)} (x,y) \vert 
	\leq \Vert \eta_{x} \Vert  \Vert \eta_{y} \Vert  \Vert \eta  \Vert^{n-2} \leq  C \ell^{\alpha} \Vert \eta  \Vert^{n-2} \vert \pn(x) \vert   \vert \pn(y) \vert
	\]
and, for all $N$ sufficiently large, that
	\[ |\eta (x,y)|\leq CN |\pn(x) | |\pn(y)| \leq CN. \]
\end{lemma}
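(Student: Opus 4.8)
The plan is to follow the strategy of \cite[Lemma 3.2]{BSS1} essentially verbatim, since the only change here is the extra projection $Q\otimes Q$ in the definition \eqref{eq:defeta}, and to track how this projection affects the relevant estimates. I would first recall that the kernel $k(x,y) := -(Nw_\ell(N\cdot)*\check\chi_H)(x-y)\,\pn(x)\pn(y)$ appearing inside $Q\otimes Q$ in \eqref{eq:defeta} is exactly (up to the projection) the kernel $\eta_H$ of \cite[Eq.\ (3.11)]{BSS1}, for which the stated bounds $\|\eta_H\|\le C\ell^{\alpha/2}$, $\|(\eta_H)_x\|\le C\ell^{\alpha/2}|\pn(x)|$, $\|\nabla_{1,2}\eta_H\|\le C\sqrt N$ and $|\eta_H(x,y)|\le CN|\pn(x)||\pn(y)|$ are already established there. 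The point is then that $\eta = (Q\otimes Q)\,\eta_H$, i.e.\ $\eta$ is obtained from $\eta_H$ by composing the associated Hilbert–Schmidt operator with $Q$ on the left and $Q=Q^*$ on the right.

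For the $L^2$ and operator-norm bounds this is immediate: since $Q$ is an orthogonal projection, $\|Q\otimes Q\|_{\mathrm{op}}=1$, hence $\|\eta\|\le\|\eta_H\|\le C\ell^{\alpha/2}$; and for the slices, $\eta_x = Q\big(\langle \pn, Q(\eta_H)_\cdot\rangle\,\text{-type terms}\big)$ — more precisely $\eta(x,\cdot) = Q\big((\eta_H)_x - \langle\pn,(\eta_H)_x\rangle\pn\big)\cdot(\text{factor from the }x\text{-leg }Q)$, so $\|\eta_x\|\le \|(\eta_H)_x\| + |\langle\pn,(\eta_H)_x\rangle|\,\|\pn\|$. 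Both terms are bounded by $C\ell^{\alpha/2}|\pn(x)|$, the second because $Q\pn=0$ forces the relevant contraction to be controlled by Cauchy–Schwarz together with $\|(\eta_H)_x\|\le C\ell^{\alpha/2}|\pn(x)|$. For the gradient bounds one writes $\nabla_1\eta = \nabla_1\big[(Q\otimes Q)\eta_H\big]$; the projector $Q$ in the first leg does not commute with $\nabla_1$, but $\nabla Q = \nabla(1-|\pn\rangle\langle\pn|)$ produces only the extra term $-|\nabla\pn\rangle\langle\pn|$ acting on the first variable, which is bounded since $\pn\in H^1$ by the regularity of the GP minimizer recalled in Appendix \ref{app:GP} and $\langle\pn,(\eta_H)_x\rangle$ is square-integrable in $x$; hence $\|\nabla_1\eta\|\le \|\nabla_1\eta_H\| + C\|\eta_H\|\le C\sqrt N$, and similarly for $\nabla_2$.

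For the iterated-kernel bound, the key observation is that for $n\ge 2$, $\eta^{(n)} = (Q\otimes Q)\eta_H(Q\otimes Q)\eta_H\cdots$ — but more simply, $\eta$ as a Hilbert–Schmidt operator equals $Q\,\widetilde\eta_H\,Q$ where $\widetilde\eta_H$ is the operator with kernel $\eta_H$, so $\eta^{(n)} = Q(\widetilde\eta_H Q)^{n-1}\widetilde\eta_H Q$ and in particular $\|\eta^{(n)}\|_{\mathrm{op}}\le\|\widetilde\eta_H\|_{\mathrm{op}}^{\,n}$ is not quite what we want; instead I would directly estimate the kernel pointwise by inserting resolutions of the identity: $\eta^{(n)}(x,y)=\int \eta(x,z_1)\eta^{(n-2)}(z_1,z_2)\eta(z_2,y)\,dz_1dz_2$ is bounded by $\|\eta_x\|\,\|\eta^{(n-2)}\|_{\mathrm{op}}\,\|\eta_y\|\le \|\eta_x\|\,\|\eta_y\|\,\|\eta\|^{n-2}$ via Cauchy–Schwarz, and then one plugs in the slice bound $\|\eta_x\|\le C\ell^{\alpha/2}|\pn(x)|$ to get $C\ell^\alpha\|\eta\|^{n-2}|\pn(x)||\pn(y)|$. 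The pointwise bound $|\eta(x,y)|\le CN|\pn(x)||\pn(y)|$ follows from $|\eta_H(x,y)|\le CN|\pn(x)||\pn(y)|$ together with the fact that applying $Q\otimes Q$ only subtracts terms of the form $\langle\pn,(\eta_H)_y\rangle\pn(x)\pn(y)$-type corrections, each again bounded by $CN|\pn(x)||\pn(y)|$ using $\|\pn\|_\infty<\infty$ (again from Appendix \ref{app:GP}), and $|\pn(x)||\pn(y)|\le\|\pn\|_\infty^2$ gives the final $\le CN$.

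The main obstacle, such as it is, is purely bookkeeping: one must carefully expand $Q\otimes Q = ({\bf 1}-|\pn\rangle\langle\pn|)\otimes({\bf 1}-|\pn\rangle\langle\pn|)$ into four terms and check that each of the three correction terms (relative to $\eta_H$) satisfies the same bounds, with the contractions $\langle\pn, \cdot\rangle$ handled by Cauchy–Schwarz using precisely the $\eta_H$-estimates from \cite{BSS1}; the gradient estimate requires additionally the boundedness of $\pn$ and $\nabla\pn$ from the GP regularity theory in Appendix \ref{app:GP}, but no genuinely new idea beyond what is in \cite[Lemma 3.2]{BSS1} is needed. I would therefore present the proof as ``proceed exactly as in \cite[Lemma 3.2]{BSS1}, noting that the extra projection $Q\otimes Q$ only produces lower-order correction terms controlled by the same bounds,'' and spell out just the slice estimate and the $\nabla_1$ estimate in detail as these are the two places where the projection interacts non-trivially with the quantity being estimated.
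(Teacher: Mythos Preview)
Your proposal is correct and follows exactly the approach the paper takes: the paper does not give a proof at all but simply remarks that the only difference from $\eta_H$ in \cite[Eq.\ (3.11)]{BSS1} is the prefactor $Q\otimes Q$, which ``is not substantial,'' and that one proceeds as in \cite[Lemma 3.2]{BSS1}. You have spelled out in more detail than the paper precisely how the four-term expansion of $Q\otimes Q$ produces only lower-order corrections controlled by the same $\eta_H$-bounds via Cauchy--Schwarz, which is exactly the bookkeeping the paper implicitly leaves to the reader.
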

 
With the kernel $\eta $ from \eqref{eq:defeta}, define the generalized Bogoliubov transformation
		\be \begin{split}\label{eq:eB} 
		e^{ B(\eta)}\hspace{0.5cm} \text{for}\hspace{0.5cm}  B(\eta) = \frac{1}{2} \int dx dy \, \eta  (x,y) \,  \big(  b_x^* b_y^* - b_x b_y \big).
		\end{split} \ee
Then $e^{ B(\eta)}:  \cF_{\bot\ph_\text{GP}}^{\leq N}\to  \cF_{\bot\ph_\text{GP}}^{\leq N}$ is a unitary map. Following \cite{BSS1}, we use this map to renormalize the excitation Hamiltonian $\cL_N$ and define
		\be \label{def:GN} \cG_{N} = e^{-B(\eta)} \cL_N e^{B(\eta)}.  \ee
The next result collects the key properties of $\cG_N$ needed for our proof of Theorem \ref{thm:main}. 		
\begin{prop} \label {prop:GN}
Let $V, V_\emph{ext}$ satisfy Assumption \ref{ass:V}, let $\eta$ be as in \eqref{eq:defeta} (for $\ell >0$ small enough) and let $\cG_{N}$ be as in \eqref{def:GN}. Moreover, set $V_\emph{ext}' =  V_\emph{ext} +\Lambda $ for some $\Lambda\geq 0$ such that $V_\emph{ext}'\geq 0$ in $\bR^3$ and set 
		\[ \cH_N = \int dx\, a^*_x \big(-\Delta_x + V_\emph{ext}'(x)\big) a_x  + \frac12 \int dxdy \, N^2 V(N(x-y))a^*_xa^*_y   a_xa_y. \]
Then, there exist $c, C>0$, independent of $N$ and $\ell$, such that 
		\be  \begin{split} \label{eq:GNbnd1}
		 \cG_N \geq N \cE_\emph{GP}(\ph_\emph{GP}) + c \, (\cH_N + \cN) - C. 
		\end{split}\ee
Moreover, for every $\kappa >0$ sufficiently small, $\cG_N$ satisfies the commutator bound
		\be  \begin{split} \label{eq:GNbnd2}
		 |  \langle \xi,  e^{\kappa \cN/2} [\cG_N, e^{\kappa \cN/2} ] \xi\rangle | \leq  C \kappa \, \langle \xi,(\cH_N + \cN)e^{\kappa \cN}\xi\rangle
		\end{split}\ee
for some constant $C>0$ that is uniform in $N ,  \kappa$ and $\xi \in  \cF_{\bot\ph_\emph{GP}}^{\leq N}$.
\end{prop}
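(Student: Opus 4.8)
\textbf{Proof proposal for Proposition \ref{prop:GN}.}

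The plan is to obtain both bounds from the known structure of $\cG_N$ after conjugation by the generalized Bogoliubov transformation $e^{B(\eta)}$, exactly as analyzed in \cite{BSS1}, together with the coercivity bound \eqref{eq:coerc} that was transported to Fock space. For \eqref{eq:GNbnd1}, I would first recall the decomposition of $\cG_N$ from \cite{BSS1} into a main constant part $N\cE_\text{GP}(\ph_\text{GP})$ plus a quadratic renormalized piece plus cubic and quartic error terms, and then invoke the fact (proved in \cite{BSS1}) that $\cG_N \geq N\cE_\text{GP}(\ph_\text{GP}) - C$ together with $\cG_N \geq c\, \cH_N - C$ on the truncated excitation space; since $U_N H_N U_N^* = \cL_N$ and $\cN_{\bot\ph_\text{GP}}$ conjugates to $\cN$, the bound \eqref{eq:coerc} transfers to $e^{-B(\eta)}\cL_N e^{B(\eta)} = \cG_N \geq N\cE_\text{GP}(\ph_\text{GP}) + c\,\cN - C$ after controlling how $e^{\pm B(\eta)}$ changes $\cN$ by at most a multiplicative constant plus additive constant (this is a standard a priori bound on generalized Bogoliubov transformations, $e^{-B(\eta)}(\cN+1)e^{B(\eta)} \leq C(\cN+1)$, using $\|\eta\| \leq C\ell^{\alpha/2}$ from Lemma \ref{lm:bndseta}). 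Averaging these two lower bounds with suitable weights yields $\cG_N \geq N\cE_\text{GP}(\ph_\text{GP}) + c(\cH_N + \cN) - C$. The replacement of $V_\text{ext}$ by $V_\text{ext}' = V_\text{ext} + \Lambda$ only shifts $\cH_N$ by $\Lambda\cN$, which is harmless on the left side of the inequality, so one chooses $\Lambda$ large enough that $V_\text{ext}' \geq 0$ and absorbs the shift into the constants.

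For the commutator bound \eqref{eq:GNbnd2}, the strategy mirrors the mean-field computation in Section \ref{sec:outline}: write $[\cG_N, e^{\kappa\cN/2}] = \sum_j [\cG_N^{(j)}, e^{\kappa\cN/2}]$ term by term using the decomposition of $\cG_N$, and for each monomial in $b^\#, a^\#$ note that conjugating by $e^{\kappa\cN/2}$ produces a factor $e^{\pm \kappa/2}$ or $e^{\pm\kappa}$ for each unbalanced creation/annihilation operator — so $[\cG_N^{(j)}, e^{\kappa\cN/2}]$ carries a prefactor of the form $(1 - e^{m\kappa/2})$ for some integer $m$ depending on the number of excitation operators changed, which by Taylor expansion is $O(\kappa)$. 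The terms that do not change the number of excitations (the diagonal quadratic and the quartic interaction $\cL_N^{(4)}$-type pieces, and more generally the number-conserving part of $\cG_N$) commute with $e^{\kappa\cN/2}$ and contribute nothing. For each remaining (off-diagonal) monomial, I would insert $e^{\kappa\cN/2}$ on both sides of $\xi$, write the expectation as $\langle e^{\kappa\cN/2}\xi, (\text{monomial})\, e^{\kappa\cN/2}\xi\rangle$ times the small prefactor, and bound the monomial by $\cH_N + \cN$ (or $\cN + 1$) using the standard creation/annihilation operator estimates together with the decay estimates on $\eta$ and on $N^3 V(N\cdot)$, the same estimates already used in \cite{BSS1} to control $\cG_N$ in form sense. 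The kinetic part of $\cH_N$ is needed to absorb the cubic term $\cL_N^{(3)}$-type contributions exactly as in the original energy estimates.

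The main obstacle is the second point: one must verify that the off-diagonal error terms generated by conjugating $\cG_N$ (not just $\cL_N$) with $e^{B(\eta)}$ — i.e. the full renormalized cubic and quartic terms, plus the commutator remainders $[B(\eta), \cdot]$ from the BCH-type expansion used in \cite{BSS1} — are all bounded by $C(\cH_N + \cN)$ in form sense with a constant uniform in $N$, \emph{and} that this bound survives multiplication on both sides by $e^{\kappa\cN/2}$, which is where one needs the combinatorial fact that every such term changes the excitation number by a bounded amount, so that $e^{\kappa\cN/2}$ commutes through it up to a bounded multiplicative constant; the latter forces $\kappa$ to be small. Concretely, one checks that for a monomial $M$ changing $\cN$ by $d \in \{1,2,3,4\}$, $e^{\kappa\cN/2} M e^{\kappa\cN/2} = e^{\kappa d/2} M e^{\kappa\cN}$ (as quadratic forms), so $\langle\xi, e^{\kappa\cN/2}[\cG_N^{(j)}, e^{\kappa\cN/2}]\xi\rangle$ reduces to a bounded-constant multiple of $\kappa\,\langle\xi, (\cH_N+\cN)e^{\kappa\cN}\xi\rangle$. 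Since these form bounds for the individual pieces of $\cG_N$ are precisely what is established in \cite{BSS1} (this is why the proposition is stated ``within the framework of \cite{BSS1}''), the detailed verification is deferred to Section \ref{sec:GN}; here I only organize which pieces require the kinetic energy $\cH_N$ for their control and which are bounded by $\cN$ alone.
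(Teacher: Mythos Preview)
Your outline is essentially the same strategy as the paper's: for \eqref{eq:GNbnd1} you combine a structural bound $\cG_N - N\cE_\text{GP}(\ph_\text{GP}) \geq \tfrac12 \cH_N - C(\cN+1)$ (coming from the explicit decomposition of $\cG_N$) with the coercivity bound \eqref{eq:coerc} transported to the excitation picture, and interpolate; for \eqref{eq:GNbnd2} you expand $\cG_N$ term by term and extract an $O(\kappa)$ prefactor from each commutator with $e^{\kappa\cN/2}$. The paper does exactly this in Section~\ref{ProofOfGNGoal}.

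There is, however, a genuine gap in your treatment of the commutator bound. You assert that ``every such term changes the excitation number by a bounded amount'' and that the required form bounds are ``precisely what is established in \cite{BSS1}''. Neither is correct. The generalized Bogoliubov transformation is not an exact Bogoliubov transformation: its action on $b_x$ produces, beyond the explicit $b(\cosh_{\eta,x}) + b^*(\sinh_{\eta,x})$ terms, a remainder $d_{\eta,x}$ which is an \emph{infinite} series of nested commutators with $B(\eta)$, containing contributions that change $\cN$ by arbitrarily large amounts (suppressed by powers of $\|\eta\|$). For these, the pull-through heuristic $e^{\kappa\cN/2} M = M\, e^{\kappa(\cN+d)/2}$ with bounded $d$ simply fails. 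The paper handles this via Lemma~\ref{lemma:error-bounds2}, which gives quantitative bounds on $[e^{\lambda\cN}, d_{\eta,x}]$ and its iterates; these estimates are adapted from \cite{NR} and are \emph{not} contained in \cite{BSS1}, which only proves the unweighted form bounds. So the step where you claim the \cite{BSS1} estimates already suffice is where the argument would break down; the new ingredient is precisely the control of $[e^{\kappa\cN/2}, d_{\eta,x}]$ and related double commutators, and this is the bulk of the work in Section~\ref{sec:GN}.
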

Let us point out that the bound \eqref{eq:GNbnd1} is essentially a direct consequence of \cite[Prop. 3.4]{BSS1} and the lower bound \eqref{eq:coerc}, proved in \cite[Theorem 1.1]{BSS1}. A commutator estimate as in \eqref{eq:GNbnd2} is not proved in \cite{BSS1}, but follows with very similar arguments as presented in great detail in \cite{BS, BBCS1, BBCS2, BBCS3, BBCS4, BSS1, BSS2}. Due its close similarity to the analysis in \cite{BSS1}, we focus in the proof of Prop.\ \ref{prop:GN} on the key steps only and we defer the proof to Section \ref{sec:GN}.

In addition to Prop.\ \ref{prop:GN}, we need the following Lemma. 
\begin{lemma}\label{lm:conj}
Let $V, V_\emph{ext}$ satisfy Assumption \ref{ass:V} and let $\eta$ be defined as in \eqref{eq:defeta}, for $\ell >0$ sufficiently small. Then, for every $\kappa >0$ small enough, we have that 
		\[    e^{B(\eta)} e^{\kappa \mathcal{N}} e^{-B(\eta)} \leq C   e^{ 2\kappa \mathcal{N}}. \]
\end{lemma}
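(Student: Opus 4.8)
\textbf{Proof plan for Lemma \ref{lm:conj}.}

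The plan is to differentiate the conjugation in a parameter and set up a Gronwall-type estimate. Concretely, for $s\in[0,1]$ consider the operator-valued function $F(s) = e^{sB(\eta)} e^{\kappa \cN} e^{-sB(\eta)}$ on $\cF_{\bot\ph_\text{GP}}^{\leq N}$; equivalently, to control its quadratic form, fix a normalized $\xi$ and study $f_\xi(s) = \langle \xi, F(s)\xi\rangle$. Then $\tfrac{d}{ds} F(s) = e^{sB(\eta)}\,[B(\eta), e^{\kappa\cN}]\, e^{-sB(\eta)}$, and the task reduces to bounding the commutator $[B(\eta), e^{\kappa\cN}]$ by something of the form $C\kappa\, e^{\kappa\cN}$ (up to lower-order terms), uniformly in $N$. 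First I would compute this commutator explicitly. Since $B(\eta) = \tfrac12\int dxdy\,\eta(x,y)(b_x^*b_y^* - b_xb_y)$ changes the particle number by $\pm 2$, one has the exact identity $[\cN, B_+] = 2B_+$ and $[\cN, B_-] = -2B_-$, where $B_\pm$ are the creation/annihilation parts of $B(\eta)$; hence $e^{\kappa\cN} B(\eta) e^{-\kappa\cN} = e^{2\kappa}B_+ - e^{-2\kappa}B_- $, so that $[B(\eta), e^{\kappa\cN}] = \big((1-e^{2\kappa})B_+ + (1-e^{-2\kappa})B_-\big)e^{\kappa\cN}$, and a first-order Taylor expansion of $1-e^{\pm2\kappa}$ produces the crucial factor of $O(\kappa)$.

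Next I would bound the operator $\big((1-e^{2\kappa})B_+ + (1-e^{-2\kappa})B_-\big)e^{\kappa\cN}$ as a quadratic form. Using the standard estimates $\|b(f)\psi\|, \|b^*(f)\psi\| \leq \|f\|\,\|(\cN+1)^{1/2}\psi\|$ together with $\|\eta\| \leq C\ell^{\alpha/2}$ from Lemma \ref{lm:bndseta}, one gets $\pm\big(B_+ e^{\kappa\cN} + \text{h.c.}\big) \leq C\|\eta\|\,(\cN+1)e^{\kappa\cN}$ and similarly for $B_-$, where I would take care to commute $e^{\kappa\cN/2}$ past the $b$'s — this only produces further bounded factors $e^{\pm\kappa}$ since $b_x^*b_y^*$ shifts $\cN$ by $2$. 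Combining, $|\langle\xi, e^{sB(\eta)}[B(\eta),e^{\kappa\cN}]e^{-sB(\eta)}\xi\rangle| \leq C\kappa\,\langle\xi, e^{sB(\eta)}(\cN+1)e^{\kappa\cN}e^{-sB(\eta)}\xi\rangle$, and since $e^{sB(\eta)}$ roughly preserves the number operator (again by \cite[Lemma 3.2]{BSS1}-type bounds, $e^{-sB(\eta)}(\cN+1)e^{sB(\eta)} \leq C(\cN+1)$ uniformly in $s\in[0,1]$), one can close this back to $C\kappa\,f_\xi(s)$ plus a controlled error. This yields $f_\xi'(s) \leq C\kappa\, f_\xi(s)$, so Gronwall gives $f_\xi(1) \leq e^{C\kappa} f_\xi(0) = e^{C\kappa}\langle\xi, e^{\kappa\cN}\xi\rangle$, which for $\kappa$ small enough is bounded by $\langle\xi, e^{2\kappa\cN}\xi\rangle$ as claimed (since $e^{C\kappa}e^{\kappa t} \leq e^{2\kappa t}$ for $t\geq 1$ once $C\kappa \leq \kappa$, i.e.\ $\kappa\leq 1/C$, handling the $t=0$ eigenvalue separately or noting $e^{C\kappa}\leq e^{2\kappa\cdot 1}$ suffices on the range of $\cN$).

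The main obstacle I anticipate is not conceptual but a matter of bookkeeping: making the manipulations with $e^{\kappa\cN}$ and $e^{sB(\eta)}$ rigorous at the level of quadratic forms rather than bounded operators, since $\cN$ is unbounded on the full Fock space. On the truncated space $\cF_{\bot\ph_\text{GP}}^{\leq N}$ everything is genuinely bounded, which removes domain issues, but the bounds must be uniform in $N$, so one must be careful that no factor of $N$ sneaks in through the $b$-operators or through $e^{sB(\eta)}$ — this is exactly why the $\|\eta\|\leq C\ell^{\alpha/2}$ bound (rather than the crude $\|\eta\|\leq CN$ bound) is essential, and why one works with the cutoff kernel $\eta$ from \eqref{eq:defeta}. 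A secondary technical point is justifying $e^{-sB(\eta)}(\cN+1)e^{sB(\eta)}\leq C(\cN+1)$ uniformly in $s\in[0,1]$; this is itself a Gronwall argument of the same flavour (differentiate in $s$, use $[\cN, B(\eta)] = 2B_+ - 2B_-$ bounded by $C\|\eta\|(\cN+1)$), and is standard in the references \cite{BS, BBCS1, BBCS2, BSS1, BSS2}, so I would cite it rather than reprove it.
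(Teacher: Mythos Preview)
Your Gronwall setup produces a derivative bound of the form
\[
f_\xi'(s)\ \leq\ C\kappa\,\big\langle\xi,\ e^{sB(\eta)}(\cN+1)e^{\kappa\cN}e^{-sB(\eta)}\xi\big\rangle,
\]
and the step ``one can close this back to $C\kappa\, f_\xi(s)$'' does not go through. Writing $\psi_s = e^{-sB(\eta)}\xi$, the right-hand side equals $C\kappa\,\langle\psi_s,(\cN+1)e^{\kappa\cN}\psi_s\rangle$, whereas $f_\xi(s)=\langle\psi_s,e^{\kappa\cN}\psi_s\rangle$; there is no uniform inequality $(\cN+1)e^{\kappa\cN}\leq C\,e^{\kappa\cN}$. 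The bound $e^{-sB(\eta)}(\cN+1)e^{sB(\eta)}\leq C(\cN+1)$ that you invoke is correct but does not help here, since the conjugation acts on the product $(\cN+1)e^{\kappa\cN}$ and the two conjugated factors $e^{sB(\eta)}(\cN+1)e^{-sB(\eta)}$ and $e^{sB(\eta)}e^{\kappa\cN}e^{-sB(\eta)}$ do not commute. Trying $(\cN+1)e^{\kappa\cN}\leq C_\delta\,e^{(\kappa+\delta)\cN}$ only pushes the problem to a larger exponent and the loop never closes.

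The paper's proof (following \cite[Lemma 3.1]{R}) resolves exactly this by varying the exponent of $\cN$ simultaneously with the Bogoliubov parameter: set $\xi_s=e^{\kappa_s\cN}e^{-sB(\eta)}\xi$ with $\kappa_s=\tfrac12(2-s)\kappa$ decreasing from $\kappa$ at $s=0$ to $\kappa/2$ at $s=1$. Then $\|\xi_0\|^2=\langle\xi,e^{2\kappa\cN}\xi\rangle$ is the desired right-hand side, $\|\xi_1\|^2=\langle\xi,e^{B(\eta)}e^{\kappa\cN}e^{-B(\eta)}\xi\rangle$ is the left-hand side, and the $s$-derivative acquires an extra term $-\kappa\,\langle\xi_s,\cN\xi_s\rangle$ from $\kappa_s'<0$. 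This negative contribution absorbs the $+C\|\eta\|\kappa\,\langle\xi_s,\cN\xi_s\rangle$ coming from the commutator with $B(\eta)$ once $\|\eta\|$ is small enough (this is precisely where ``$\ell$ sufficiently small'' enters), leaving $\tfrac{d}{ds}\|\xi_s\|^2\leq C\|\xi_s\|^2$, and now Gronwall closes cleanly. In short, the headroom $2\kappa\to\kappa$ in the exponent is what pays for the unwanted factor of $\cN$ that your argument cannot remove.
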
  
\begin{proof} The proof is a straightforward adaption of the proof of \cite[Lemma 3.1]{R} to the present setting. For completeness, we recall the main steps. Let $\xi \in  \cF_{\bot\ph_\text{GP}}^{\leq N}$ and define 
		\[ [0,1]  \ni s\mapsto  \| \xi_s\|^2 \text{ for } \xi_s = e^{ \kappa _s \cN} e^{-B(\eta)}\xi, \; \kappa _s = \frac12 (s+2(1-s) )\kappa . \]
Then a straightforward calculation (see also \eqref{eq:bounds-b2} for more details) shows that 
		\[\begin{split}
		\frac{d}{ds} \| \xi_s\|^2 &= -\kappa  \langle  \xi_s, \cN \xi_s  \rangle -4 \sinh\big( (2 -s)\kappa  \big) \text{Re} \int dx dy\, \eta(x,y) \langle \xi_s, b^*_x b^*_y\xi_s\rangle\\
		&\leq -\kappa  \langle  \xi_s, \cN \xi_s  2\rangle + 8 \|\eta\| \kappa  \langle  \xi_s, \cN \xi_s  \rangle + C \| \xi_s\|^2 < C \| \xi_s\|^2
		\end{split}\]		
for some $C>0$ independent of all parameters. Note that we used that $\|\eta\|$ is suitably small if $\ell$ is sufficiently small. Gronwall's lemma now implies the claim.
\end{proof}
 
We are now ready to prove Theorem \ref{thm:main}.
\begin{proof}[Proof of Theorem \ref{thm:main}] By Lemma \ref{lm:conj} and the unitary equivalence of $H_N$ with the renormalized excitation Hamiltonian $\cG_N = e^{-B(\eta)}U_N H_N U_N^* e^{B(\eta)}$, it suffices to show that
		\[   \sup_{ \xi \in \cS_\zeta }  \langle \xi, e^{\kappa  \mathcal{N} }   \xi\rangle \leq C, \;\text{ for }\; \;\cS_\zeta = {\bf 1}_{ (-\infty, E_N  + \zeta] } (\cG_N )\big( \cF_{\bot\ph_\text{GP}}^{\leq N}  \big) \cap \big\{\xi\in \cF_{\bot\ph_\text{GP}}^{\leq N}: \|\xi \|=1 \big\}.\]
Proceeding as in the proof of Theorem \ref{thm:mf}, we first claim that for small enough $\kappa$
		\be \label{eq:GPcntrct1}  \sup_{ \xi \in \cS_\zeta }  \langle \xi_N, (\cH_N+\cN) e^{ \kappa\mathcal{N} }  \xi_N\rangle \leq \frac{C +C\zeta^{\frac12}}{1-C\kappa}  \sup_{ \xi \in \cS_\zeta }  \langle \xi, e^{\kappa \mathcal{N}}   \xi\rangle,  \ee
for a constant $C>0$ that is independent of $N$ and $\kappa$. Observe here that $ [\cH_N, \cN]=0$. To prove the claim, notice indeed that the bounds from Proposition \ref{prop:GN} imply that
		\[\begin{split}
		 \sup_{ \xi \in \cS_\zeta }  \langle \xi, (\cH_N+\cN)   e^{ \kappa \mathcal{N}} \xi\rangle & =  \sup_{ \xi \in \cS_\zeta }  \langle \xi, e^{ \frac  \kappa2 \mathcal{N}}  (\cH_N+\cN)  e^{ \frac  \kappa2 \mathcal{N}} \xi\rangle \\
		 &\leq C \sup_{ \xi \in \cS_\zeta }  \langle \xi, e^{ \frac  \kappa2 \cN }(\cG_N - E_N + C  ) e^{ \frac  \kappa2 \cN } \xi\rangle  \\
		 &\leq C \zeta^{\frac12} \sup_{ \xi \in \cS_\zeta } \bigg|\bigg\langle e^{ \frac  \kappa2 \cN }   \xi,  e^{ \frac  \kappa2 \cN }   \frac{ (\cG_N-E_N    ) \xi}{\| (\cG_N-E_N    ) \xi \| } \bigg\rangle \bigg|  \\
		 &\hspace{0.35cm} + C\sup_{ \xi \in \cS_\zeta }  \langle \xi, e^{ \frac  \kappa2 \cN }  [\cG_N,  e^{ \frac  \kappa2 \cN }  ] \xi\rangle+C \sup_{ \xi \in \cS_\zeta }  \langle \xi, e^{   \kappa \cN }  \xi\rangle,\\
		  &\leq (C+C \zeta^{\frac12}) \sup_{ \xi \in \cS_\zeta }  \langle \xi, e^{   \kappa \cN }  \xi\rangle  + C \kappa \sup_{ \xi \in \cS_\zeta }  \langle \xi,  (\cH_N+\cN)   e^{  \kappa \mathcal{N}} \xi\rangle, 
		 \end{split}\]
which implies \eqref{eq:GPcntrct1}. Here, we used in the first inequality that $E_N = N\cE_\text{GP}(\ph_\text{GP})+ O(1)$, which is a direct consequence of the main results of \cite{NNRT, BSS1, NT, BSS2}. 	 

To conclude the theorem, note that \eqref{eq:GPcntrct1} and the positivity of $\cH_N$ imply that
		\[\begin{split}
		  \sup_{ \xi \in \cS_\zeta }  \langle \xi, e^{\kappa \cN }   \xi\rangle=1 +    \sup_{ \xi \in \cS_\zeta } \int_0^\kappa dt\,   \langle \xi, \cN  e^{t \cN }    \xi\rangle& \leq 1 +   \kappa   \sup_{ \xi \in \cS_\zeta }   \langle \xi, (\cH_N + \cN )  e^{\kappa \cN }   \xi\rangle \\
		&\leq  1 +    \frac{ C\kappa +C\zeta^{\frac12} \kappa}{1-C\kappa}   \sup_{ \xi \in \cS_\zeta }   \langle \xi, e^{\kappa \cN }   \xi\rangle.
		\end{split}\]
Choosing $\kappa>0$ sufficiently small, this implies
		\[  \sup_{ \xi \in \cS_\zeta }  \langle \xi, e^{\kappa \cN }   \xi\rangle \leq \frac{1}{1- C' } \hspace{0.5cm} \text{for} \hspace{0.5cm} C' = 1-  \frac{ C\kappa +C\zeta^{\frac12}\kappa}{1-C\kappa} <1.  \]	
\end{proof}

\section{Analysis of $\mathcal{G}_N$}\label{sec:GN}

To analyze the renormalized excitation Hamiltonian $\mathcal{G}_N = e^{-B(\eta)} \mathcal{L}_N e^{B(\eta)}$ as defined in \eqref{def:GN}, we recall from \eqref{eq:cLNj} the splitting of the excitation Hamiltonian $\mathcal{L}_N = U_NH_NU_N^*$  
\[
 \mathcal{L}_N = \mathcal{L}_N^{(0)} + \mathcal{L}_N^{(1)}+\mathcal{L}_N^{(2)}+\mathcal{L}_N^{(3)}+\mathcal{L}_N^{(4)}, 
\]
where, in the sense of quadratic forms on $\mathcal{F}_{\perp\varphi_{\text{GP}}}^{\leq N}$ the single contributions $\mathcal{L}_N^{(j)}$ are ordered w.r.t. to the number of standard resp. modified creation and annihilation operators. 
%\begin{align}
%\mathcal{L}_N^{(0)} &= \left<\varphi_{\text{GP}}, [-\Delta+V_{\text{ext}}+\frac{1}{2}(N^3V(N\cdot)*\varphi_{\text{GP}}^2)]\varphi_{\text{GP}} \right>(N-\mathcal{N}) \nonumber \\
%&\quad -\frac{1}{2}\left<\varphi_{\text{GP}},(N^3V(N\cdot)*\varphi_{\text{GP}}^2)\varphi_{\text{GP}}\right>(\mathcal{N}+1)(1-\mathcal{N}/N) \label{LN0}\\
%\mathcal{L}_N^{(1)} &= \sqrt{N}b((N^3V(N\cdot)*\varphi_{\text{GP}}^2 -8\pi \mathfrak{a} \varphi_{\text{GP}}^2)\varphi_{\text{GP}}) \nonumber\\
%&\quad -\frac{\mathcal{N}+1}{\sqrt{N}}b((N^3V(N\cdot)*\varphi_{\text{GP}}^2)\varphi_{\text{GP}})+\text{h.c.} \label{LN1}\\
%\mathcal{L}_N^{(2)} &= \int dx \left( \nabla_x a_x^*\nabla_x a_x+V_{\text{ext}}(x)a_x^*a_x \right) \nonumber\\
%&\quad + \int dxdy\, N^3V(N(x-y))\varphi_{\text{GP}}^2(y)\left(b_x^*b_x-\frac{1}{N}a_x^*a_x\right) \nonumber\\
%&\quad +\int dxdy\, N^3V(N(x-y))\varphi_{\text{GP}}(x)\varphi_{\text{GP}}(y)\left(b_x^*b_y-\frac{1}{N}a_x^*a_y \right) \nonumber\\
%&\quad + \frac{1}{2}\int dxdy\, N^3V(N(x-y))\varphi_{\text{GP}}(y)\varphi_{\text{GP}}(x)(b_x^*b_y^*+\text{h.c.}) \label{LN2}\\
%\mathcal{L}_N^{(3)} &=\int dxdy\,N^{5/2}V(N(x-y))\varphi_{\text{GP}}(y)(b_x^*a_y^*a_x+\text{h.c.}) \label{LN3}\\
%\mathcal{L}_N^{(4)} &= \frac{1}{2}\int dxdy\, N^2V(N(x-y))a_x^*a_y^*a_ya_x, \label{LN4}
%\end{align}
Based on this splitting, we then write the renormalized excitation Hamiltoninan as a sum of the following five terms 
\[
 \mathcal{G}_N = \sum\limits_{k=0}^4 \mathcal{G}_N^{(k)}\quad\text{with }\quad \mathcal{G}_N^{(k)}= e^{-B(\eta)}\mathcal{L}_N^{(k)}e^{B(\eta)}
\]
and analyze the single terms $\mathcal{G}_N^{(0)},\ldots,\mathcal{G}_N^{(4)}$ in Sections~\ref{sec:GN0}-\ref{sec:GN3} separately. 

%These are based on the observation, first made in \cite{BBCS3}, that there are crucial cancellations between $\mathcal{G}_N^{(1)}$ and $\mathcal{G}_N^{(3)}$ resp. between $\mathcal{G}_N^{(2)}$ and $\mathcal{G}_N^{(4)}$ and the remaining parts satisfy bounds of the form Proposition \ref{prop:GN}. 

Our analysis builds on properties of the generalized Bogoliubov transform, first proved in \cite{BS} and later studied in greater detail for example in \cite{BBCS2, BBCS3, BBCS4, BSS1,BSS2}. In the next Section \ref{sec:bogo-mod} we first briefly summarize useful properties, and then later analyze the operators $\mathcal{G}_N^{(k)}$ in Sections \ref{sec:GN0} to \ref{sec:GN3}.

\subsection{Modified Bogoliubov transformation}
\label{sec:bogo-mod}

The generalized Bogoliubov transformation's action on modified creation and annihilation operators is for any $f \in L^2( \mathbb{R}^3)$ given by 
\begin{align}
\label{eq:action-bogo-mod}
e^{- B( \eta)} b (f) e^{B ( \eta)} = b( \cosh_\eta(f) ) + b^*( \sinh_\eta( f)) + d_{\eta} (f) 
\end{align}
where the first two, leading order, terms are formulated w.r.t. the convergent series  
\[
\cosh_\eta (f) : = \sum_{n =0}^\infty \frac{\eta^{(2n)} (f)}{(2n)!}, \quad \text{and} \quad \sinh_\eta (f) : = \sum_{n =0}^\infty \frac{\eta^{(2n+1)} (f)}{(2n+1)!}, 
\]
whereas $d_{\eta } (f)$ is a sub-leading error of order $O( 1/N)$. 
For the (standard) Bogoliubov transformation, formulated w.r.t. to standard creation and annihilation operators $a^*(g),a(f)$ that satisfy the canonical commutation relations \eqref{eq:ccr}, the action on standard creation and annihilation operators of the form \eqref{eq:action-bogo-mod} is exact without any error. The generalized Bogoliubov transformation \eqref{eq:eB}, however, is formulated w.r.t. to modified creation and annihilation operators whose commutation relations 
\begin{align}
\label{eq:CCR-mod}
\big[ b(f), b^*(g) \big] = \langle f,g \rangle \bigg( 1- \frac{\mathcal{N}}{N} \bigg) - \frac{a^*(f) a(g)}{N}, \quad \text{and} \quad \big[ b^*(f), b^*(g) \big] = \big[ b(f), b(g) \big] =0 
\end{align}
come, compared with \eqref{eq:ccr}, with a correction of order $O(1/N)$ in the large particle limit yielding an sub-leading error in \eqref{eq:action-bogo-mod}. Similarly to the standard creation and annihilation operators, the modified ones are bounded w.r.t. the number of particles operators: For any $h \in L^2( \mathbb{R}^3)$ we have 
\[
\label{eq:bounds-b1}
\| b(h) \xi \| \leq \| f \| \; \| \mathcal{N}^{1/2} \xi \|, \quad \text{and} \quad \| b^*(h) \xi \| \leq \| f \| \; \| (\mathcal{N} + 1)^{1/2} \xi \| 
\]
for any $\xi \in \mathcal{F}_{\perp \varphi_{\rm GP}}^{\leq N}$. Moreover, 
for ${j\in L^2(\mathbb{R}^3\times\mathbb{R}^3)}$ and ${\sharp_\ell,\sharp_r\in \{\cdot,*\}}$ the operator 
\begin{align}
\label{eq:bounds-b2}
B_{\sharp_\ell,\sharp_r}(j) = \int dx\, b^{\sharp_\ell}(j_x)b_x^{\sharp_r} = \int dxdy\,j^{\overline{\sharp}_\ell}(x,y)b_y^{\sharp_\ell}b_x^{\sharp_r},
\end{align}
satisfies for any ${\xi \in \mathcal{F}^{\leq N}}$ 
\[
||B_{\sharp_\ell,\sharp_r}(j)\psi||\leq C  ||(\mathcal{N}+1)\psi| \begin{cases} ||j||_2+\int |j(x,x)|\,dx &\text{if }\sharp_\ell=\cdot,\sharp_r=*, \\ ||j||_2 & \text{otherwise,}\end{cases} 
\]
for some $C>0$ (see for example \cite[Lemma 2.1]{BS}). 

A key ingredient for our proof is the approximate action \eqref{eq:action-bogo-mod} of the the generalized Bogoliubov transform.  To control the error in \eqref{eq:action-bogo-mod}, we use ideas and concepts introduced first in \cite{BS} where it was observed that the action of the modified Bogoliubov transformation on $b^*(f), b(f)$ can be expressed as an infinite series of nested commutators 
\[
\text{ad}_{B(\eta)}^{(n)}(A) = [B(\eta),\text{ad}_{B(\eta)}^{(n-1)}(A)],\quad  \text{ad}_{B(\eta)}^{(0)}(A)=A \; . 
\]
for any operator $A$. More precisely, \cite[Lemma 3.3]{BS} proves that for any $\eta \in L^2( \mathbb{R}^3 \times L^2( \mathbb{R}^3)$ that is symmetric and has sufficiently small norm $\| \eta \| $, the action of the generalized Bogoliubov transform is given by the infinite sums 
\[
\label{eq:expand1}
e^{-B(\eta)}b(f)e^{B(\eta)} =\sum\limits_{n=0}^\infty \frac{(-1)^n}{n!} \text{ad}_{B(\eta)}^{(n)}(b(f))
\]
resp. 
\[
e^{-B(\eta)}b^*(f)e^{B(\eta)} = \sum\limits_{n=0}^\infty \frac{(-1)^n}{n!}\text{ad}_{B(\eta)}^{(n)}(b^*(f))
\]
and the series of the r.h.s.\ converge absolutely. Therefore, the analysis of the r.h.s.\ of \eqref{eq:action-bogo-mod}, and in particular of the error $d_\eta (f)$, builds on a detailed study of nested commutators $\text{ad}_{B( \eta)}^{(n)} (b^\sharp (f))$ with $\sharp = \cdot, * $ and is carried out in detail in \cite{BS, BBCS2, BBCS3, BBCS4, BSS1, BSS2}. In view of the following, see in particular \cite[Lemma 5.1]{BS} and \cite[Lemma 4.1]{BSS1}. In the next lemma, we summarize properties of the point-wise errors $d_{\eta,x}^*, d_{\eta,x}$ (interpreted in operator-valued distributional sense) defined through 
\begin{align}
e^{- B( \eta)} b_x^* e^{B( \eta)} =  b^*( \cosh_{\eta,x} ) + b( \sinh_{\eta,x} ) + d_{\eta, x}^* \label{def:dxstar}
\end{align}
resp.
\begin{align}
e^{- B( \eta)} b_x e^{B( \eta)} =  b( \cosh_{\eta,x} ) + b^*( \sinh_{\eta,x} ) + d_{\eta, x} \label{def:dx}
\end{align}
with the  notation $ \cosh_{\eta,x} (y) = \cosh_{\eta} (y,x)$ resp.  $ \sinh_{\eta,x} (y) = \sinh_{\eta} (y,x)$. 

\begin{lemma}
\label{lemma:error-bounds1}
Let $ x \in \mathbb{R}^3$, $n \in \mathbb{Z}$ and $d_{\eta,x}^*, d_{\eta,x}$ be defined by \eqref{def:dxstar} resp. \eqref{def:dx}. Then there exists a constant $C_n>0$ such that for every $\xi \in \mathcal{F}_{\perp \varphi_{\rm GP}}^{\leq N}$
\begin{align}
\| ( \mathcal{N} + 1)^{n/2} d_{\eta,x} \xi \| \leq& \frac{C_n}{N} \bigg[ \| b_x ( \mathcal{N} + 1)^{(n+2)/2} \xi \| + \| \eta_x \| \; \| ( \mathcal{N} + 1)^{(n+3)/2} \xi \| \bigg] 
\end{align}
%\| ( \mathcal{N} + 1)^{n/2} d^*_{\eta,x} \xi \| \leq& \frac{C_n}{N} \| ( \mathcal{N} + 1)^{(n+3)/2} \xi \| \; . 
Furthermore, 
\begin{align}
\| ( \mathcal{N} &  + 1)^{n/2} b_x d_{\eta, y}  \xi \| \notag \\
\leq& \frac{C_n }{N} \bigg[ \| \eta \|^2 \| ( \mathcal{N} + 1)^{(n+4)/2} \xi \| + \vert \eta (x;y) \vert \; \| \eta \| \; \| ( \mathcal{N} + 1)^{(n+4)/2}  \xi \| \notag \\
& \quad \quad \quad + \| \eta \| \; \|a_x ( \mathcal{N} + 1)^{(n+3)/2}  \xi \| + \| \eta \|^2 \| a_xa_y ( \mathcal{N} + 1)^{(n+2)/2}  \xi \| \bigg]  \; , 
\end{align}
and 
\begin{align}
\|   ( \mathcal{N}  & + 1)^{n/2} d_{ \eta, x}d_{ \eta, y} \xi \| \notag \\
\leq& \frac{C_n }{N^2} \bigg[ \| \eta \|^2 \| ( \mathcal{N} + 1)^{n+6)/2} e^{\lambda \mathcal{N}} \xi \| +  \vert  \eta(x;y) \vert \; \| \eta \| \| ( \mathcal{N} + 1)^{(n+4)/2}  \xi \| \notag \\
& \quad \quad \quad  + \| \eta \|^2 \| a_x ( \mathcal{N} +1)^{(n+5)/2} \xi \| + \| \eta \|^2 \| a_y \| ( \mathcal{N} + 1)^{(n+4)/2} \xi \| \notag \\
& \quad \quad \quad + \| \eta \|^2 \| a_xa_y( \mathcal{N} + 1)^{(n+4)/2}  \xi \| \bigg] \; . 
\end{align}
\end{lemma}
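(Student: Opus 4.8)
The plan is to follow the by-now standard analysis of the generalized Bogoliubov transformation from \cite{BS}, as refined in \cite{BBCS2,BBCS3,BBCS4,BSS1,BSS2}, adapted to the kernel $\eta$ of \eqref{eq:defeta}. Since $\eta$ differs from the kernel $\eta_H$ used in \cite{BSS1} only by the bounded prefactor $Q\otimes Q$, none of the norm bounds collected in Lemma \ref{lm:bndseta} is affected, and neither are any of the steps below. The starting point is the absolutely convergent expansion
\[
e^{-B(\eta)}b_x e^{B(\eta)} = \sum_{m=0}^{\infty}\frac{(-1)^m}{m!}\,\text{ad}_{B(\eta)}^{(m)}(b_x)
\]
(valid because $\|\eta\|$ is small, cf.\ \cite[Lemma 3.3]{BS}), together with the analogous expansion for $b_x^*$. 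Evaluating the nested commutators $\text{ad}_{B(\eta)}^{(m)}(b_x)$ while replacing each modified commutator in \eqref{eq:CCR-mod} by its canonical counterpart \eqref{eq:ccr} resums exactly to $b(\cosh_{\eta,x})+b^*(\sinh_{\eta,x})$; by definition, $d_{\eta,x}$ collects all remaining monomials. These are of two kinds: those in which at least one $O(1/N)$ correction $\mathcal{N}/N$ or $a^*a/N$ from \eqref{eq:CCR-mod} has been produced, and those coming from commuting $B(\eta)$ with the extra annihilation operator implicit in $b_x=\sqrt{(N-\mathcal{N})/N}\,a_x$. In either case every surviving monomial carries one explicit factor $1/N$.

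Concretely, I would invoke the normal-ordered representation of $\text{ad}_{B(\eta)}^{(m)}(b_x^\sharp)$ from \cite[Lemma 5.1]{BS}, in the form used in \cite[Lemma 4.1]{BSS1}: up to the two leading pieces, each $\text{ad}_{B(\eta)}^{(m)}(b_x)$ is a sum of at most $C^m$ operators given by $N^{-1}$ times a product of finitely many factors $b^\sharp(g)$ and $B_{\sharp_\ell,\sharp_r}(j)$ as in \eqref{eq:bounds-b2}, together with exactly one of $b_x$, $a_x$, $b^\sharp(\eta_x)$, $b^\sharp(\eta^{(k)}_x)$, the kernels $g,j$ and $\eta^{(k)}_x$ involving $m$ copies of $\eta$ in total. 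For such a monomial applied to $\xi$ I move all powers of $\mathcal{N}+1$ to the right (at the cost of harmless $\mathcal{N}$-shifts, since $\mathcal{N}\le N$), apply the operator bounds \eqref{eq:bounds-b1} and \eqref{eq:bounds-b2}, and bound the $\eta$-norms through Lemma \ref{lm:bndseta}, i.e.\ $\|\eta\|\le C\ell^{\alpha/2}$, $\|\eta_x\|\le C\ell^{\alpha/2}|\pn(x)|$, $\|\nabla_1\eta\|,\|\nabla_2\eta\|\le C\sqrt N$. This produces for the $m$-th term a bound schematically of the form
\[
\frac{C^m\|\eta\|^{\,m-1}}{N}\Big[\,\|b_x(\mathcal{N}+1)^{(n+2)/2}\xi\|+\|\eta_x\|\,\|(\mathcal{N}+1)^{(n+3)/2}\xi\|\,\Big],
\]
and, $\|\eta\|$ being small, the sum over $m$ converges geometrically and gives the first estimate of the Lemma (the case of general $n\in\mathbb{Z}$ is identical, since $(\mathcal{N}+1)^{n/2}$ commutes with $\mathcal{N}$).

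For the products $b_x d_{\eta,y}\xi$ and $d_{\eta,x}d_{\eta,y}\xi$ I would run exactly the same scheme, composing the expansion of $d_{\eta,y}$ with an extra factor $b_x$, resp.\ with a second copy of the expansion for $d_{\eta,x}$. Each such composition forces one additional $\eta$-contraction, hence the extra factor $\|\eta\|$ compared with the first bound, and, in the second case, a second explicit factor $1/N$, which gives the $1/N^2$. Performing these contractions one also generates the pointwise quantity $|\eta(x;y)|$ (by contracting $\eta_x$ against $\eta_y$ or against $b_y$/$a_y$) and the mixed operator terms involving $a_x$ and $a_xa_y$; these are precisely the quantities on the right-hand sides of the stated bounds. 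The main obstacle is the bookkeeping: through the nested commutators one must keep simultaneous track of the number of $\eta$-factors in each monomial (to obtain a convergent geometric series in $m$) and of its pointwise $x$- and $y$-dependence (to keep the distributional estimates sharp and to isolate the $|\eta(x;y)|$ contributions), while verifying that every monomial outside the leading pair indeed carries the claimed power of $1/N$. All of this is carried out in detail in \cite{BS,BBCS2,BBCS3,BBCS4,BSS1,BSS2}; since replacing $\eta_H$ by $\eta$ affects none of the relevant bounds, I would only indicate how those computations transfer and refer there for the remaining details.
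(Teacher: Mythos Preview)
Your proposal is correct and follows exactly the approach the paper takes: the paper does not give an independent proof of this lemma but simply refers to the nested-commutator analysis of \cite[Lemma~5.1]{BS} and \cite[Lemma~4.1]{BSS1} (together with \cite{BBCS2,BBCS3,BBCS4,BSS2}), noting that the only change from $\eta_H$ to $\eta$ is the harmless $Q\otimes Q$ prefactor. If anything, your outline is more detailed than what the paper provides.
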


The proof of Proposition \ref{prop:GN} requires not only bounds of the errors $d_{\eta,x}, d_{\eta,x}^*$ as in the previous Lemma, but furthermore similar bounds for commutators of $d_{\eta,x}, d_{\eta,x}^*$ with $e^{\lambda \mathcal{N}}$. We collect these error bounds in the lemma below whose proof goes bac to {NR}.

\begin{lemma}
\label{lemma:error-bounds2} Let $x,y \in \mathbb{R}^3 $, $n \in \mathbb{Z}$  and $d_{\eta,x}, d_{\eta,x}^*$ be defined by \eqref{def:dxstar} resp. \eqref{def:dx}. For sufficiently small $\| \eta \|$ and $0<\lambda <1$, there exists $C>0$ such that 
\[
\| ( \mathcal{N} + 1)^{n/2} \big[ e^{\lambda \mathcal{N}}, d_{\eta,x} \big] \xi \| \leq \frac{C \lambda }{N} \bigg[ \| b_x ( \mathcal{N} + 1)^{(n+2)/2} e^{\lambda \mathcal{N}} \xi \| + \| \eta_x \| \; \| ( \mathcal{N}+ 1)^{(n+3)/2} e^{\lambda \mathcal{N}} \xi \| \bigg] \; , 
\]
and 
\begin{align}
\| ( \mathcal{N}  & + 1)^{n/2}   e^{-\lambda \mathcal{N}} \big[ e^{\lambda \mathcal{N}}, \big[ e^{\lambda \mathcal{N}}, d_{\eta,x} \big] \big]  \xi \|\notag \\
 \leq &  \frac{C \lambda^2 }{N} \bigg[ \| b_x ( \mathcal{N} + 1)^{(n+2)/2} e^{\lambda \mathcal{N}} \xi \| + \| \eta_x \| \; \| ( \mathcal{N}+ 1)^{(n+3)/2} e^{\lambda \mathcal{N}} \xi \| \bigg] \; . 
\end{align}
Furthermore, 
\begin{align}
\| ( \mathcal{N} &  + 1)^{n/2} \big[ e^{\lambda \mathcal{N}}, b_x d_y \big] \xi \| \notag \\
\leq& \frac{C\lambda}{N} \bigg[ \| \eta \|^2 \| ( \mathcal{N} + 1)^{(n+4)/2} e^{\lambda \mathcal{N}} \xi \| + \vert \eta (x;y) \vert \; \| \eta \| \; \| ( \mathcal{N} + 1)^{(n+4)/2} e^{\lambda \mathcal{N}} \xi \| \notag \\
& \quad \quad \quad + \| \eta \| \; \|a_x ( \mathcal{N} + 1)^{(n+3)/2} e^{\lambda \mathcal{N}} \xi \| + \| \eta \|^2 \| a_xa_y ( \mathcal{N} + 1)^{(n+2)/2} e^{\lambda \mathcal{N}} \xi \| \bigg]  \; , 
\end{align}
and  
\begin{align}
\| ( \mathcal{N} &  + 1)^{n/2}e^{-\lambda \mathcal{N}} \big[ e^{\lambda \mathcal{N}} \big[ e^{\lambda \mathcal{N}}, b_x d_y \big] \xi \| \notag \\
\leq& \frac{C\lambda^2}{N} \bigg[ \| \eta \|^2 \| ( \mathcal{N} + 1)^{(n+4)/2} e^{\lambda \mathcal{N}} \xi \| + \vert \eta (x;y) \vert \; \| \eta \| \; \| ( \mathcal{N} + 1)^{(n+4)/2} e^{\lambda \mathcal{N}} \xi \| \notag \\
& \quad \quad \quad + \| \eta \| \; \|a_x ( \mathcal{N} + 1)^{(n+3)/2} e^{\lambda \mathcal{N}} \xi \| + \| \eta \|^2 \| a_xa_y ( \mathcal{N} + 1)^{(n+2)/2} e^{\lambda \mathcal{N}} \xi \| \bigg]  \; \; , 
\end{align}
as well as 
\begin{align}
\|   ( \mathcal{N}  & + 1)^{n/2} \big[ e^{\lambda \mathcal{N}}, d_xd_y \big] \xi \| \notag \\
\leq& \frac{C \lambda}{N^2} \bigg[ \| \eta \|^2 \| ( \mathcal{N} + 1)^{n+6)/2} e^{\lambda \mathcal{N}} \xi \| +  \vert  \eta(x;y) \vert \; \| \eta \| \| ( \mathcal{N} + 1)^{(n+4)/2} e^{\lambda \mathcal{N}} \xi \| \notag \\
& \quad \quad \quad  + \| \eta \|^2 \| a_x ( \mathcal{N} +1)^{(n+5)/2} e^{\lambda \mathcal{N}} \xi \| + \| \eta \|^2 \| a_y \| ( \mathcal{N} + 1)^{(n+4)/2} e^{\lambda \mathcal{N}} \xi \| \notag \\
& \quad \quad \quad + \| \eta \|^2 \| a_xa_y( \mathcal{N} + 1)^{(n+4)/2} e^{\lambda \mathcal{N}} \xi \| \bigg] \; . 
\end{align}
and 
\begin{align}
\|   ( \mathcal{N}  & + 1)^{n/2} e^{-\lambda \mathcal{N}} \big[ e^{\lambda \mathcal{N}}, \; \big[ e^{\lambda \mathcal{N}}, d_xd_y \big] \xi \| \notag \\
\leq& \frac{C \lambda^2}{N^2} \bigg[ \| \eta \|^2 \| ( \mathcal{N} + 1)^{n+6)/2} e^{\lambda \mathcal{N}} \xi \| +  \vert  \eta(x;y) \vert \; \| \eta \| \| ( \mathcal{N} + 1)^{(n+4)/2} e^{\lambda \mathcal{N}} \xi \| \notag \\
& \quad \quad \quad  + \| \eta \|^2 \| a_x ( \mathcal{N} +1)^{(n+5)/2} e^{\lambda \mathcal{N}} \xi \| + \| \eta \|^2 \| a_y \| ( \mathcal{N} + 1)^{(n+4)/2} e^{\lambda \mathcal{N}} \xi \| \notag \\
& \quad \quad \quad + \| \eta \|^2 \| a_xa_y( \mathcal{N} + 1)^{(n+4)/2} e^{\lambda \mathcal{N}} \xi \| \bigg] \; . 
\end{align}
\end{lemma}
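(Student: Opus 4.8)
The plan is to deduce each of the commutator bounds from the corresponding error estimate of Lemma~\ref{lemma:error-bounds1}, following the strategy of \cite{NR} and adapting it to the kernel $\eta$ from \eqref{eq:defeta}. The key structural fact is that, by the expansions recorded in \eqref{eq:action-bogo-mod}, \eqref{def:dxstar}, \eqref{def:dx} and the analysis behind them (see \cite[Lemma 3.3, Lemma 5.1]{BS} and \cite[Lemma 4.1]{BSS1}), the pointwise errors $d_{\eta,x}$, $d_{\eta,x}^*$ as well as the products $b_x d_{\eta,y}$ and $d_{\eta,x} d_{\eta,y}$ are absolutely convergent series of nested commutators $\text{ad}_{B(\eta)}^{(n)}(b^{\sharp}_x)$ (or products of two such), and each summand is a monomial in $a,a^*,b,b^*$ times a bounded function of $\mathcal{N}$ -- in particular it is homogeneous with respect to the number operator, i.e.\ it satisfies $\mathcal{N} A = A(\mathcal{N} + j)$ for some $j \in \mathbb{Z}$. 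Since $B(\eta)$ shifts $\mathcal{N}$ by $\pm 2$ and $b_x$ shifts it by $-1$, a summand coming from $n$ nested commutators has $|j| \le 2n+1$; moreover, on the truncated space $\mathcal{F}_{\perp \varphi_{\rm GP}}^{\le N}$ one has $\mathcal{N} \le N$, so $e^{\lambda \mathcal{N}}$ and all the operators below are bounded.

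For an $\mathcal{N}$-homogeneous operator $A$ with shift $j$, a direct computation gives the exact identities
\begin{align*}
\big[ e^{\lambda \mathcal{N}}, A \big] &= ( e^{\lambda j} - 1 )\, A\, e^{\lambda \mathcal{N}}, \\
e^{-\lambda \mathcal{N}}\big[ e^{\lambda \mathcal{N}}, \big[ e^{\lambda \mathcal{N}}, A \big] \big] &= ( e^{\lambda j} - 1 )^{2} e^{-\lambda j}\, A\, e^{\lambda \mathcal{N}},
\end{align*}
and elementary estimates yield $| e^{\lambda j} - 1 | \le \lambda |j|\, e^{\lambda |j|}$ and $|( e^{\lambda j} - 1 )^{2} e^{-\lambda j}| = 2\big(\cosh(\lambda j) - 1\big) \le \lambda^{2} j^{2}\, e^{\lambda |j|}$. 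Thus the single, respectively double, commutator of $e^{\lambda \mathcal{N}}$ with each summand $A$ equals $A$ acting on $e^{\lambda \mathcal{N}}\xi$, times a scalar of size $\lambda |j|\, e^{\lambda |j|} \lesssim n\lambda$, respectively $\lambda^{2} j^{2}\, e^{\lambda |j|} \lesssim n^{2}\lambda^{2}$, where we used $|j| \le 2n+1$.

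The next step is to apply Lemma~\ref{lemma:error-bounds1} to each summand with $\xi$ replaced by $e^{\lambda \mathcal{N}}\xi$; this is legitimate because $e^{\lambda \mathcal{N}}$ commutes with $\mathcal{N}$ and because $\| b_x e^{\lambda \mathcal{N}}\xi\| = e^{-\lambda}\| e^{\lambda \mathcal{N}} b_x \xi\|$, so the pointwise-in-$x,y$ estimates of that lemma hold verbatim with every $\mathcal{N}$-weight dressed by $e^{\lambda \mathcal{N}}$. Summing over the expansion, and using that the bounds entering the proof of Lemma~\ref{lemma:error-bounds1} come with factorial gains $C^{n}/n!$ and positive powers of $\|\eta\|$, the extra polynomial factors $n$, respectively $n^{2}$, are absorbed provided $\|\eta\|$ and $\lambda$ are small enough, and the series still converge absolutely -- to precisely the right-hand sides of Lemma~\ref{lemma:error-bounds2}, now carrying the overall prefactor $\lambda$, respectively $\lambda^{2}$, and the factors $1/N$, $1/N^2$ inherited from Lemma~\ref{lemma:error-bounds1}. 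The mixed quantities $b_x d_{\eta,y}$ and $d_{\eta,x} d_{\eta,y}$ are handled the same way: one expands $d_{\eta,y}$ (respectively both factors) into $\mathcal{N}$-homogeneous monomials, notes that left-multiplication by $b_x$ (respectively by the second monomial) only changes the $\mathcal{N}$-degree by a further bounded amount, and invokes the corresponding product estimates of Lemma~\ref{lemma:error-bounds1} with $\xi \mapsto e^{\lambda \mathcal{N}}\xi$.

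The conceptual input is entirely contained in the two commutator identities above, and once they are observed the stated bounds follow from the triangle inequality over the expansion exactly as in \cite{NR}. The only genuine work -- and the main, purely technical, obstacle -- is to re-run the convergence analysis of the expansions of $d_{\eta,x}$, $d_{\eta,x}^*$ and their products from \cite{BS, BSS1} while carefully tracking the additional $\mathcal{N}$-degree factors $|j| \le 2n+1$ and the conjugated weights $e^{\lambda \mathcal{N}}$, and checking that neither the polynomial growth in $n$ nor the substitution $\xi \mapsto e^{\lambda \mathcal{N}}\xi$ spoils the absolute convergence; this is precisely what forces $\lambda$ (and $\|\eta\|$, hence $\ell$) to be taken suitably small, as in the statement.
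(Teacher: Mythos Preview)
Your proposal is correct and follows essentially the same approach as the paper. The paper itself does not give a proof of this lemma but simply states that it follows from \cite[Section 2, Lemma 2.4]{NR}, with the only difference being the passage from momentum space on the torus to position space in $\mathbb{R}^3$; your outline of the $\mathcal{N}$-homogeneity argument, the exact commutator identities $[e^{\lambda\mathcal{N}},A]=(e^{\lambda j}-1)Ae^{\lambda\mathcal{N}}$ and $e^{-\lambda\mathcal{N}}[e^{\lambda\mathcal{N}},[e^{\lambda\mathcal{N}},A]]=(e^{\lambda j}-1)^2 e^{-\lambda j}Ae^{\lambda\mathcal{N}}$, and the absorption of the polynomial factors into the factorial convergence of the nested-commutator expansion is precisely the mechanism behind that reference, so you have in fact spelled out more than the paper does.
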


The proof of Lemma \ref{lemma:error-bounds2} follows from the arguments presented in \cite[Section 2, Lemma 2.4]{NR}. We remark that the only difference of \cite[Lemma 2.4]{NR} compared to Lemma \ref{lemma:error-bounds2} above is that \cite{NR} considers translation invariant systems on the unit torus and considers the error estimates on the corresponding discrete momentum space. Here, however, we work in the full space $\mathbb{R}^3$ and the error bounds above are formulated in position space. The arguments used in \cite[Lemma 2.4]{NR} in momentum space, however, apply in position space as well, up to minor technical modifications, and thus yield in Lemma \ref{lemma:error-bounds2}; we skip the details. 

We remark that later in the proof of Proposition \ref{prop:GN} in Section \ref{sec:GN}, we apply Lemma \ref{lemma:error-bounds2} for the choice of $\eta$ given by \eqref{eq:defeta}. In fact, by Lemma \ref{lm:bndseta}, we can then pick $\ell >0$ sufficiently small such that $\| \eta \|$ meets the assumptions of Lemma \ref{lemma:error-bounds2}.

Now we are ready to analyze all the single contributions of the renormalized excitation Hamiltonian $\mathcal{G}_N$ defined in \eqref{def:GN}. As mentioned earlier, the analysis of the different contributions $\cG_N^{(k)} $ is quite similar to that in \cite{BSS1}, up to minor technical modifications. Since the key arguments are standard by now, we focus here only on a detailed analysis of $\cG_N^{(0)}$. Afterwards, we summarize the key properties of $ \cG_N^{(1)}$ to $\cG_N^{(4)}$ without providing detailed proofs. On the one hand, we hope that this illustrates the involved arguments in sufficient detail and, on the other hand, we hope that this avoids the repetition of very similar arguments many times before, e.g.\ in \cite{BS, BBCS1, BBCS2, BBCS3, BBCS4, ABS, BSS1, BSS2}.

\subsection{Analysis of $\mathcal{G}_N^{(0)}$}\label{GN0-analysation}
\label{sec:GN0}

In this section we study $\mathcal{G}_N^{(0)} = e^{-B(\eta)} \mathcal{L}_N^{(0)} e^{B( \eta)}$ and recall for this the definition of $\mathcal{L}_N^{(0)}$ from \eqref{eq:cLNj}. We have 
\begin{align}
\mathcal{L}_N^{(0)}&= C_N+(C_{N,1}+C_{N,2})\mathcal{N}+C_{N,3}\frac{\mathcal{N}^2}{N} \label{decompositionLn0} 
\end{align}
with 
\begin{align}
C_N &:=N\left<\varphi_{\text{GP}}, [-\Delta+V_{\text{ext}}+\frac{1}{2}(N^3V(N\cdot)*\varphi_{\text{GP}}^2)]\varphi_{\text{GP}} \right> \nonumber\\
&\quad -\frac{1}{2}\left<\varphi_{\text{GP}},(N^3V(N\cdot)*\varphi_{\text{GP}}^2)\varphi_{\text{GP}}\right>, \label{CN-GN0-def}\\
C_{N,2} &:= - \left<\varphi_{\text{GP}}, [-\Delta+V_{\text{ext}}+\frac{1}{2}(N^3V(N\cdot)*\varphi_{\text{GP}}^2)]\varphi_{\text{GP}} \right>, \nonumber\\
C_{N,3} &:= -\left(1-\frac{1}{N}\right)\frac{1}{2}\left<\varphi_{\text{GP}},(N^3V(N\cdot)*\varphi_{\text{GP}}^2)\varphi_{\text{GP}}\right>, \nonumber\\
C_{N,4} &:= \frac{1}{2}\left<\varphi_{\text{GP}},(N^3V(N\cdot)*\varphi_{\text{GP}}^2)\varphi_{\text{GP}}\right>.\nonumber
\end{align}

The goal of this Section is to prove the following Proposition that shows that $\mathcal{G}_{N}^{(0)}$ satisfy all necessary properties for Proposition \ref{prop:GN}. 

\begin{proposition}
\label{prop:GN0-prop}
Under the same assumptions as in Proposition \ref{prop:GN}, we have 
\begin{align}
e^{-B(\eta)}\mathcal{L}_N^{(0)}e^{B(\eta)} = C_N+\mathcal{E}_N^{(0)}
\end{align} 
with $C_N$ defined in \eqref{CN-GN0-def} and where $\mathcal{E}_N^{(0)}$ satisfies for all ${\xi\in\mathcal{F}^{\leq N}}$ and $\kappa >0$ small enough 
\begin{align}
\big\vert \big\langle \xi, \mathcal{E}_N^{(0)} \xi\big\rangle\big\vert  & \leq  C  \;  \langle \xi, \; (\mathcal{N} + 1) \xi \rangle  \\
\big\vert \big\langle  \xi, e^{\kappa \mathcal{N}/2} \big[ e^{\kappa \mathcal{N}/2}, \mathcal{E}_N^{(0)} \big]  \xi\big\rangle \big\vert  & \leq C \kappa \;  \langle \xi, \; (\mathcal{N} + 1) e^{\kappa \mathcal{N}} \xi \rangle  
\end{align}
for some constant ${C>0}$.
\end{proposition}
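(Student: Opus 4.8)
## Proof Plan for Proposition \ref{prop:GN0-prop}

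The plan is to exploit the explicit structure \eqref{decompositionLn0} of $\mathcal{L}_N^{(0)}$ as a polynomial in $\mathcal{N}$ with $N$-dependent but bounded coefficients. Since $\mathcal{L}_N^{(0)} = C_N + (C_{N,2}+C_{N,3})\mathcal{N} + C_{N,3}\mathcal{N}^2/N$ (up to a harmless relabelling of the constants), the only nontrivial object to control is the conjugation of $\mathcal{N}$ and $\mathcal{N}^2/N$ by the Bogoliubov transform $e^{B(\eta)}$. First I would record, using the boundedness properties of the Gross-Pitaevskii minimizer $\varphi_{\mathrm{GP}}$ collected in Appendix \ref{app:GP} (in particular $\|\varphi_{\mathrm{GP}}\|_\infty<\infty$ and $\|N^3V(N\cdot)*\varphi_{\mathrm{GP}}^2\|_\infty \leq C\|V\|_1\|\varphi_{\mathrm{GP}}\|_\infty^2$), that all the scalar coefficients $C_{N,2}, C_{N,3}, C_{N,4}$ are bounded uniformly in $N$. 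Hence it suffices to show that $e^{-B(\eta)}\mathcal{N}e^{B(\eta)} = \mathcal{N} + \mathcal{R}_1$ and $e^{-B(\eta)}(\mathcal{N}^2/N)e^{B(\eta)} = \mathcal{N}^2/N + \mathcal{R}_2$ with remainders $\mathcal{R}_1,\mathcal{R}_2$ satisfying both a form bound $|\langle\xi,\mathcal{R}_j\xi\rangle|\leq C\langle\xi,(\mathcal{N}+1)\xi\rangle$ and a commutator bound with $e^{\kappa\mathcal{N}/2}$ of the stated type; the term $\mathcal{N}^2/N$ will in fact contribute a bounded operator since on $\mathcal{F}_{\perp\varphi_{\mathrm{GP}}}^{\leq N}$ one has $\mathcal{N}\leq N$, so $\mathcal{N}^2/N\leq\mathcal{N}$, and its conjugate is likewise $O(\mathcal{N})$ after using $e^{-B(\eta)}\mathcal{N}e^{B(\eta)}\leq C(\mathcal{N}+1)$ (the standard bound on Bogoliubov transforms, e.g.\ \cite[Lemma 2.1]{BS} combined with \cite{BSS1}).

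For the form bound on $\mathcal{R}_1$, I would write $e^{-B(\eta)}\mathcal{N}e^{B(\eta)} - \mathcal{N} = \int_0^1 ds\, e^{-sB(\eta)}[\mathcal{N},B(\eta)]e^{sB(\eta)}$ and compute $[\mathcal{N},B(\eta)] = \int dxdy\,\eta(x,y)(b_x^*b_y^* + b_xb_y)$, which is bounded by $C\|\eta\|(\mathcal{N}+1)$ using \eqref{eq:bounds-b2}; conjugating by $e^{sB(\eta)}$ preserves such a bound up to a constant, again by the $a$ priori control $e^{-sB(\eta)}(\mathcal{N}+1)e^{sB(\eta)}\leq C(\mathcal{N}+1)$. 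The same representation handles $\mathcal{N}^2/N$. For the commutator bound with $e^{\kappa\mathcal{N}/2}$, the strategy mirrors the one in Section \ref{sec:outline} and the arguments of \cite{NR}: since $[\mathcal{N},e^{\kappa\mathcal{N}/2}]=0$, all the $\kappa$-dependence comes from commuting $e^{\kappa\mathcal{N}/2}$ through $B(\eta)$, which shifts the particle number and produces factors $(1-e^{\pm\kappa})=O(\kappa)$ by a first-order Taylor expansion. Concretely, $e^{\kappa\mathcal{N}/2}b_x^*b_y^* = b_x^*b_y^* e^{\kappa\mathcal{N}/2} e^{\kappa}$ (and the annihilation pair gets $e^{-\kappa}$), so $[e^{\kappa\mathcal{N}/2},[\mathcal{N},B(\eta)]]$ carries an explicit $O(\kappa)$ prefactor; combined with \eqref{eq:bounds-b2} and Lemma \ref{lm:conj} to move the surviving exponential weights past $e^{\pm sB(\eta)}$, this yields $|\langle\xi,e^{\kappa\mathcal{N}/2}[e^{\kappa\mathcal{N}/2},\mathcal{E}_N^{(0)}]\xi\rangle|\leq C\kappa\langle\xi,(\mathcal{N}+1)e^{\kappa\mathcal{N}}\xi\rangle$.

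The main obstacle I anticipate is the bookkeeping of exponential weights in the commutator estimate: one must carefully track how each factor $e^{\kappa\mathcal{N}/2}$ or $e^{\kappa\mathcal{N}/4}$ interacts both with the modified $b$-operators (which shift $\mathcal{N}$ by $\pm 2$) and with the interpolating transformations $e^{\pm sB(\eta)}$ in the Duhamel formula, using Lemma \ref{lm:conj} to absorb the mismatch at the cost of replacing $e^{\kappa\mathcal{N}}$ by $e^{2\kappa\mathcal{N}}$ and then renaming $\kappa$. This is routine but requires care to ensure all weights end up symmetrically distributed as $e^{\kappa\mathcal{N}/2}\cdot e^{\kappa\mathcal{N}/2}$ acting on $\xi$ on each side, so that the right-hand side has precisely the claimed form $\langle\xi,(\mathcal{N}+1)e^{\kappa\mathcal{N}}\xi\rangle$ with no loss. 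Since $\mathcal{L}_N^{(0)}$ involves only the (smooth, bounded) mean-field-type potential $N^3V(N\cdot)*\varphi_{\mathrm{GP}}^2$ rather than the singular $N^2V(N\cdot)$, no short-scale analysis or kinetic-energy renormalization is needed here, which is why this term is genuinely the easiest of $\mathcal{G}_N^{(0)},\dots,\mathcal{G}_N^{(4)}$.
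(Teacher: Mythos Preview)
Your treatment of the first bound on $\mathcal{E}_N^{(0)}$ is correct and matches the paper: the scalar coefficients $C_{N,j}$ are uniformly bounded, the Duhamel formula reduces matters to $e^{-sB(\eta)}[\mathcal{N},B(\eta)]e^{sB(\eta)}$, and the standard estimate $e^{-sB(\eta)}(\mathcal{N}+1)e^{sB(\eta)}\leq C(\mathcal{N}+1)$ closes it.

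For the commutator bound your route diverges from the paper's and has a real gap. After Duhamel you must control
\[
\big\langle\xi,\,e^{\kappa\mathcal{N}/2}\big[e^{\kappa\mathcal{N}/2},\,e^{-sB(\eta)}\big(\textstyle\int\eta\,(b^*b^*+bb)\big)e^{sB(\eta)}\big]\xi\big\rangle .
\]
You propose to commute $e^{\kappa\mathcal{N}/2}$ with the bare pair $b^*b^*+bb$ (which indeed produces $e^{\pm\kappa}-1=O(\kappa)$) and then to pass the remaining exponential weight through the flanking factors $e^{\pm sB(\eta)}$ via Lemma~\ref{lm:conj}. But Lemma~\ref{lm:conj} is an \emph{inequality} that doubles the exponent, $e^{B(\eta)}e^{\kappa\mathcal{N}}e^{-B(\eta)}\leq Ce^{2\kappa\mathcal{N}}$; used here it yields a bound of the form $C\kappa\langle\xi,(\mathcal{N}+1)e^{c\kappa\mathcal{N}}\xi\rangle$ with some $c>1$, not $c=1$ as stated. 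Your ``renaming $\kappa$'' does not fix this, because the same $\kappa$ appears in $e^{\kappa\mathcal{N}/2}$ on the left and in $e^{\kappa\mathcal{N}}$ on the right; it is exactly this matching that lets the commutator term be reabsorbed into $\langle\xi,(\cH_N+\cN)e^{\kappa\cN}\xi\rangle$ in the proof of Theorem~\ref{thm:main}.

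The paper avoids the issue by never commuting $e^{\kappa\mathcal{N}/2}$ with $e^{\pm sB(\eta)}$. Instead it first expands each $e^{-sB(\eta)}b_x^\sharp e^{sB(\eta)}$ via the approximate action \eqref{eq:action-bogo-mod}, \eqref{def:dxstar}--\eqref{def:dx} into explicit $b^\sharp(\cosh_{s\eta})$, $b^\sharp(\sinh_{s\eta})$ terms plus $d_{s\eta,x}^\sharp$ remainders, so that the Duhamel integrand contains no Bogoliubov factors at all. The commutator with $e^{\kappa\mathcal{N}/2}$ is then computed term by term: on the explicit $b$-pieces via the pull-through identities \eqref{eq:pull-through}, and on the $d$-pieces via Lemma~\ref{lemma:error-bounds2}, which is designed precisely to yield the prefactor $\kappa$ without any loss in the exponent. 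This use of the explicit Bogoliubov expansion together with Lemma~\ref{lemma:error-bounds2} is the ingredient missing from your sketch.
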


\begin{proof}
The term $\mathcal{G}_N^{(0)}$ is the sum of multiples of powers of the number of particles operator $\mathcal{N} = \int dx \; a_x^*a_x  $ and thus the first bound of Proposition \ref{prop:GN0-prop} is an immediate consequence of the inequality proven for 
\begin{align} 
\label{eq:bogo-N}
e^{-B(\eta)}(\mathcal{N}+1)^ke^{B(\eta)}&\leq C^k (\mathcal{N}+1)^k
\end{align}
that holds on $\mathcal{F}^{\leq N}$  for any $k \in \mathbb{N}$,proven for example in \cite[Lemma 2.1]{BBCS3}. 

In order to prove the second bound, we start with considering the contribution of $\mathcal{G}_N^{(0)}$ that is a multiple of $\mathcal{N}$. Thus, for this term we need to analyze 
\begin{align}
\label{eq:comm1}
 e^{\kappa \mathcal{N}/2}\big[ e^{\kappa \mathcal{N}/2}, e^{-B( \eta)} \mathcal{N} e^{B( \eta)}  \big]  
\end{align}
and we start with computing the action of the generalized tranform on the number of particles operator. To this end, we write  
\begin{align}
 e^{-B( \eta)} \mathcal{N} e^{B( \eta)}  = \mathcal{N} + \int_0^1 ds \; e^{-sB( \eta)} \big[ B( \eta),  \mathcal{N} \big] e^{sB( \eta)}  
\end{align}
and use the modified CCR in \eqref{eq:CCR-mod} to compute the r.h.s.\ 
\begin{align}
  & e^{-B( \eta)} \mathcal{N} e^{B( \eta)}  \notag \\
 &= \mathcal{N} + \int_0^1 ds \; e^{-sB( \eta)}  \int dxdy \;  \eta(x;y)  \big( b_x^*b_y^* + b_xb_y \big) e^{sB( \eta)}     \; . 
\end{align}
With the Bogoliubov transform's action \eqref{eq:action-bogo-mod} we can further write the integrand as 
\begin{align}
  & e^{-B( \eta)} \mathcal{N} e^{B( \eta)} - \mathcal{N} \notag \\
 &= \int_0^1 ds \; \int dxdy \;  \eta(x;y)  \big( b^* ( \sinh_{s\eta,x} )  b^* ( \sinh_{s\eta,y} )   +b^*( \cosh_{s\eta,x} )  b^* ( \cosh_{s\eta,y} )  +  {\rm h.c.}  \big) \notag \\
 &+ 2 \int_0^1 ds  \;  \int dxdy \;  \eta(x;y)  \big( b^* ( \sinh_{s\eta,x} )  b ( \cosh_{s\eta,y} )   + b^*( \cosh_{s\eta,x} )  b ( \sinh_{s\eta,y} )  \big) \notag \\ 
 &+2 \int_0^1 ds \;  \int dxdy \;  \eta(x;y) \; \big(  \big[ b (   \cosh_{s\eta,y} ),  b^* (   \sinh_{s\eta,x} ) \big] + {\rm h.c.} \big) \notag \\
 &+  \int_0^1 ds \;  \int dxdy \;  \eta(x;y)  \big( b^* ( \sinh_{s\eta,x} ) + b ( \cosh_{s\eta,x} ) \big) d^*_{s\eta, y} + {\rm h.c.}   \notag \\
 &+  \int_0^1 ds \;  \int dxdy \;  \eta(x;y)  \;  d_{s\eta, y}^*  \big( b^* ( \sinh_{s\eta,x} ) + b ( \cosh_{s\eta,x} ) \big) + {\rm h.c.}  \notag \\
  &+  \int_0^1 ds \;  \int dxdy \;  \eta(x;y)  \;  d_{s\eta, y}^*  d_{s\eta,x}^* + {\rm h.c.}  
\end{align}
We recall that we need to control the commutator of $e^{-B( \eta)} \mathcal{N} e^{B( \eta)} $ with $e^{\kappa \mathcal{N}/2}$ that by the previous calculation reduces to computing the commutators of $b_y, b_x^*$ and $d_{s\eta,x}^*, d_{s\eta,x}$ with $e^{\kappa \mathcal{N}/2}$. While for the first, we use (similar as for the standard creation and annihilation operators in Section \ref{sec:outline}) that 
\begin{align}
\mathcal{N} b^*_x = b_x^* ( \mathcal{N} + 1), \quad \text{and} \quad \mathcal{N} b_x = b_x ( \mathcal{N} - 1) \label{eq:pull-through}, 
\end{align}
we use Lemma \ref{lemma:error-bounds2} to control the commutators with the errors $d_{s\eta,x}^*, d_{s, \eta,x}$. To be more precise, we write 
\begin{align}
& e^{\kappa \mathcal{N}/2}  \big[  e^{\kappa \mathcal{N}/2}, e^{-B( \eta)} \mathcal{N} e^{B( \eta)} \big]  \notag \\
=& \big( e^{\kappa} - 1 \big)e^{\kappa \mathcal{N}/2} \int_0^1 ds \notag \\
& \hspace{1cm} \times \; \int dxdy \;  \eta(x;y)  \big( b^* ( \sinh_{s\eta,x} )  b^* ( \sinh_{s\eta,y} )   +b^*( \cosh_{s\eta,x} )  b^* ( \cosh_{s\eta,y} )  \big) e^{\kappa \mathcal{N}/2} \notag \\
&+ \big( e^{-\kappa} - 1 \big) e^{\kappa \mathcal{N}/2} \int_0^1 ds  \notag \\
& \hspace{1cm} \times \; \int dxdy \;  \eta(x;y)  \big( b ( \sinh_{s\eta,x} )  b ( \sinh_{s\eta,y} )   +b( \cosh_{s\eta,x} )  b ( \cosh_{s\eta,y} )  \big) e^{\kappa \mathcal{N}/2} \notag \\
&+ e^{\kappa \mathcal{N}/2} \int_0^1 ds   \int dxdy \;  \eta(x;y)  \big( e^{ \kappa/2} b^* ( \sinh_{s\eta,x} ) +e^{ -\kappa/2}  b ( \cosh_{s\eta,x} ) \big) \big[ e^{\kappa \mathcal{N}/2}, d^*_{s\eta, y} \big] \notag \\
& \hspace{12cm}  - {\rm h.c.}   \notag \\
&+ e^{\kappa \mathcal{N}/2} \int_0^1 ds    \int dxdy \;  \eta(x;y)  \big( (e^{ \kappa/2}-1)  b^* ( \sinh_{s\eta,x} ) \notag \\
& \hspace{6cm}  +( e^{ -\kappa/2} -1)  b ( \cosh_{s\eta,x} ) \big)  d^*_{s\eta, y} e^{\kappa \mathcal{N}/2}    - {\rm h.c.}   \notag \\
 &+ e^{\kappa \mathcal{N}/2}\int_0^1 ds \; \int dxdy \;  \eta(x;y) \big[ e^{\kappa \mathcal{N}/2}, d^*_{s\eta, y} \big] \big( e^{ \kappa/2} b^* ( \sinh_{s\eta,x} ) +e^{ \kappa/2}  b ( \cosh_{s\eta,x} ) \big) \notag \\
 & \hspace{12cm}   - {\rm h.c.}   \notag \\
&+ e^{\kappa \mathcal{N}/2} \int_0^1 ds \;   \int dxdy \;  \eta(x;y) d^*_{s\eta, y}  \big( (e^{ \kappa/2}-1)  b^* ( \sinh_{s\eta,x} ) +( e^{- \kappa/2} -1)  b ( \cosh_{s\eta,x} ) \big)    \notag \\
& \hspace{12cm}  - {\rm h.c.}   \notag \\
 &+e^{\kappa \mathcal{N}/2} \int_0^1 ds \;  \int dxdy \;  \eta(x;y)  \;\big[ e^{\kappa \mathcal{N}/2},  d_{s\eta, y}^*  d_{s\eta,x}^* \big] - {\rm h.c.}    \label{eq:comm2}
\end{align}
and estimate in the following all terms of the r.h.s.\ of \eqref{eq:comm2} separately. We start with the first one for which we introduce the notation 
\begin{align}
H_{s} (w;z) := \eta (x;y) \big( \sinh_{\eta} (z;y) \sinh_{\eta} (w;x) + \cosh_{\eta} (z;y) \cosh_{\eta} (w;x) \big)  \; . 
\end{align}
Then, using that $\vert e^{\kappa} - 1 \vert \leq C \kappa $ for all $0 \leq \kappa \leq 1$, Eq. \eqref{eq:pull-through} and  the bound \eqref{eq:bounds-b2}, we find 
\begin{align}
\vert \langle \xi,  & \big( e^{\kappa} - 1 \big)e^{\kappa \mathcal{N}/2} \int_0^1 ds \; \int dzdw \;  H_s (z;w)  b_z^* b_w^*  e^{\kappa \mathcal{N}/2} \xi \rangle \vert  \notag \\
\leq& C \kappa \int_0^1 ds  \| H_s \|_2 \| (\mathcal{N} + 1)^{1/2} e^{\kappa \mathcal{N}/2} \xi \| \; \| ( \mathcal{N} + 1)^{1/2} e^{\kappa \mathcal{N}/2} \xi \| \; . 
\end{align}
Since 
\begin{align}
\label{eq:decomp-cosh}
\cosh_{\eta} (x;z) = \delta (x-z) + {\rm r}_{s\eta} (x;z), \quad \text{and} \quad \| {\rm r}_{s\eta} \|_2 \leq C 
\end{align}
 by Lemma \ref{lm:bndseta} for all $0 \leq s \leq 1$, we get 
\begin{align}
\| H_s \|_2 \leq C 
\end{align}
for a constant $C>0$ that does not depends on $s$ and arrive for the first term of the r.h.s.\ of \eqref{eq:comm2} at 
\begin{align}
\vert \langle \xi,  & \big( e^{\kappa} - 1 \big)e^{\kappa \mathcal{N}/2} \int_0^1 ds \; \int dzdw \;  H_s (z;w)  b_z^* b_w^*  e^{\kappa \mathcal{N}/2} \xi \rangle \vert  \notag \\
\leq& C \kappa \| (\mathcal{N} + 1)^{1/2} e^{\kappa \mathcal{N}/2} \xi \| \; \| ( \mathcal{N} + 1)^{1/2} e^{\kappa \mathcal{N}/2} \xi \|  
\end{align}
that is of the desired form. The second term of the r.h.s.\ of \eqref{eq:comm2} can be bounded similarly. 

For the third term we use the notation 
\begin{align}
h_{s,y} (z) = \int dx \sinh_{s \eta} (z;x) \eta (x;y), \quad \text{and} \quad g_{s,y} (z) = \int dx \cosh_{s \eta} (z;x) \eta (x;y)
\end{align}
and estimate 
\begin{align}
\vert \langle  & \xi, e^{\kappa \mathcal{N}/2}  \int_0^1 ds \;  \int dy \;   \big( e^{ \kappa/2} b^* ( h_{s,y} ) +e^{ -\kappa/2}  b ( g_{s,y} ) \big)  \big[ e^{\kappa \mathcal{N}/2}, d^*_{s\eta, y} \big]  \xi \rangle \notag \\
\leq&  \int_0^1 ds \int dy \; \| ( \mathcal{N} + 1)^{-1/2} \big(  b ( h_{s,y} ) +  b^* ( g_{s,y} ) \big) \big[ e^{\kappa \mathcal{N}/2}, d_{s\eta, y} \big] \xi \| \; \| ( \mathcal{N} + 1)^{1/2} e^{ \kappa \mathcal{N}/2} \xi \| \notag \\ 
&+  \int_0^1 ds \int dy \; \|  e^{-\kappa \mathcal{N}/2} \big[ e^{\kappa \mathcal{N}/2}, \big[ e^{\kappa \mathcal{N}/2}, d_{s\eta, y} \big]\big]  \xi \| \; \| \big( e^{ \kappa/2} b^* ( h_{s,y} ) +e^{ -\kappa/2}  b ( g_{s,y} ) \big)  e^{ \kappa \mathcal{N}/2} \xi \| 
\end{align}
With Lemma \ref{lemma:error-bounds2} and the bounds \eqref{eq:bounds-b1} we find 
\begin{align}
\vert \langle  & \xi, e^{\kappa \mathcal{N}/2}  \int_0^1 ds \;  \int dy \;   \big( e^{ \kappa/2} b^* ( h_{s,y} ) +e^{ -\kappa/2}  b ( g_{s,y} ) \big)  \big[ e^{\kappa \mathcal{N}/2}, d^*_{s\eta, y} \big]  \xi \rangle \notag \\
&\leq \frac{C\kappa}{N}  \int_0^1 ds \;  \int dy \; \big( \| h_{s,y} \| + \| g_{s,y} \| \big)  \bigg(  \| b_y ( \mathcal{N} + 1) e^{\kappa \mathcal{N}/2} \xi \|  + \| \eta_y \| \; \| ( \mathcal{N} + 1)^{3/2} e^{\kappa \mathcal{N}} \xi \| \bigg) \; \notag \\
& \hspace{4cm} \times \| ( \mathcal{N} + 1)^{1/2} e^{\kappa \mathcal{N}/2} \xi \|  \; . 
\end{align}
Recalling the decomposition \eqref{eq:decomp-cosh}, we have from Lemma \ref{lm:bndseta} 
\begin{align}
\| g_{s,y} \| \leq C \| \eta_y \|  , \quad \text{and} \quad \| h_{s,y}
 \| \leq C \| \eta_y \|
\end{align}
and $\| \eta_y \| \in L^2( \mathbb{R}^3)$ by Theorem \ref{thm:gp} so that we conclude since $\mathcal{N} \leq N$ on $\mathcal{F}_{\perp \varphi_{\rm GP}}^{\leq N}$ by Lemma \ref{lemma:error-bounds2} with the desired bound, that is 
\begin{align}
\vert \langle  &   \xi, e^{\kappa \mathcal{N}/2}  \int_0^1 ds  \int dy   \big( e^{ \kappa/2} b^* ( h_{s,y} ) +e^{ -\kappa/2}  b ( g_{s,y} ) \big)  \big[ e^{\kappa \mathcal{N}/2}, d^*_{s\eta, y} \big]  \xi \rangle \notag \\
\leq& C \kappa \langle \xi, ( \mathcal{N} + 1) \xi \rangle \; . 
\end{align}
The hermitian conjugate can be estimated simiarly and we proceed with the forth term of the r.h.s.\ of \eqref{eq:comm2}. With the same notation and similar ideas as before, get with Lemmas \ref{lemma:error-bounds1} 
\begin{align}
 \vert \langle \xi,  &  e^{\kappa \mathcal{N}/2} \int_0^1 ds \int dy \big( (e^{\kappa/2} - 1) b^*(h_{s,y} ) +   (e^{-\kappa/2} - 1)b(g_{s,y} ) ) d^*_{s\eta,y} e^{\kappa \mathcal{N}/2} \xi \rangle \vert \notag \\
 \leq& C \kappa   \| ( \mathcal{N} + 1)^{1/2} e^{\kappa \mathcal{N}/2} \xi \| \; \| ( \mathcal{N} + 1)^{1/2} e^{\kappa \mathcal{N}/2} \xi \| \; . 
\end{align}
A similar bound for the hermitian conjugate, as well as the fifth and sixth term of the r.h.s.\ of \eqref{eq:comm2} follow in the same way. For the seventh term we bound similarly
 %\begin{align}
%& e^{\kappa \mathcal{N}/2}   \int_0^1 ds \;  \int dxdy \;  \eta(x;y)  \;\big[ e^{\kappa \mathcal{N}/2},  d_{s\eta, y}^*  d_{s\eta,x}^* \big] \notag \\
%=&  \int_0^1 ds \;  \int dxdy \;  \eta(x;y)  \;\big[ e^{\kappa \mathcal{N}/2},  d_{s\eta, y}^*  d_{s\eta,x}^* \big] e^{\kappa \mathcal{N}/2} \notag \\
%&+ \int_0^1 ds \;  \int dxdy \;  \eta(x;y)  \;\big[ e^{\kappa \mathcal{N}/2}\big[ e^{\kappa \mathcal{N}/2},  d_{s\eta, y}^*  d_{s\eta,x}^* \big] \big] 
%\end{align}  
\begin{align}
\vert \langle \xi, &  e^{\kappa \mathcal{N}/2}   \int_0^1 ds \;  \int dxdy \;  \eta(x;y)  \;\big[ e^{\kappa \mathcal{N}/2},  d_{s\eta, y}^*  d_{s\eta,x}^* \big]\xi \rangle\vert  \notag \\
\leq& \int_0^1 ds \int dxdy \vert \eta (x;y) \vert \, \|e^{-\kappa \mathcal{N}/2} \big[ e^{\kappa \mathcal{N}/2},  d_{s\eta, y} d_{s\eta,x} \big] \xi \| \; \| e^{\kappa \mathcal{N}/2}  \xi \| \notag \\
\leq & C \kappa \langle \xi, ( \mathcal{N} + 1) e^{\kappa \mathcal{N}} \xi \rangle
%&+ \int_0^1 ds \int dxdy \vert \eta(x;y) \vert \; \|e^{- \kappa \mathcal{N}/2} \big[ e^{\kappa \mathcal{N}/2}\big[ e^{\kappa \mathcal{N}/2},  d_{s\eta, y}^*  d_{s\eta,x}^* \big] \big]  \xi \| \; \| e^{\kappa \mathcal{N}/2} \xi \| \; . 
\end{align}
%With Lemma \ref{lemma:error-bounds2} and \eqref{eq:bounds-b2} we then conclude as $\|\eta \| \leq C$ by Lemma \ref{lm:bndseta} by 
%\begin{align}
%\vert \langle \xi, &  e^{\kappa \mathcal{N}/2}   \int_0^1 ds \;  \int dxdy \;  \eta(x;y)  \;\big[ e^{\kappa \mathcal{N}/2},  d_{s\eta, y}^*  d_{s\eta,x}^* \big]\xi \rangle\vert  \notag \\
%\leq& C \kappa \langle \xi, ( \mathcal{N} + 1) e^{\kappa \mathcal{N}} \xi \rangle 
%\end{align}
and again the hermitian conjugate can be estimated with the same arguments. 
 
This finishes the proof of the desired bound for the first term of the r.h.s.\ of \eqref{eq:comm1}, and the bound for the second term of the r.h.s.\ can be bounded can be proven in the same way. Since $\vert C_{N,i}\vert \leq C $ by Theorem \ref{thm:gp}, the first term of $\mathcal{G}_N^{(0)}$, originating from the first term of $\mathcal{L}_{N}^{(0)}$, that is a multiple of $\mathcal{N}$, follows. The remaining term of $\mathcal{G}_N^{(0)}$, originating from the contribution of $\mathcal{L}_N^{(0)}$, that is quadratic in $\mathcal{N}$, can be estimated in the same way and we omit the details here. 
% and refer to \cite[Section 3.1]{NR} for more details. We remark that the setting in \cite{NR} is different, considering the Bose gas on the unit torus and using slightly different commutator methods. However, the main ideas are similar.  
\end{proof}

\subsection{Analysis of $\mathcal{G}_N^{(1)}$}
\label{sec:GN1}

In this section we show that $\mathcal{G}_N^{(1)} = e^{- B( \eta)} \mathcal{L}_N^{(1)} e^{B (\eta)}$ satisfies Proposition \ref{prop:GN} and recall for this from \eqref{eq:cLNj} 
\begin{align}
\mathcal{L}_N^{(1)} &= \sqrt{N}b((N^3V(N\cdot)*\varphi_{\text{GP}}^2 -8\pi \mathfrak{a} \varphi_{\text{GP}}^2)\varphi_{\text{GP}})-\frac{\mathcal{N}+1}{\sqrt{N}}b((N^3V(N\cdot)*\varphi_{\text{GP}}^2)\varphi_{\text{GP}})+\text{h.c.}  
\end{align}
We furthermore introduce the notation 
\begin{align}
\label{def:hN}
h_N := \big( N^3 V( N \cdot ) \omega ( N \cdot ) * \varphi_{\text{GP}}^2 \big) \varphi_{\text{GP}} \; . 
\end{align} 

\begin{proposition}\label{prop:GN1-prop}
Under the same assumptions as in Proposition \ref{prop:GN}, we have 
\begin{equation}\label{eq:EN1-def}
 e^{-B(\eta)}\mathcal{L}_N^{(1)}e^{B(\eta)} = \sqrt{N} \left[b(\cosh_\eta(h_N))+b^*(\sinh_\eta(h_N))+\text{h.c.} \right] + \mathcal{E}_N^{(1)} 
\end{equation}
with $h_N$ given by \eqref{def:hN} and where $\mathcal{E}_N^{(1)}$ satisfies, for all ${\xi\in\mathcal{F}^{\leq N}}$,
\begin{align}
\big\vert \big\langle \xi, \mathcal{E}_N^{(1)} \xi\big\rangle\big\vert  & \leq  C  \;  \langle \xi, \; (\mathcal{N} + 1) \xi \rangle  \\
\big\vert \big\langle  \xi, e^{\kappa \mathcal{N}/2}\big[ e^{\kappa \mathcal{N}/2}, \mathcal{E}_N^{(1)} \big]  \xi\big\rangle \big\vert  & \leq C \kappa \;  \langle \xi, \; (\mathcal{N} + 1) e^{\kappa \mathcal{N}} \xi \rangle  \end{align}
for some constant ${C>0}$.

\end{proposition}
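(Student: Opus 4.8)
The plan is to conjugate $\mathcal{L}_N^{(1)}$ by $e^{B(\eta)}$ using the explicit action \eqref{eq:action-bogo-mod} of the generalized Bogoliubov transformation on the modified operators $b(f)$, $b^*(f)$, and then to separate out the claimed leading term $\sqrt N[b(\cosh_\eta(h_N))+b^*(\sinh_\eta(h_N))+\mathrm{h.c.}]$ from the error $\mathcal{E}_N^{(1)}$. First I would recall from \eqref{eq:cLNj} that $\mathcal{L}_N^{(1)}$ consists of two pieces: a term $\sqrt N\, b((N^3V(N\cdot)*\varphi_{\rm GP}^2-8\pi\mathfrak a\varphi_{\rm GP}^2)\varphi_{\rm GP})+\mathrm{h.c.}$, and the ``$\mathcal N$-weighted'' term $-\frac{\mathcal N+1}{\sqrt N}b((N^3V(N\cdot)*\varphi_{\rm GP}^2)\varphi_{\rm GP})+\mathrm{h.c.}$. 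The key analytic input, which is standard from the Gross--Pitaevskii scattering analysis (cf.\ Appendix \ref{app:GP} and \cite{BSS1}), is that $N^3V(N\cdot)*\varphi_{\rm GP}^2 - 8\pi\mathfrak a\varphi_{\rm GP}^2 = -\big(N^3V(N\cdot)\omega(N\cdot)*\varphi_{\rm GP}^2\big) + (\text{small})$ in a suitable norm, so that after applying $e^{-B(\eta)}\cdot e^{B(\eta)}$ the leading contribution of the first piece of $\mathcal L_N^{(1)}$ combines with the scattering equation to produce exactly $\sqrt N[b(\cosh_\eta(h_N))+b^*(\sinh_\eta(h_N))+\mathrm{h.c.}]$ with $h_N$ as in \eqref{def:hN}, up to an error controlled by $\mathcal N+1$.

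Concretely, I would proceed as follows. (i) Write $e^{-B(\eta)}b(f)e^{B(\eta)}=b(\cosh_\eta(f))+b^*(\sinh_\eta(f))+d_\eta(f)$ via \eqref{eq:action-bogo-mod} for $f$ running over the relevant one-particle functions; the $d_\eta(f)$ piece is $O(1/N)$ by Lemma \ref{lemma:error-bounds1} and, since $\|f\|$ is bounded uniformly in $N$ (by the $L^2$-bounds on $h_N$ and on $(N^3V(N\cdot)*\varphi_{\rm GP}^2)\varphi_{\rm GP}$ recalled in the appendix), contributes to $\mathcal E_N^{(1)}$ a term bounded by $C\langle\xi,(\mathcal N+1)\xi\rangle$. (ii) For the $\mathcal N$-weighted piece, use the pull-through formula $\mathcal N b^*_x=b^*_x(\mathcal N+1)$ together with the $b$-bounds \eqref{eq:bounds-b1} to see that $\frac{\mathcal N+1}{\sqrt N}\,e^{-B(\eta)}b((N^3V(N\cdot)*\varphi_{\rm GP}^2)\varphi_{\rm GP})e^{B(\eta)}$ is bounded by $C N^{-1/2}\langle\xi,(\mathcal N+1)^{3/2}\xi\rangle\le C\langle\xi,(\mathcal N+1)\xi\rangle$ on $\mathcal F^{\le N}_{\perp\varphi_{\rm GP}}$ (using $\mathcal N\le N$); hence this whole piece is part of $\mathcal E_N^{(1)}$. (iii) For the first piece, use the scattering identity to replace $(N^3V(N\cdot)*\varphi_{\rm GP}^2-8\pi\mathfrak a\varphi_{\rm GP}^2)\varphi_{\rm GP}$ by $-h_N$ plus a remainder whose $L^2$-norm is $O(N^{-1})$ (or at least $o(1)$ with the right power), so that the $\sqrt N$-prefactor times the remainder is still controlled by $C\langle\xi,(\mathcal N+1)\xi\rangle$; the main term $-\sqrt N\,b(h_N)+\mathrm{h.c.}$, after conjugation, yields the stated $\sqrt N[b(\cosh_\eta(h_N))+b^*(\sinh_\eta(h_N))+\mathrm{h.c.}]$ with the correct sign bookkeeping.

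For the commutator bound, I would follow the template already executed for $\mathcal G_N^{(0)}$ in Section \ref{sec:GN0}: write $e^{\kappa\mathcal N/2}[e^{\kappa\mathcal N/2},\,\cdot\,]$ acting on each surviving term of $\mathcal E_N^{(1)}$, commute $e^{\kappa\mathcal N/2}$ through the $b^\sharp$-operators picking up scalar factors $e^{\pm\kappa/2}-1=O(\kappa)$, and commute it through the error operators $d_{\eta,x},d_{\eta,x}^*$ using Lemma \ref{lemma:error-bounds2}, which supplies exactly the extra factor $\kappa$ and the $e^{\kappa\mathcal N}$-weights needed. The $\mathcal N$-weighted piece requires a little care because $\mathcal N$ itself commutes with $e^{\kappa\mathcal N/2}$, so only the $b$-operator inside it generates the $O(\kappa)$ gain; combined with the $N^{-1/2}$ prefactor and $\mathcal N\le N$ this is harmless. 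Assembling all contributions gives $|\langle\xi,e^{\kappa\mathcal N/2}[e^{\kappa\mathcal N/2},\mathcal E_N^{(1)}]\xi\rangle|\le C\kappa\langle\xi,(\mathcal N+1)e^{\kappa\mathcal N}\xi\rangle$.

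The main obstacle I expect is not any single estimate but the careful bookkeeping of the scattering-equation cancellation in step (iii): one has to track precisely which error terms in $N^3V(N\cdot)*\varphi_{\rm GP}^2-8\pi\mathfrak a\varphi_{\rm GP}^2+\big(N^3V(N\cdot)\omega(N\cdot)*\varphi_{\rm GP}^2\big)$ are genuinely $o(1)$ in $L^2$ after multiplication by $\sqrt N$, so that their contribution is absorbable into $\mathcal E_N^{(1)}$ rather than spoiling the leading term; this is exactly the point where the trapped-geometry estimates from \cite{BSS1} and Appendix \ref{app:GP} (rather than the torus computations of \cite{NR}) are needed, and where a naive power count is too lossy. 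Everything else is a routine adaptation of the arguments used for $\mathcal G_N^{(0)}$ and in \cite{BS,BBCS2,BBCS3,BBCS4,BSS1,BSS2}.
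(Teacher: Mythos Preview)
Your proposal is correct and follows essentially the same route the paper sketches (and defers to \cite[Section 4.2]{BSS1} and \cite[Section 3.2]{NR} for): expand via \eqref{eq:action-bogo-mod}, control the $d_\eta$-remainders with Lemmas \ref{lemma:error-bounds1}--\ref{lemma:error-bounds2}, and reuse the $\cG_N^{(0)}$ commutator template for the second bound. Two small slips worth fixing: the scattering identity gives $(N^3V(N\cdot)*\varphi_{\rm GP}^2-8\pi\mathfrak a\,\varphi_{\rm GP}^2)\varphi_{\rm GP}=+\,h_N+O_{L^2}(N^{-1})$ (not $-h_N$), and in step (ii) the factor $(\cN+1)$ must itself be conjugated by $e^{B(\eta)}$, which is handled by \eqref{eq:bogo-N} rather than the pull-through formula.
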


We remark that the terms $\sqrt{N} \left[b(\cosh_\eta(h_N))+b^*(\sinh_\eta(h_N))+\text{h.c.} \right]$ which we isolated from $\mathcal{G}_N^{(1)}$ will lead to cancellations with some terms from $\mathcal{G}_N^{(3)}$ (see Proposition~\ref{prop:GN3-prop} below) that are crucial in the Gross-Pitaevskii regime. 

The proof of Proposition \ref{prop:GN1-prop} relies, as the proof of Proposition \ref{prop:GN0-prop} in the previous Section, on the approximate action of the generalized Bogoloiubov transform \eqref{eq:action-bogo-mod} and the error estimates of the errors $d_{\eta,x}^*, d_{\eta,x}$ in Lemma \ref{lemma:error-bounds1} resp. their commutators with $e^{\kappa \mathcal{N}}$ in Lemma \ref{lemma:error-bounds2}. Straight forward arguments, as presented in the last Section then yield the desired bounds of Proposition \ref{prop:GN1-prop}. We omit further details and refer to \cite[Section 4.2]{BSS1} for similar arguments to derive the first bound resp. \cite[Section 3.2]{NR} to derive the second bound of Proposition \ref{prop:GN2-prop}.

\subsection{Analysis of $\mathcal{G}_N^{(2)}$}

In this section we study $\mathcal{G}_N^{(2)} = e^{- B( \eta)} \mathcal{L}_N^{(2)} e^{B(\eta)}$ through analyzing the three contributions 
\begin{align}
\mathcal{L}_N^{(2)} - \mathcal{K}- \mathcal{V}_{\text{ext}}  &=  \int dxdy N^3V(N(x-y))|\varphi_{\text{GP}}(y)|^2\left(b_x^*b_x-\frac{1}{N}a_x^*a_x\right) \nonumber\\
&\quad +\int dxdy N^3V(N(x-y))\varphi_{\text{GP}}(x)\varphi_{\text{GP}}(y)\left(b_x^*b_y-\frac{1}{N}a_x^*a_y \right) \nonumber\\
&\quad + \frac{1}{2}\int dxdy N^3V(N(x-y))\varphi_{\text{GP}}(y)\varphi_{\text{GP}}(x)(b_x^*b_y^*+\text{h.c.})
\end{align}
and 
\begin{align}
\mathcal{K} = \int dx  \,\nabla_xa_x^*\nabla_x a_x , \quad \text{and} \quad \mathcal{V}_{\text{ext}} = \int dx\, V_{\text{ext}}(x)a_x^*a_x  \; .  
\end{align}
For this, we introduce the notation 
\begin{align}
e^{-B(\eta)} (\mathcal{L}_N^{(2)}-\mathcal{K}-\mathcal{V}_{\text{ext}})e^{B(\eta)}
=& \int dxdy\, N^3V(N(x-y))\varphi_{\text{GP}}(x)\varphi_{\text{GP}}(y)\eta(x,y)  \notag \\
&+\frac{1}{2}\int dxdy\, N^3V(N(x-y))[\varphi_{\text{GP}}(x)\varphi_{\text{GP}}(y)b_x^*b_y^*+\text{h.c.}] \notag \\
&+\mathcal{E}_N^{(2)} \; . 
\end{align}
and 
\begin{align}
e^{-B(\eta)}\mathcal{K}e^{B(\eta)}
&=\int dxdy\, N^3(\Delta w)(N(x-y))[\varphi_{\text{GP}}(x)\varphi_{\text{GP}}(y)b_x^*b_y^*+\text{h.c.}] \nonumber\\
&\quad +\mathcal{K} +||\nabla_x\eta||^2+\mathcal{E}_N^{(K)},  \label{EN(K)-def} \notag \\
e^{-B(\eta)}\mathcal{V}_{\text{ext}}e^{B(\eta)} &= \mathcal{V}_{\text{ext}}+\mathcal{E}_N^{(V_{\text{ext}})} \; . 
\end{align}

The terms isolated from all the three contributions will cancel with terms from the quartic term (see Proposition \ref{prop:GN3-prop}), while the errors $\mathcal{E}_N^{(2)},\mathcal{E}_N^{(K)} $ and $\mathcal{E}_N^{(V_{\text{ext}})}$ satisfy the desired bounds of Proposition \ref{prop:GN}, as the following Proposition shows.   

% terms ${e^{-B(\eta)}\mathcal{V}_{\text{ext}}e^{B(\eta)}}$, ${e^{-B(\eta)}\mathcal{K}e^{B(\eta)}}$, and  ${e^{-B(\eta)}(\mathcal{L}_N^{(2)}-\mathcal{K}-\mathcal{V}_{\text{ext}})e^{B(\eta)}}$ separately. 

\begin{proposition}
\label{prop:GN2-prop}
Under the same assumptions as in Proposition \ref{prop:GN}, the errors $\mathcal{E}_N^{(V_{\text{ext}})},\mathcal{E}_N^{(2)}$ and $\mathcal{E}_N^{(K)}$ satisfy for all ${\xi\in\mathcal{F}^{\leq N}}_{\varphi_\text{GP}} $ 
\begin{align}
\big\vert \big\langle \xi, \mathcal{E}_N^{(V_{\rm ext})} \xi\big\rangle\big\vert, \; \;  \big\vert \big\langle \xi, \mathcal{E}_N^{(K)} \xi\big\rangle\big\vert, \; \; \big\vert \big\langle \xi, \mathcal{E}_N^{(2)} \xi\big\rangle\big\vert  & \leq  C  \;  \langle \xi, \; (\mathcal{H}_N + \mathcal{N} + 1) \xi \rangle 
\end{align}
and 
\begin{align}
\big\vert \big\langle  \xi,  & e^{\kappa \mathcal{N}/2}\big[ e^{\kappa \mathcal{N}/2}, \mathcal{E}_N^{(V_{\rm ext})} \big]  \xi\big\rangle \big\vert, \, \; \big\vert \big\langle  \xi, e^{\kappa \mathcal{N}/2}\big[ e^{\kappa \mathcal{N}/2}, \mathcal{E}_N^{(K)} \big]  \xi\big\rangle \big\vert, \, \; \big\vert \big\langle  \xi, e^{\kappa \mathcal{N}/2}\big[ e^{\kappa \mathcal{N}/2}, \mathcal{E}_N^{(2)} \big]  \xi\big\rangle \big\vert  \notag \\
& \leq C \kappa \;  \langle \xi, \; (\mathcal{H}_N+ \mathcal{N} + 1) e^{\kappa \mathcal{N}} \xi \rangle  
\end{align}
for some ${C>0}$.
\end{proposition}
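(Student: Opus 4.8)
The plan is to treat separately the five building blocks of $\mathcal{L}_N^{(2)}$: the kinetic term $\mathcal{K}$, the external potential $\mathcal{V}_{\text{ext}}$, and the three pieces of $\mathcal{L}_N^{(2)}-\mathcal{K}-\mathcal{V}_{\text{ext}}$ displayed above. For each one I would conjugate by $e^{B(\eta)}$ and insert the pointwise approximate actions \eqref{def:dxstar}--\eqref{def:dx}, replacing every modified operator by $b(\cosh_{\eta,x})+b^*(\sinh_{\eta,x})+d_{\eta,x}$ (and its adjoint); for $\mathcal{K}$ and $\mathcal{V}_{\text{ext}}$ I would instead start from the Duhamel identity $e^{-B(\eta)}\mathcal{A}e^{B(\eta)}=\mathcal{A}+\int_0^1 e^{-sB(\eta)}[B(\eta),\mathcal{A}]e^{sB(\eta)}\,ds$, compute $[B(\eta),\mathcal{A}]$ with the modified CCR \eqref{eq:CCR-mod}, and then apply \eqref{eq:action-bogo-mod}. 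Expanding and collecting, one isolates the leading contributions that survive as $N\to\infty$ --- for $\mathcal{K}$ the operator $\mathcal{K}$ itself, the constant $\|\nabla_1\eta\|^2$, and the off-diagonal term with kernel $N^3(\Delta w)(N(x-y))\varphi_{\text{GP}}(x)\varphi_{\text{GP}}(y)$; for $\mathcal{V}_{\text{ext}}$ the operator $\mathcal{V}_{\text{ext}}$; and for the cubic and quadratic pieces the $c$-number $\int dxdy\,N^3V(N(x-y))\varphi_{\text{GP}}(x)\varphi_{\text{GP}}(y)\eta(x,y)$ and the pairing term $N^3V(N(x-y))\varphi_{\text{GP}}(x)\varphi_{\text{GP}}(y)b_x^*b_y^*$ --- which are precisely the terms that later cancel against contributions of $\mathcal{G}_N^{(3)}$ (Proposition \ref{prop:GN3-prop}). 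Everything that remains defines $\mathcal{E}_N^{(K)}$, $\mathcal{E}_N^{(V_{\text{ext}})}$, $\mathcal{E}_N^{(2)}$; the only difference to the corresponding analysis in \cite{BSS1} is the additional $Q\otimes Q$ in \eqref{eq:defeta}, which is harmless in view of Lemma \ref{lm:bndseta}.

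For the first (quadratic-form) bounds I would estimate each summand of $\mathcal{E}_N^{(\cdot)}$ using Lemma \ref{lemma:error-bounds1} for the $d$-errors, the elementary estimates $\|b(h)\xi\|\le\|h\|\,\|\mathcal{N}^{1/2}\xi\|$, $\|b^*(h)\xi\|\le\|h\|\,\|(\mathcal{N}+1)^{1/2}\xi\|$ and the bounds \eqref{eq:bounds-b2} for the operators $B_{\sharp_\ell,\sharp_r}(j)$, the estimates $\|\eta\|\le C\ell^{\alpha/2}$, $\|\eta_x\|\le C\ell^{\alpha/2}|\varphi_{\text{GP}}(x)|$, $\|\nabla_i\eta\|\le C\sqrt N$ of Lemma \ref{lm:bndseta}, and the decomposition $\cosh_\eta(x;z)=\delta(x-z)+{\rm r}_\eta(x;z)$ with $\|{\rm r}_\eta\|_2\le C$. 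Whenever an undamped factor $N^3V(N(x-y))$ or $N^2V(N(x-y))$ is left over it gets absorbed, after Cauchy--Schwarz and using $\int dy\,N^3V(N(x-y))|\varphi_{\text{GP}}(y)|^2\le C$, into the interaction part $\mathcal{V}_N=\tfrac12\int dxdy\,N^2V(N(x-y))a_x^*a_y^*a_ya_x$ of $\mathcal{H}_N$; leftover gradients are absorbed into $\mathcal{K}$; and the unbounded factor $V_{\text{ext}}$ entering through $\mathcal{V}_{\text{ext}}$ is controlled by the $V_{\text{ext}}$-weighted bounds on $\varphi_{\text{GP}}$, $\eta$ and $\cosh_\eta,\sinh_\eta$ recalled in Appendix \ref{app:GP}, together with the submultiplicativity $V_{\text{ext}}(x+y)\le C(V_{\text{ext}}(x)+C)(V_{\text{ext}}(y)+C)$ from Assumption \ref{ass:V}(ii), which is exactly what is needed to bound $V_{\text{ext}}$ evaluated on the shifted (convolution) kernels arising in $[B(\eta),\mathcal{V}_{\text{ext}}]$. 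This yields $|\langle\xi,\mathcal{E}_N^{(\cdot)}\xi\rangle|\le C\langle\xi,(\mathcal{H}_N+\mathcal{N}+1)\xi\rangle$, exactly as in the corresponding estimates of \cite{BSS1}.

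For the commutator bounds I would argue just as in the proof of Proposition \ref{prop:GN0-prop}: commute $e^{\kappa\mathcal{N}/2}$ through each summand of $\mathcal{E}_N^{(\cdot)}$; for products of $j\in\{0,1,2\}$ modified operators the pull-through rules \eqref{eq:pull-through} turn this commutator into a scalar prefactor $e^{\pm j\kappa/2}-1=O(\kappa)$, while for the errors $d^\sharp_{\eta,x}$, $b_xd_{\eta,y}$, $d_{\eta,x}d_{\eta,y}$ the single- and double-commutator estimates of Lemma \ref{lemma:error-bounds2} again produce a factor $O(\kappa)$. Re-running the estimates of the previous paragraph with $e^{\kappa\mathcal{N}/2}\xi$ in place of $\xi$ then gives $|\langle\xi,e^{\kappa\mathcal{N}/2}[e^{\kappa\mathcal{N}/2},\mathcal{E}_N^{(\cdot)}]\xi\rangle|\le C\kappa\langle\xi,(\mathcal{H}_N+\mathcal{N}+1)e^{\kappa\mathcal{N}}\xi\rangle$, in analogy with the arguments of \cite{NR}.

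The two points where real care is needed --- and the main source of bookkeeping, rather than of conceptual difficulty --- are: (a) propagating the unbounded one-body potential $V_{\text{ext}}$ through the conjugation of $\mathcal{V}_{\text{ext}}$ and through the $V_{\text{ext}}$-weighted norms of $\eta$, $\cosh_\eta$ and $\sinh_\eta$, where Assumption \ref{ass:V}(ii) and the decay and regularity of $\varphi_{\text{GP}}$ from Appendix \ref{app:GP} are indispensable (this has no counterpart in the translation-invariant setting of \cite{NR}); and (b) carrying the exponential weight $e^{\kappa\mathcal{N}}$ through every step simultaneously, which is exactly what forces the use of the position-space commutator bounds of Lemma \ref{lemma:error-bounds2} in place of the simpler estimates used in \cite{BSS1}. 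Since neither point requires a new mechanism, the full proof is a lengthy but routine refinement of the analysis carried out in detail for $\mathcal{G}_N^{(0)}$ above and in \cite{BSS1, NR}, and we therefore omit the remaining details.
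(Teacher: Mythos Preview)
Your proposal is correct and takes essentially the same approach as the paper, which in fact omits the proof entirely, stating only that it ``follows with similar arguments as outlined in the proof of Proposition \ref{prop:GN0-prop} \ldots and builds on the ideas from \cite{BSS1,BSS2} and their extension to the exponential arguments in \cite{NR}.'' Your outline is more detailed than what the paper provides; the only minor slip is that the isolated pairing terms from $\mathcal{G}_N^{(2)}$ combine with contributions from the \emph{quartic} piece $\mathcal{G}_N^{(4)}$ (not $\mathcal{G}_N^{(3)}$) via the scattering equation in the proof of Proposition \ref{prop:GN}, though both are treated together in Proposition \ref{prop:GN3-prop}.
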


The Proposition follows with similar arguments as outlined in the proof of Proposition \ref{prop:GN0-prop} in Section \ref{sec:GN0} and builds on the ideas from \cite{BSS1,BSS2} and their extension to the exponential arguments in \cite{NR}. As the analysis closely follows \cite{BSS1,BSS2,NR} we omit the details here.

\subsection{Analysis of $\mathcal{G}_N^{(3)}$ and $\mathcal{G}_N^{(4)}$} \label{sec:GN3}

In this section we the analyze $\mathcal{G}_N^{(3)} = e^{-B( \eta)} \mathcal{L}_N^{(3)} e^{B( \eta)}$ and $\mathcal{G}_N^{(4)} = e^{-B( \eta)} \mathcal{L}_N^{(4)} e^{B( \eta)}$.  We write recalling from \eqref{def:hN} the definition $h_N = (N^3V(N\cdot)w(N\cdot)*\varphi_{\text{GP}}^2)\varphi_{\text{GP}}$ the cubic term as 
\begin{equation}
\label{EN3-def}
 e^{-B(\eta)}\mathcal{L}_N^{(3)} e^{B(\eta)} = -\sqrt{N}\left[ b(\cosh_\eta(h_N)+b^*(\sinh_\eta(h_N)+\text{h.c.} \right] + \mathcal{E}_N^{(3)}, 
\end{equation}
and the quartic term as 
\begin{align}
e^{-B(\eta)}\mathcal{L}_N^{(4)} e^{B(\eta)} &= \mathcal{V}_N +\frac{1}{2}\int dxdy\, N^2V(N(x-y))|\eta(x,y)|^2 \nonumber\\
&\quad + \frac{1}{2}\int dxdy\, N^2V(N(x-y))[k(x,y)b_x^*b_y^*+\text{h.c.}] \nonumber\\
&\quad +\mathcal{E}_N^{(4)}, \label{EN4-def}
\end{align} 

The isolated terms from \eqref{EN3-def} and \eqref{EN4-def} will cancel together with the terms isolated from $\mathcal{G}_N^{(0)}, \mathcal{G}_N^{(1)}$ resp. from $\mathcal{G}_N^{(2)}$ in \eqref{prop:GN2-prop} while the errors $\mathcal{E}_N^{(3)}, \mathcal{E}_N^{(4)} $ satisfy the desired bounds as summarized in the following Proposition.

\begin{proposition}\label{prop:GN3-prop}
Under the same assumptions as in Proposition \ref{prop:GN}, the errors $\mathcal{E}_N^{(3)}$ and $\mathcal{E}_N^{(4)}$ satisfy for all ${\xi\in\mathcal{F}^{\leq N}} $ 
\begin{align}
\big\vert \big\langle \xi, \mathcal{E}_N^{(3)} \xi\big\rangle\big\vert, \; \;  \big\vert \big\langle \xi, \mathcal{E}_N^{(4)} \xi\big\rangle\big\vert   & \leq  C  \;  \langle \xi, \; (\mathcal{H}_N+ \mathcal{N} + 1) \xi \rangle  \\
\big\vert \big\langle  \xi, e^{\kappa \mathcal{N}/2} \big[ e^{\kappa \mathcal{N}/2}, \mathcal{E}_N^{(3)} \big]  \xi\big\rangle \big\vert, \, \; \big\vert \big\langle  \xi, e^{\kappa \mathcal{N}/2} \big[ e^{\kappa \mathcal{N}/2}, \mathcal{E}_N^{(4)} \big]  \xi\big\rangle \big\vert  & \leq C \kappa \;  \langle \xi, \; (\mathcal{H}_N + \mathcal{N} + 1) e^{\kappa \mathcal{N}} \xi \rangle  
\end{align}
for some ${C>0}$. 
\end{proposition}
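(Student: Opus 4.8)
The plan is to follow the same scheme used for $\mathcal{G}_N^{(0)}$ in Section \ref{sec:GN0}, applying the approximate action \eqref{eq:action-bogo-mod} of $e^{B(\eta)}$ to every creation/annihilation operator appearing in $\mathcal{L}_N^{(3)}$ and $\mathcal{L}_N^{(4)}$, collecting the leading terms into the explicitly isolated expressions in \eqref{EN3-def}--\eqref{EN4-def}, and estimating all remaining contributions with the pointwise error bounds of Lemma \ref{lemma:error-bounds1} (for the first bound) and Lemma \ref{lemma:error-bounds2} together with the pull-through formulas \eqref{eq:pull-through} (for the commutator bound). First I would treat $\mathcal{G}_N^{(4)}$: writing $\mathcal{L}_N^{(4)} = \frac12\int dxdy\, N^2V(N(x-y))a_x^*a_y^*a_ya_x$ and inserting $e^{-B(\eta)}a_x^\sharp e^{B(\eta)}$ for each of the four factors, one obtains $\mathcal{V}_N$ (the term where all four reduce to the identity part of $\cosh_\eta$) plus the quadratic terms $\frac12\int N^2V(N(x-y))|\eta(x,y)|^2$ and $\frac12\int N^2V(N(x-y))[k(x,y)b_x^*b_y^*+\mathrm{h.c.}]$ with $k = \sinh_\eta - \eta$ (or an equivalent kernel, coming from the two factors contributing a $\sinh_\eta$ and two a $\cosh_\eta$), and a remainder $\mathcal{E}_N^{(4)}$; the bounds on $\mathcal{E}_N^{(4)}$ follow by estimating each monomial using that $N^2V(N\cdot)$ acts like $\mathcal{V}_N^{1/2}$ (absorbing two $a$'s into $\mathcal{V}_N^{1/2}$), bounding the $d_\eta$-errors by Lemma \ref{lemma:error-bounds1}, and using $\mathcal{N}\leq N$ on $\mathcal{F}_{\perp\varphi_{\text{GP}}}^{\leq N}$; Lemma \ref{lm:bndseta} ensures the kernels $\eta,\ k$ and the remainders $\mathrm{r}_{s\eta}$ of $\cosh_\eta$ in \eqref{eq:decomp-cosh} are bounded in $L^2$ once $\ell$ is small.

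Next I would treat $\mathcal{G}_N^{(3)}$. Here $\mathcal{L}_N^{(3)} = \int dxdy\, N^{5/2}V(N(x-y))\varphi_{\text{GP}}(y)(b_x^* a_y^* a_x + \mathrm{h.c.})$, and inserting the Bogoliubov actions one finds that the contraction of $a_y^*$ with $a_x$ (via the $\cosh$-delta and a $\sinh$-term) produces exactly the linear term $-\sqrt N[b(\cosh_\eta(h_N)) + b^*(\sinh_\eta(h_N)) + \mathrm{h.c.}]$ with $h_N$ as in \eqref{def:hN}, using that $N^3V(N\cdot)w(N\cdot) \ast |\varphi_{\text{GP}}|^2$ is, up to controlled errors, the convolution appearing there (this is the standard cancellation with the isolated term of $\mathcal{G}_N^{(1)}$ in Proposition \ref{prop:GN1-prop}). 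All the rest is collected in $\mathcal{E}_N^{(3)}$; its bounds follow by the same monomial-by-monomial estimates, now using $N^{5/2}V(N\cdot) = N^{1/2}\cdot N^2V(N\cdot)$ so that one $a$ is absorbed into $\mathcal{V}_N^{1/2}$ and the $\sqrt N$ is compensated either by the smallness of the $d_\eta$-errors ($O(1/N)$) or by an additional factor of $\mathcal{N}^{1/2}/\sqrt N\le 1$.

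For the commutator bounds in both cases the argument is structurally identical to the one carried out in detail in the proof of Proposition \ref{prop:GN0-prop}: one writes $e^{\kappa\mathcal{N}/2}[e^{\kappa\mathcal{N}/2},\,\cdot\,]$, moves the exponentials through $b^\sharp$ using \eqref{eq:pull-through} (each passage producing a harmless factor $e^{\pm\kappa/2}-1 = O(\kappa)$ or $e^{\pm\kappa/2}=O(1)$), and handles the passages through the $d_\eta$-errors with the double-commutator estimates of Lemma \ref{lemma:error-bounds2}, which supply the needed factor $\kappa$ and keep everything controlled by $(\mathcal{H}_N+\mathcal{N}+1)e^{\kappa\mathcal{N}}$. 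The main obstacle is bookkeeping: $\mathcal{L}_N^{(3)}$ and especially the quartic $\mathcal{L}_N^{(4)}$ generate, after substituting \eqref{eq:action-bogo-mod} into every factor, a large number of monomials in $b^\sharp(\cosh_\eta),\ b^\sharp(\sinh_\eta),\ d_\eta$ and the distribution $a_x$, each of which must be paired correctly against $\mathcal{V}_N^{1/2}$, $\mathcal{K}^{1/2}$ and $(\mathcal{N}+1)^{1/2}$; but since this is the same combinatorial analysis already performed in \cite{BS, BBCS2, BBCS3, BBCS4, BSS1, BSS2} for the first bound and in \cite{NR} for the exponentially-weighted version, I would not reproduce it in full, referring instead to those sources for the routine estimates and only spelling out the points where the weight $e^{\kappa\mathcal{N}}$ interacts with the $d_\eta$-errors.
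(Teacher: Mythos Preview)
Your proposal is correct and matches the paper's own treatment: the paper does not give a detailed proof of Proposition~\ref{prop:GN3-prop} but simply states that it follows with the same ideas as in \cite{BSS1,BSS2} and \cite{NR}, outlined in Section~\ref{sec:GN0}, which is precisely the scheme you describe (expand via \eqref{eq:action-bogo-mod}, isolate the leading terms, and control the remainders with Lemmas~\ref{lemma:error-bounds1}--\ref{lemma:error-bounds2} and the pull-through identities). Your plan, including the references for the monomial bookkeeping, is exactly what the paper has in mind.
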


We again skip the details of the proof of Proposition \ref{prop:GN3-prop} that follows with similar ideas as in \cite{BSS1,BSS2} and \cite{NR}and outlined in Section \ref{sec:GN0}.

\subsection{Proof of Proposition \ref{prop:GN}}
\label{ProofOfGNGoal}

In this section we prove Proposition \ref{prop:GN} from Proposition \ref{prop:GN0-prop} - \ref{prop:GN3-prop}. 

\begin{proof}[Proof of Proposition~\ref{prop:GN}]

Summarizing, we find from Proposition~\ref{prop:GN0-prop}  - \ref{prop:GN3-prop} that 
\begin{align}
\mathcal{G}_N &= C_N+\mathcal{H}_N^{\text{trap}}+\widetilde{\mathcal{E}}_N \notag \\
&\quad+ \int dxdy\,N^3(\Delta w_\ell)(N(x-y)) \varphi_{\text{GP}}(x)\varphi_{\text{GP}}(y)[b_x^*b_y^*+\text{h.c.}] \notag \\
&\quad + \frac{1}{2}\int dxdy\, N^3V(N(x-y))[\varphi_{\text{GP}}(x)\varphi_{\text{GP}}(y)b_x^*b_y^*+\text{h.c.}]\notag \\
&\quad + \frac{1}{2} \int dxdy \, N^2V(N(x-y))[k(x,y)b_x^*b_y^*+\text{h.c.}]   \label{eq:alles-eins}
\end{align}
where the constant $C_N$ is given by 
\begin{align}
C_N &=||\nabla_x\eta||^2-\frac{1}{2}\left<\varphi_{\text{GP}},(N^3V(N\cdot)*|\varphi_{\text{GP}}|^2)\varphi_{\text{GP}}\right> \notag \\ &\quad +N\left<\varphi_{\text{GP}}, \left(-\Delta+V_{\text{ext}}+\frac{1}{2}(N^3V(N\cdot)*|\varphi_{\text{GP}}|^2)\right)\varphi_{\text{GP}} \right>  \notag \\
&\quad +\int dxdy\, N^3V(N(x-y))\varphi_{\text{GP}}(x)\varphi_{\text{GP}}(y)\eta(x,y) \notag \\
&\quad + \frac{1}{2}\int dxdy\, N^2V(N(x-y))|\eta(x,y)|^2   \\
&= N\cE_\text{GP}(\ph_\text{GP}) + O(1)
\end{align}
and the error $\widetilde{\mathcal{E}_N}$ satisfies, for all ${\xi\in\mathcal{F}^{\leq N}}$,
\begin{equation}\label{EN-bound1}
\big|\big<\xi,\widetilde{\mathcal{E}}_N\xi\big>\big| \leq C \langle \xi, ( \mathcal{H}_N + \mathcal{N} + 1) \xi \rangle   
\end{equation}
and 
\begin{equation}\label{EN-bound2}
\big|\big<\xi, e^{\frac\kappa2 \mathcal{N}}  \big[ \widetilde{\mathcal{E}}_N, e^{\frac\kappa2 \mathcal{N}} \big] \xi\big>\big| \leq C \kappa \langle \xi, ( \mathcal{N} + \mathcal{H}_N + 1) e^{\kappa \mathcal{N}} \xi \rangle   \; . 
\end{equation}
The second bound is exactly the second bound \eqref{eq:GNbnd2} of Proposition \ref{prop:GN}. 

To prove the first bound \eqref{eq:GNbnd1} of Proposition \ref{eq:GNbnd1},  we note that the last three terms of the r.h.s.\ of \eqref{eq:alles-eins} sum up to 
\begin{align}
 N &\int dxdy\, \left( -\Delta +\frac{1}{2}N^2V(N(x-y)) \right)f_\ell(N(x-y)) \varphi_{\text{GP}}(x)\varphi_{\text{GP}}(y)b_x^*b_y^*  +\text{h.c.}  \label{eq:A}
\end{align}
and since $f_\ell$ solves the scattering equation, this term is bounded by $C( \cN_{\bot \ph_\text{GP}}+1)$. From \eqref{EN-bound1} and Cauchy Schwarz, we get that there exists $C>0$ such that 
\begin{align}
 \widetilde{\mathcal{E}}_N \geq -\frac{1}{2}\mathcal{H}_N-C(\mathcal{N}+1)
\end{align}
and therefore 
\begin{equation}\label{GN-CN-inequality}
 \mathcal{G}_N - N\cE_\text{GP}(\ph_\text{GP})   \geq \frac{1}{2}\mathcal{H}_N-C(\mathcal{N}+1).
\end{equation}
By \eqref{eq:coerc}, we furthermore know that there exists $C >0$ such that 
\begin{align}
\label{eq:GN1}
\mathcal{G}_N-N\cE_\text{GP}(\ph_\text{GP})  \geq  C^{-1} \,\mathcal{N}-C
\end{align} 
so that a basic interpolation between the last two bounds shows that 
\begin{align}
\label{eq:GN2}
\mathcal{G}_N-N\cE_\text{GP}(\ph_\text{GP}) 
\geq  \varepsilon ( \mathcal{H}_N+\cN) - C 
\end{align}
for some  ${\varepsilon >0}$. This proves the first bound \eqref{eq:GNbnd1} of Proposition \ref{prop:GN}. 
\end{proof}

%%%%%%%%%%%%%%%%%%%%%%%%%%%%%%%%%%%%
%%%%%%%%%%%%%%%%%%%%%%%%%%%%%%%%%%%%
%%%%%%%%%%%%%%%%%%%%%%%%%%%%%%%%%%%%
\appendix 

\section{Properties of the Gross-Pitaevskii Functional} \label{app:GP}
The following result collects useful properties of the GP functional $\cE_\text{GP}$ and its minimizer $\pn$. For its proof, see \cite[Theorems 2.1, 2.5 \& Lemma A.6]{LSY} and \cite[Appendix A]{BSS1}. In view of \eqref{eq:expdecaypn} below, we remind the reader that we assume throughout this paper the assumptions \eqref{ass:V} $(ii)$ on $V_\text{ext}$.  

\begin{theorem} \label{thm:gp}
	Let $\cD_\emph{GP} = \{\psi \in L^2(\bR^3): \psi \in H^1(\bR^3), \psi \in L^2(\bR^3, V_\emph{ext}(x) dx ) \}$. Then, there exists a minimizer $\pn\in \cD_\emph{GP}$ with $ \|\pn\|_2=1$ such that
			\[ \inf_{\psi\in \mathcal{D}_\emph{GP} \ : \|\psi\|_2 =1} \mathcal{E}_\text{GP}(\psi) = \mathcal{E}_\text{GP}(\pn).  \]
	The minimizer $\ph_\emph{GP}$ is unique up to a complex phase, which can be chosen so that $\ph_\emph{GP}$ is strictly positive. Furthermore, the minimizer $\ph_\emph{GP}$ solves the Gross-Pitaevskii equation	
	\begin{equation} \label{eq:gpeq}
	-\Delta\ph_\emph{GP} + V_\emph{ext} \ph_\emph{GP} + 8\pi \mathfrak{a} \vert \ph_\emph{GP} \vert^2 \ph_\emph{GP} = \eps_\emph{GP}\ph_\emph{GP},
	\end{equation}
where $ \eps_\emph{GP}= \mathcal{E}_\text{GP}(\ph_\emph{GP}) + 4\pi \mathfrak{a} \Vert \ph_\emph{GP} \Vert_4^4.$
	
Finally, $\pn\in H^2(\mathbb{R}^3)\cap C^2(\mathbb{R}^3)$ and for every $\nu>0$, there exists a constant $C_\nu$ such that for all $x\in \bR^3$ it holds true that
	\begin{equation} \begin{split} \label{eq:expdecaypn}
	\vert \ph_\emph{GP}(x)\vert , \vert \nabla \ph_\emph{GP} (x) \vert,  \vert \Delta \ph_\emph{GP} (x) \vert  & \leq C_\nu  e^{-\nu \vert x \vert}.
	\end{split}
	\end{equation} 
	
%%%% If needed somewhere in Section 4, please re-insert the Fourier bounds: 
%Finally, there exists a constant $C>0$ such that for all $ p\in \bR^3$ it holds true that
%	\begin{equation} \begin{split} \label{eq:phfourier}
%	| \widehat \ph_\emph{GP} (p)|,  | \widehat{ \ph_\emph{GP}^2} (p)|    \leq  \frac{C}{(1+ | p|)^3}.
%	\end{split}
%	\end{equation} 
\end{theorem}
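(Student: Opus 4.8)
The plan is to prove Theorem~\ref{thm:gp} by the direct method of the calculus of variations together with standard elliptic regularity and an Agmon-type decay estimate, exploiting throughout that $V_\text{ext}$ is confining. \textbf{Existence.} After adding a constant to $V_\text{ext}$ (which only shifts $\mathcal{E}_\text{GP}$ by a constant on the constraint manifold) we may assume $V_\text{ext}\ge0$; since $V\ge0$ forces $\mathfrak{a}\ge0$, all three terms of $\mathcal{E}_\text{GP}$ are then nonnegative and $\mathcal{E}_\text{GP}\ge0$ on $\{\psi\in\cD_\text{GP}:\|\psi\|_2=1\}$. For a minimizing sequence $(\psi_n)$ the energy bound gives uniform control of $\|\nabla\psi_n\|_2$ and $\|V_\text{ext}^{1/2}\psi_n\|_2$, so $(\psi_n)$ is bounded in $\cD_\text{GP}$. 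The key point is that $\cD_\text{GP}\hookrightarrow L^2(\bR^3)$ compactly, because $V_\text{ext}(x)\to\infty$ prevents mass from escaping to infinity; hence, along a subsequence, $\psi_n\rightharpoonup\pn$ weakly in $H^1$ and in $L^2(V_\text{ext}\,dx)$ and $\psi_n\to\pn$ strongly in $L^2$ and, by Gagliardo--Nirenberg interpolation, in $L^4$. Weak lower semicontinuity of $\psi\mapsto\|\nabla\psi\|_2^2+\|V_\text{ext}^{1/2}\psi\|_2^2$, strong $L^4$-convergence of the quartic term, and strong $L^2$-convergence of the constraint then give $\pn\in\cD_\text{GP}$, $\|\pn\|_2=1$ and $\mathcal{E}_\text{GP}(\pn)=\inf\mathcal{E}_\text{GP}$.

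\textbf{Positivity and uniqueness.} Since $|\nabla|\psi||\le|\nabla\psi|$ a.e., replacing a minimizer by $|\psi|$ does not increase the energy, so a nonnegative minimizer exists; once regularity is in hand, the strong maximum principle (or Harnack's inequality) applied to the Gross--Pitaevskii equation upgrades this to strict positivity. Uniqueness up to a phase would follow from convexity in the density: writing $\rho=|\psi|^2$ one has $\mathcal{E}_\text{GP}(\sqrt\rho)=\int\big(|\nabla\sqrt\rho|^2+V_\text{ext}\rho+4\pi\mathfrak{a}\rho^2\big)$, where $\rho\mapsto\int|\nabla\sqrt\rho|^2=\tfrac14\int|\nabla\rho|^2/\rho$ is convex (a perspective-function argument), the potential term is linear and the quartic term strictly convex in $\rho$, while the constraint $\int\rho=1$ is affine; hence the minimizing density, and with it $\pn$ up to a complex phase, is unique.

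\textbf{Euler--Lagrange equation and regularity.} Taking the first variation of $\mathcal{E}_\text{GP}$ under the normalization constraint with Lagrange multiplier $\eps_\text{GP}$ yields equation~\eqref{eq:gpeq} in the weak sense; pairing it with $\pn$ and comparing with $\mathcal{E}_\text{GP}(\pn)$ identifies $\eps_\text{GP}=\mathcal{E}_\text{GP}(\pn)+4\pi\mathfrak{a}\|\pn\|_4^4$. Then I would bootstrap: $\pn\in H^1\subset L^6$ gives $|\pn|^2\pn\in L^2$ and, since $V_\text{ext}\in C^1$ is locally bounded, $V_\text{ext}\pn\in L^2_\text{loc}$, whence $-\Delta\pn\in L^2_\text{loc}$ and $\pn\in H^2_\text{loc}$; iterating with $V_\text{ext}\in C^1$ and Sobolev embeddings gives $\pn\in C^2(\bR^3)$, and the decay estimate below promotes $H^2_\text{loc}$ to $H^2(\bR^3)$.

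\textbf{Exponential decay, and the main obstacle.} This is the step where I expect the real work to lie. One first establishes $\pn\in L^\infty(\bR^3)$ and $\pn(x)\to0$ as $|x|\to\infty$ (a subsolution/Brezis--Kato argument using $-\Delta\pn\le\eps_\text{GP}\pn$ after shifting $V_\text{ext}\ge0$). Fixing $\nu>0$, since $V_\text{ext}\to\infty$ there is $R_\nu$ with $V_\text{ext}-\eps_\text{GP}\ge\nu^2$ on $\{|x|\ge R_\nu\}$, and as $\pn\ge0$ solves $-\Delta\pn+\big(V_\text{ext}-\eps_\text{GP}+8\pi\mathfrak{a}\pn^2\big)\pn=0$ with the bracket $\ge\nu^2>0$ there, the function $Ce^{-\nu|x|}$ is an exterior supersolution; choosing $C$ large on $\{|x|=R_\nu\}$ and applying the maximum principle on $\{|x|\ge R_\nu\}$ gives $\pn(x)\le C_\nu e^{-\nu|x|}$. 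For the gradient and Laplacian I would use interior Schauder estimates on unit balls $B_1(x)$, which bound $\|\nabla\pn\|_{L^\infty(B_{1/2}(x))}$ and $\|\Delta\pn\|_{L^\infty(B_{1/2}(x))}$ by $\|\pn\|_{L^\infty(B_1(x))}$ and the local size of $V_\text{ext}$ and $\pn^2$; since $\nabla V_\text{ext}$, hence $V_\text{ext}$, grows at most exponentially by Assumption~\ref{ass:V}(ii) while $\pn$ has just been shown to decay faster than any fixed exponential, these products still decay faster than any fixed exponential, yielding \eqref{eq:expdecaypn} after relabelling $\nu$. The delicate points are the a priori boundedness and decay of $\pn$ needed to launch the comparison argument, making the exterior maximum principle rigorous in the presence of the (sign-favorable) nonlinear term $\pn^2\ge0$, and transferring the scalar decay to the derivatives uniformly in $x$.
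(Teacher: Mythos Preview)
The paper does not provide its own proof of Theorem~\ref{thm:gp}; immediately before the statement it refers the reader to \cite[Theorems 2.1, 2.5 \& Lemma A.6]{LSY} and \cite[Appendix A]{BSS1}. Your proposal is essentially the argument carried out in those references: the direct method with compactness of $\cD_\text{GP}\hookrightarrow L^2$ coming from the confinement $V_\text{ext}\to\infty$, uniqueness via strict convexity of $\rho\mapsto\mathcal{E}_\text{GP}(\sqrt\rho)$ (this is exactly the argument in \cite{LSY}), the Euler--Lagrange equation with elliptic bootstrap, and an Agmon-type comparison for the exponential decay. The decay of the derivatives, using interior elliptic estimates together with the at-most-exponential growth of $V_\text{ext}$ from Assumption~\ref{ass:V}(ii), is precisely the content of \cite[Appendix A]{BSS1}, which you have reproduced accurately. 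Your sketch is correct and aligned with the cited proofs.
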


%%%%%%%%%%%%%%%%%%%%%%%%%%%%%%%%%%%%
%%%%%%%%%%%%%%%%%%%%%%%%%%%%%%%%%%%%
%%%%%%%%%%%%%%%%%%%%%%%%%%%%%%%%%%%%

\medskip
\noindent\textbf{Acknowledgements.} C. B. acknowledges support by the Deutsche Forschungsgemeinschaft (DFG, German Research Foundation) under Germany’s Excellence Strategy – GZ 2047/1, Projekt-ID 390685813. SR is supported by the European Research Council via the ERC CoG RAMBAS - Project - Nr. 10104424.

%\vspace{0.8cm}
%\noindent {\Large \textbf{Data Availability}}
%\vspace{0.5cm}

%\noindent Data sharing is not applicable to this article as no new data were created or analyzed in this study.

%\vspace{0.5cm}
%\noindent {\Large \textbf{Conflict of Interests}}
%\vspace{0.5cm}

%\noindent The authors have no conflict of interest to declare that are relevant to the content of this article.

%%%%%%%% Bibliography
%%%%%%%%%%%%%%%%%%%%%%%%%%%%%%%%%%%%
%%%%%%%%%%%%%%%%%%%%%%%%%%%%%%%%%%%%
%%%%%%%%%%%%%%%%%%%%%%%%%%%%%%%%%%%%

\end{document}